\documentclass[a4paper,10pt]{article}
\pdfoutput=1 

\usepackage{jheppub} 


\usepackage{amsmath, amsthm, amssymb}
\DeclareFontFamily{U}{mathx}{}
\DeclareFontShape{U}{mathx}{m}{n}{<-> mathx10}{}
\DeclareSymbolFont{mathx}{U}{mathx}{m}{n}
\DeclareMathAccent{\widehat}{0}{mathx}{"70}
\DeclareMathAccent{\widecheck}{0}{mathx}{"71}
\usepackage{dsfont}
\usepackage{wasysym, verbatim}
\usepackage{color}
\usepackage{tensor}
\usepackage{mathtools}
\usepackage{enumitem}
\usepackage{physics}
\usepackage{subcaption}
\usepackage{graphicx}
\usepackage[utf8]{inputenc}
\usepackage{booktabs,multirow, quotes,tablefootnote}
\usepackage{nicematrix}
\usepackage{tikz}
\NiceMatrixOptions
{
	custom-line =
	{
		letter = : ,
		command = dashedline ,
		ccommand = cdashedline ,
		tikz = dashed
	},
	custom-line =
	{
		letter = . ,
		command = dotrule ,
		ccommand = cdotrule ,
		tikz = dotted
	}
}

\usepackage{cancel}
\usepackage[normalem]{ulem}

\usepackage{braket}
\usepackage{tcolorbox}
\usepackage{float}

\usepackage{notoccite}

\usepackage{cleveref}
\creflabelformat{equation}{(#2\textup{#1}#3)}
\crefname{equation}{Eq.}{Eqns.}
\crefname{figure}{Fig.}{Figs.}
\crefname{section}{Section}{Sections}
\crefname{proposition}{Proposition}{Propositions}
\crefname{table}{Table}{Tables}
\crefname{chapter}{Chapter}{Chapters}
\crefname{appendix}{Appendix}{Appendices}
\crefname{subsection}{Section}{Sections}
\crefname{remark}{Remark}{Remarks}
\crefname{lemma}{Lemma}{Lemmas}
\crefname{theorem}{Theorem}{Theorems}
\crefname{definition}{Definition}{Definitions}
\crefname{footnote}{footnote}{footnotes}

\usepackage{cases}

\usepackage{xfrac}

\usepackage{nicefrac}

\usepackage{xcolor}
\usepackage{hyperref}

\numberwithin{equation}{section}

\theoremstyle{theorem}
\newtheorem*{conjecture*}{Conjecture}

\newtheorem{theorem}{Theorem}\numberwithin{theorem}{section}
\newtheorem{definition}{Definition}\numberwithin{definition}{section}
\newtheorem{lemma}[theorem]{Lemma}\numberwithin{theorem}{section}
\newtheorem{proposition}[theorem]{Proposition}\numberwithin{theorem}{section}
\theoremstyle{remark}
\newtheorem{remark}{Remark}\numberwithin{remark}{section}

\usepackage{adjustbox}

\def\eps{\epsilon}
\newcommand{\beq}{\begin{equation}} 
\newcommand{\eeq}{\end{equation}}

\def\bR {\mathbb{R}} 
\def\bC {\mathbb{C}}

\def\calN {{\cal N}} 
\def\cN {{\cal N}}

\def\ge{\geqslant}
\def\le{\leqslant}
\def\geq{\geqslant}
\def\leq{\leqslant}
\def\<{\langle}
\def\>{\rangle}

\def\s{\sigma}

\newcommand{\myinclude}[2][]{\raisebox{0.6ex}{\raisebox{-0.5\height}{\includegraphics[#1]{#2}}}}



\newcommand{\RG}{\mathcal{R}}

\newcommand{\red}{\color{red}}

\usepackage{enumitem}

\newcommand\x{\times}

\renewcommand{\ge}{\geqslant}
\renewcommand{\le}{\leqslant}


\renewcommand{\x}{\mathbf{x}}

\renewcommand{\o}{\mathbf{o}}
\renewcommand{\u}{\mathbf{u}}
\renewcommand{\d}{\mathbf{d}}
\renewcommand{\r}{\mathbf{r}}
\renewcommand{\s}{\mathbf{s}}
\newcommand{\AHT}{A_*}
\newcommand{\hatAHT}{\widehat{A}_*}
\renewcommand{\kappa}{\varkappa}
\renewcommand{\hat}{\widehat}

\newcommand{\vvvert}{\lvert\hspace{-1 pt}\lvert\hspace{-1 pt}\lvert}



\makeatletter
\def\@fpheader{\ }
\makeatother

\title{Tensor Renormalization Group Meets Computer Assistance}


\author[a]{Nikolay Ebel,}
\author[b]{Tom Kennedy}
\author[a]{and Slava Rychkov}

\affiliation[a]{Institut des Hautes \'Etudes Scientifiques, 91440 Bures-sur-Yvette, France}
\affiliation[b]{Department of Mathematics, University of Arizona,
	Tucson, AZ 85721, USA}

\emailAdd{ebelnikola@gmail.com}
\emailAdd{tgk@arizona.edu}
\emailAdd{slava@ihes.fr}

\usepackage{marginnote}

\newcounter{codeeq}
\newcommand*\codetag{%
  \stepcounter{codeeq}%
  \marginnote{\normalfont $\langle$code~\thecodeeq$\rangle$}%
}

\abstract{Tensor renormalization group, originally devised as a numerical technique, is emerging as a rigorous analytical framework for studying lattice models in statistical physics. Here we introduce a new renormalization map - the 2x1 map - which coarse-grains the lattice anisotropically by a factor of two in one direction followed by a 90-degree rotation. We develop a novel graphical language that translates the action of the 2x1 map into a system of inequalities on tensor components, with rigorous estimates in the Hilbert-Schmidt norm. We define a finite-dimensional “bounding box” called the hat-tensor, and a master function governing its RG flow. Iterating this function numerically, we establish convergence to the high-temperature fixed point for tensors lying within a quantifiable neighborhood. Our main theorem shows that tensors with deviations bounded by 0.02 in 63 orthogonal sectors flow to the fixed point. We also apply the method to specific models - the 2D Ising and XY models - obtaining explicit bounds on their high-temperature phase. This work brings the Tensor RG program closer towards a rigorous, computer-assisted construction of critical fixed points.
}

\begin{document}

\maketitle

\flushbottom

\section{Introduction}

Tensor renormalization group, born as a numerical tool \cite{Levin:2006jai,Evenbly-review}, is becoming a rigorous method to study statistical physics of lattice models \cite{paper1, paper2,Ebel:2024jbz}. The stability of the high-T phase was proved for 2D lattice models in Ref.~\cite{paper1} and for the 3D case in Ref.~\cite{Ebel:2024jbz}. Ref.~\cite{paper2} studied the low-T phase of $\mathbb{Z}_2$ symmetric models in 2D. One grand goal of this project, which appears within reach, is a rigorous computer-assisted construction of the nontrivial fixed points describing the critical point of the 2D Ising and other universality classes. This paper will bring us a few steps closer to that goal, as follows:

\begin{enumerate}
  \item
        We will introduce a novel tensor RG map which we call the 2x1 map (see Fig.~\ref{fig:bigfig}), which shrinks the lattice by a factor of 2 in only the horizontal direction, the vertical length being unchanged, after which the lattice is rotated by 90 degrees.\footnote{The idea that 90 degree rotation is useful for RG maps first appeared in our previous work~\cite{Ebel:2024nof,paper-DSO}.} The 2x1 map is remarkably simple and as such interesting both numerically and for rigorous studies. {In particular, the map is truncation-free and does not rely on singular value decomposition (SVD) as many numerical maps do. The latter is important as SVD would make it hard to obtain estimates for the tensor elements of the intermediate tensors and to establish analyticity of the map.}
  \item
    We will use the 2x1 map for a new rigorous look at the high-T phase, the problem previously studied in \cite{paper1} using a more complicated map. Our study here will rely on a new efficient graphical language for describing the structure of rigorous tensor RG maps involving disentangling, like the 2x1 map.

  \item Our graphical language allows us to seamlessly translate the definition of the map into a set of bounds on how various parts of the tensor change under the RG map. We can thus easily check that the 2x1 map is a contraction (in a weighted Hilbert-Schmidt norm) in a sufficiently small neighborhood of the high-T fixed point, as is the map from \cite{paper1}.

  \item
  Most importantly, the new graphical language is also ideal for implementing the bounds on a computer. This leads to concrete results about the size of the basin of attraction of the high-T fixed point. The main idea is to use a ``bounding box'' (called hat-tensor later on). The box has a large but finite number of dimensions corresponding to various "sectors" of the tensor. We define a \emph{master function} $\mathfrak{M}$ which computes how the bounding box varies from one RG step to the next. The master function is iterated numerically until the bounding box starts shrinking. It will then keep shrinking indefinitely, establishing convergence to the high-T fixed point. Using interval arithmetic, such computer-assisted results are fully rigorous.

  \item
  We are thus able to show that \emph{any initial tensor deviating from the high-T fixed point by at most 0.02 in the Hilbert-Schmidt norm in every one of $N=63$ orthogonal sectors that we will define, converges to the high-T fixed point under RG} (Theorem \ref{th:stability}). While 0.02 may seem like a small number, the full Hilbert-Schmidt norm of the initial deviation can be as large as $ 0.02\sqrt{N}\approx 0.16$.

  \item We also study a couple of specific lattice models (the nearest-neighbor Ising model and the XY model), translating them into tensor network representation. For those models we also obtain rigorous bounds on the size of the high-T phase. See Table \ref{tab:results} for a summary.

  \item {Tensor RG yields an expansion for the free energy density, alternative to the cluster expansion (Proposition~\ref{prop:free}). This is used to show that the free energy density is analytic in the basin of attraction of the high-T fixed point. This can be also used to compute rigorous bounds on the free energy density for specific models, as we illustrate for the Ising model, see \cref{fig:isingfbound}.}
\end{enumerate}

\begin{figure}[H]
	\centering
	\includegraphics[width=\textwidth]{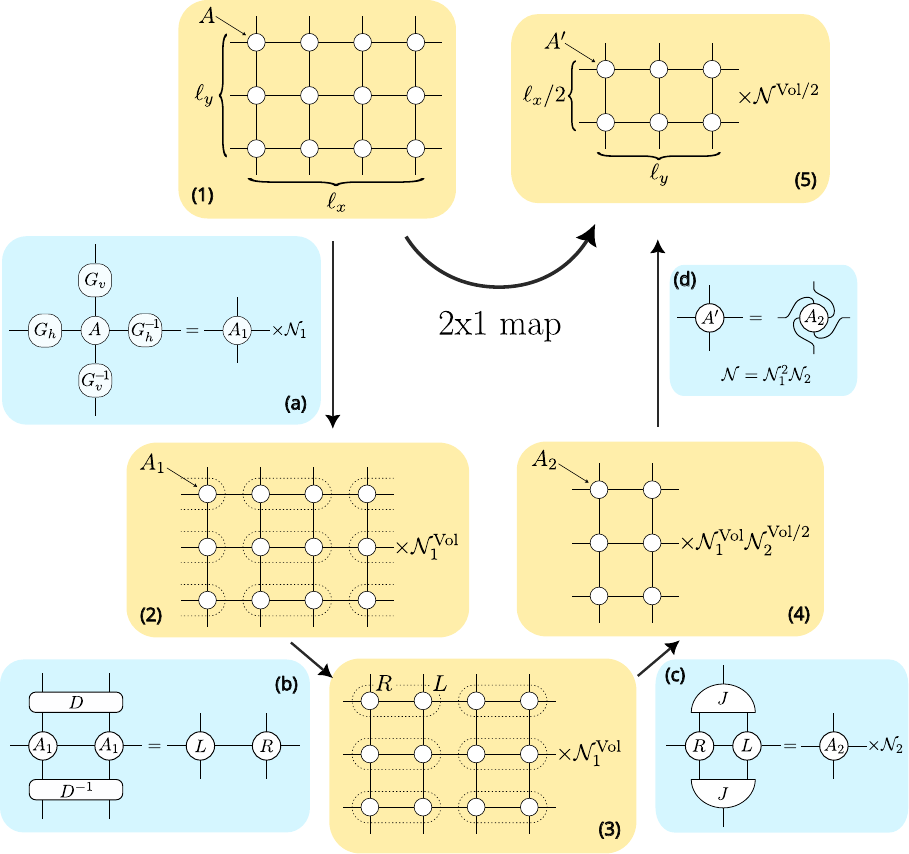}
	\caption{The 2x1 map (see Sec.~\ref{sec:2x1overview} for details) transforms an $\ell_x\times\ell_y$ tensor network made of tensor $A$ into an $\ell_y\times\ell_x/2$ tensor network made of tensor $\cN A'$. It is decomposed into 4 steps (blue panels): (a) gauge transformation, (b) disentangling and splitting;  (c) reconnection; (d) rotation. The tensor network partition function (assuming periodic boundary conditions) is kept invariant, but is represented differently after each step, as illustrated in yellow panels (1)-(5). $\mathrm{Vol}=\ell_x\ell_y$ is the volume of the original lattice.}
	\label{fig:bigfig}
\end{figure}

\begin{table}[h]
	\centering
	\begin{tabular}{ccc}
		\hline
		& 2x1 map result & $\beta_c$                               \\
		\hline
		\rule{0pt}{2.5ex}
		Ising                                           & $\beta \leq 0.12$          & $\frac{1}{2} \log(1 + \sqrt{2}) \approx0.44$  \\
		XY                                              & $\beta \leq 0.18995$          & $\approx 1.12$ \\
		General $A$ ($A_{0000}=1$) &
			$\|A_{abcd}\| \leq 0.02$
		&        ---                                    \\
		\hline
	\end{tabular}

	\caption{High-T phase regions proven here using the 2x1 map for the Ising model, the XY model, and general tensor networks built of a normalized tensor $A$ ($A_{0000}=1$). For Ising and XY, our bounds may be compared with known $\beta_c$ (third column). For general tensors, see \Cref{sec:sectors} for the definition of sectors $abcd$.}

\label{tab:results}
\end{table}

 Our paper builds on the intuition from \cite{paper1,paper2}, but it is self-contained. Section \ref{sec:construction} is the main technical section which defines the 2x1 map and the master function.
  After a brief reminder of tensor network notation in \Cref{sec:tensors}, we give an overview of the basic ingredients of the 2x1 map in \Cref{sec:2x1overview}. Then in \Cref{sec:lin} we describe the map at the linearized level, i.e.~for small perturbations around the high-T fixed point tensor. In particular in \Cref{sec:sectors} we define "sectors," which are restrictions of tensor to various combinations of subspaces on horizontal and vertical legs. We subdivide tensors into sectors and follow the norm of each sector separately. \Cref{sec:lin} culminates in an explanation why the map is a contraction at the linearized level, in an appropriate weighted generalization of the Hilbert-Schmidt norm, and for sufficiently large values of certain parameters $w_\x,w_\o>0$ called the reweighting parameters. We then proceed with the nonlinear analysis. In \Cref{sec:hat-tensors} we define the hat-tensors, which are central for our construction. They play the role of bounding boxes for tensors split into sectors. Hat-tensors reduce the control of the 2x1 map to a finite-dimensional computation of the master function. This basic strategy is explained in \Cref{sec:control-of-the-2x1-map-via-hat-tensors}. We then proceed to define the 2x1 map at the nonlinear level (Sections \ref{sec:gauge}-\ref{sec:reconnection-and-rotation}). This is put together in Section \ref{sec:masterf} where the master function is defined. This is a function from (a subset of) $\mathbb{R}_{\ge0}^{63}$ to itself. In \Cref{sec:leading} we come back to the question of how large the reweighting parameters $w_\x,w_\o$ must be, and derive a necessary condition. We conclude the discussion of the 2x1 map and of its master function in \Cref{sec:symmetries}, describing how they behave with respect to the symmetries that the lattice model might have.

In Section \ref{sec:results} we apply the 2x1 map to prove bounds on the high-T phase of the various models in Table \ref{tab:results}. The proofs are computer assisted. We provide a short computer code which evaluates the master function. One runs this code for a few iterations, until a condition is verified which implies convergence of subsequent iterations. A short description of the code is given in App.~\ref{sec:accompanying-code}, and the full documentation is on the GitHub repository \cite{our-code}. The code follows the paper closely. The equations referenced from the code are given margin tags of the form $\<\text{code }\#\>$; see e.g.~Eq.~\eqref{eq:tensornot}.

To conclude, in Section \ref{sec:conclusions} we discuss various future improvements of the method, which will further enlarge its domain of applicability. There are a few trivial improvements, like optimization of the  reweighting parameters $w_\x,w_\o$. We also describe a strategy for a more dramatic improvement, which will lead to accurate and systematically improvable results for the free energy (in this paper we are content with providing some crude estimates). Finally, we outline how we plan to use a similar improvement of the method to recover the critical point.

\section{RG map and the master function}\label{sec:construction}

\subsection{Tensors}
\label{sec:tensors}
The readers familiar with the tensor notation and terminology (see e.g.~\cite{paper2}, Sec.~2) may skip this short section.

In this paper $V$ will be an infinite-dimensional separable complex Hilbert space with an orthonormal basis $\{e_k\}_{k=0}^\infty$.
An $n$-leg tensor $T$ is a complex multilinear map on $V \cross V  \cross \cdots \cross V$. 
Equivalently, given an orthonormal basis $\{e_k\}_{k=0}^\infty$, we can think of $T$ as a table of complex numbers $T_{i_1,i_2\cdots,i_n} = T(e_{i_1}, e_{i_2}, \dots, e_{i_n})$. We will sometimes split the legs of the tensor into two groups, say the first $k$ and the last $n-k$, and think of it as a linear operator from $V^{\otimes k}$ to $V^{\otimes n-k}$ acting by $e_{i_1}\otimes\ldots\otimes e_{i_k} \mapsto T_{i_1,i_2\cdots,i_n} e_{i_{k+1}}\otimes\ldots\otimes e_{i_n}$.

The Hilbert-Schmidt (HS) norm of a tensor is
$
  \|T\| = ( \sum_{i_1,i_2,\cdots,i_n} |T_{i_1,i_2,\cdots,i_n}|^2 )^{1/2}
$.
Throughout this paper we will encounter three types of tensors:
\begin{enumerate}
  \item
        Hilbert-Schmidt tensors, i.e.~tensors having a finite HS norm;
  \item
        Tensors given by exponentiating a HS tensor with $2n$ legs, viewed as a linear operator on $V^{\otimes n}$. Those will be a sum of an identity operator and of a HS tensor;
  \item
        Isometric tensors which just permute basis elements.
\end{enumerate}

We will study models on a square lattice, and our renormalization group map will act on a HS tensor $A$ with four legs.\footnote{The letters $A$ and $b$ (see below) will always denote 4-leg tensors. Generic tensors having an arbitrary number of legs will be denoted by $T$.} In the usual graphical notation, we draw it as a blob with four legs representing the tensor indices. The first, second, third, and fourth index will always corresponds to legs pointing left, up, right, and down, respectively, e.g.:

\begin{equation}\label{eq:tensornot}
  \myinclude{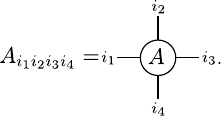} \codetag
\end{equation}
The graphical notation is convenient to specify tensor contractions, drawn as tensors sharing one or more legs. Such contracted legs are referred to as ``bonds''. We will also refer to left and right legs as ``horizontal'' and to up and down legs as ``vertical''.

{A comment is in order about the real vs complex. We work with complex tensors for generality, and also because our RG map will be analytic. However, some of our building blocks---various isometries---will be naturally real. Once we specify the Hilbert space basis $e_k$, the vectors $\sum c_k e_k$ with real $c_k$ form the real subspace of $V$. This defines the real structure on $V$.}

\subsection{The 2x1 RG map overview}
\label{sec:2x1overview}

The tensor index 0 will play a special role. We will say that a four-leg tensor is normalized if $A_{0000}=1$. The 2x1 map will act on a four-leg normalized HS tensor $A$. It is the RG map in the sense that it preserves the tensor network partition function. The partition function $Z(A,\ell_x\times \ell_y)$ is computed by placing a copy of $A$ at each vertex of the square $\ell_x\times \ell_y$ lattice and contracting legs along the edges, with periodic boundary conditions (see \cite{paper2}, Eq.~(2.7)). The result is a complex number, finite if $\ell_x,\ell_y\ge 2$ due to $A$ being Hilbert-Schmidt  (\cite{paper2}, Prop.~2.3). The 2x1 map will transform $A$ to another tensor $\cN A'$, where $\cN\in \bC$ and $A'$ is normalized, so that the partition function of $A$ on the $\ell_x\times \ell_y$ lattice equals that of $\cN A'$ on the $\ell_y\times (\ell_x/2)$ lattice:
\beq
\boxed{Z(A,\ell_x\times \ell_y)=Z(\cN A',\ell_y\times \ell_x/2)}\,.\label{eq:Zpreserved}
\eeq

(We stress that $\cN A'$ is independent of $\ell_x,\ell_y$.) That $Z$ remains the same while the total number of tensors in the network gets reduced is the defining property of any RG map. The unusual feature of the 2x1 map is how the dimensions of the lattice are transformed: $(\ell_x,\ell_y)\mapsto (\ell_y,\ell_x/2)$. This happens because the 2x1 map will first shrink the lattice by a factor of 2 in the horizontal direction, and then rotate by 90 degrees. Applying the 2x1 map twice, the lattice dimensions are both reduced by a factor of 2. One may thus say that the 2x1 map has ``effective scale factor $\sqrt{2}$''.

Let $A_*$ be the four-leg tensor with only one nonzero component $(A_*)_{0000}=1$. The infinite-temperature limit of 2D lattice models such as the Ising model corresponds to the tensor network built of $A_*$ \cite{paper1,paper2}. Furthermore, $A_*$ will be a fixed point of the 2x1 RG map. For these reasons we refer to $A_*$ as the high-T fixed point tensor.

We will study the 2x1 map acting on $A$
in a neighborhood of $A_*$.  Since $A$ is normalized, we can write
$A = A_* + b$, where $b_{0000} = 0$. Writing also $A'=A_*+b'$, the 2x1 map is thus fully specified by a function $\cN(b)$ ("$\cN$-factor") and a map $b'(b)$. These will be defined in a neighborhood of 0 of the Hilbert space of complex tensors equipped with the HS norm, and will depend on $b$ analytically.\footnote{See \cite{paper2}, App.~A for a few basic facts about analytic functions on Banach spaces.} Although physical models often correspond to real tensors, considering complex $b$ is convenient, as it allows to take advantage of this analyticity.

The $\cN$-factor trivially factors out of the tensor network partition function. The map $A\mapsto A'$ is called the normalized RG map, and its iterates generate the normalized RG flow. One of our main results (``stability of the high-T fixed point'', Theorem \ref{th:stability}  below) will say that the normalized RG flow converges to the high-T fixed point for any small initial $b$. This result is the key to showing that the infinite volume free energy is analytic in the same neighborhood (Proposition \ref{prop:free1}).

The general structure of the 2x1 map is similar but simpler than in \cite{paper1, paper2}. Like the maps in those works, the 2x1 map naturally splits in several steps. The first step (gauge transformation, Fig.~\ref{fig:bigfig}(a)) multiplies $A$ by $G_v$ and $G_v^{-1}$ on the opposite vertical legs, and by $G_h$ and $G_h^{-1}$ on the opposite horizontal legs. Here $G_v$ and $G_h$ are bounded linear operators with bounded inverses. The partition function is preserved for any $G_v$, $G_h$ (\cite{paper2}, Sec.2.4.3). We will use $G_v$ and $G_h$ depending on $b$ in a way specified in Sec.~\ref{sec:gauge}. This step produces tensor $A_{g} = \cN_1 A_1$ with $A_1$ normalized.

The second step (disentangling and splitting, Fig.~\ref{fig:bigfig}(b), Sec.~\ref{sec:disent}) groups $A_1$ tensors into horizontally contracted pairs, multiplies the upper vertical legs of the pair by a "disentangler" map $D:V\otimes V\to V\otimes V$, and multiplies the lower vertical legs by its inverse $D^{-1}$. The resulting 6-leg tensor is then split into a contraction of two 4-leg HS tensors $L$ and $R$. The partition function is preserved (\cite{paper2}, Sec.2.4.4). All these tensors will be expressed as power series in $b_1=A_1-A_*$. 

The third step (reconnection, Fig.~\ref{fig:bigfig}(c), Sec.~\ref{sec:reconnection-and-rotation}) contracts tensors $R$ and $L$ horizontally, in the opposite order from how these tensors were obtained during disentangling. The upper vertical legs are then contracted with an appropriately chosen real isometry\footnote{See Definition \ref{def:isometry}.} $J:V\otimes V\to V$, and the lower vertical legs with its transpose. The resulting tensor is denoted $\cN_2 A_2$ with $A_2$ normalized. Since $J^T J=\mathds{1}_{V\otimes V}$, the partition function is preserved. This is the step which halves the horizontal lattice size .

The final step (rotation, Fig.~\ref{fig:bigfig}(d), Sec.~\ref{sec:reconnection-and-rotation}) defines the tensor $A'$ as the counterclockwise rotation of $A_2$ by 90 degrees. This completes the overview of the 2x1 map. The $\cN$-factor is given by $\cN=\cN_1^2\cN_2$.

\subsection{Linearized analysis}\label{sec:lin}

We start in this section by considering the linearized 2x1 map (i.e.~to first order in $b$). The map was designed so that this linearization is a contraction (with respect to a norm defined below). We would like that the reader be convinced about that, before confronting the somewhat technical discussion of the full nonlinear map in the subsequent sections.

\subsubsection{Sectors}\label{sec:sectors}

We will decompose the tensor leg Hilbert spaces into a direct sum of orthogonal subspaces called "sectors". This is needed both at the linear and, later, at the nonlinear level. On the vertical legs, we decompose $V=\o\oplus\x$ where $\o$ is the one-dimensional sector spanned by $e_0$, and $\x$ is the infinite-dimensional orthogonal complement of $\o$, spanned by $\{e_k\}_{k=1}^\infty$. On the horizontal legs we decompose $V=\o\oplus\d \oplus \u \oplus \r$ where $\o$ is as before, and $\d$, $\u$, $\r$ are three infinite-dimensional sectors. (Their names stand for down, up, rest.) Why we use two vertical sectors and four horizontal sectors will become clear in the course of our analysis, see in particular \cref{sec:lin-reconnection-and-rotation}. We will sometimes use $\x=\d \oplus \u \oplus \r$ on the horizontal legs.

{The exact numbering of basis vectors which span $\d,\u,\r$ will not play any role. For definiteness, let them be spanned by $e_k$, $k\ge 1$, with $k\!\mod 3 =1,2,0$, respectively.}

Furthermore, we will decompose the tensors $A$ on which the 2x1 map acts into a collection of tensors obtained by restricting tensor legs to the above-defined sectors.
Namely, for a four-leg tensor $A$, given $a,c \in \{\o,\x\}$ and $b,d \in \{\o, \d, \u, \r\}$, we define the restricted tensor $A_{abcd}$ as the tensor whose components are equal to those of $A$ when all four indices are in the respective sectors and equal to zero otherwise. These restricted tensors will also be referred to as "sectors" of the original tensor. So there are $4 \times 2 \times 4 \times 2 =64$ sectors for $A$. Obviously, the sum of $A_{abcd}$ over all $64$ possibilities for $abcd$ equals $A$.  It will be sometimes convenient to set $b,d=\x$ on the horizontal legs. This will correspond to the sum of sectors over $b,d \in \{\d, \u, \r\}$. In graphical notation $A_{abcd}$ will be denoted by putting sector labels \emph{next} to the corresponding legs (unlike indices of tensor elements in \eqref{eq:tensornot}, put \emph{at the extremities} of the legs):
\begin{equation}\label{eq:tensorrestr}
	\myinclude{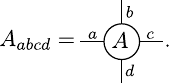}
\end{equation}
This definition extends to general tensors.
If $T$ is an $n$-leg tensor and $a_1, a_2,\cdots,a_n$ are sectors for each
of the legs, then the restricted tensor $T_{a_1, a_2, \cdots, a_n}$ is the tensor whose components are equal to those of $T$ when all of the indices are in the respective sectors and equal to zero otherwise.

\subsubsection{Gauge transformation}\label{sec:lin-gauge-transformation}
We now proceed with the linearized analysis of the 2x1 map. We write $A=A_*+b$, $A'=A_*+b'$. Our purpose is to compute $b'$ to first order in $b$.

The first step of the 2x1 map (Fig.~\ref{fig:bigfig}(a)) is a gauge transformation. Since our final goal is to reduce $b$, let us try to somewhat reduce it via the gauge transformation before doing something more sophisticated. We choose the gauge transformation matrices $G_h$ and $G_v$ so that\footnote{In this section, $\approx$ means modulo $b^2$ terms.}
\beq
\label{eq:Glin}
G_h\approx \mathds{1}_V+g_h,\qquad G_v\approx \mathds{1}_V+g_v\,,
\eeq
with $g_h$, $g_v$ linear in $b$. We apply $G_h,G_v$ and their inverses to the horizontal, vertical legs of $A$, respectively, as shown in Fig.~\ref{fig:bigfig}(a). We denote the resulting tensor by $A_{g}$. It is easy to see that
\beq
\label{eq:lingauge}
\myinclude{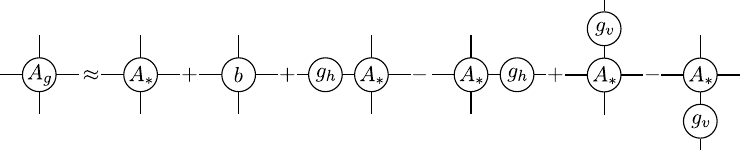}.
\eeq
We define the two-leg tensors $g_h$ and $g_v$  by the following expressions:
\begin{equation}
	\label{eq:ghgv}
	g_h = b_{\o 0 \x 0} - b_{\x 0 \o 0},  \quad  g_v = b_{0 \o 0 \x} - b_{0 \x 0 \o}.
\end{equation}
A word about the notation here. The tensor $b$ has four legs. We can form two-leg tensors from it by setting the indices on two of the legs to $0$. For example, $b_{\x 0 \o 0}$ means that the vertical legs are fixed to be $0$, the left leg is restricted to the $\x$ sector\footnote{Here we are using the convention that $\x$ on horizontal legs stands for $\d \oplus \u \oplus \r$.} and the right leg is restricted to the $\o$ sector. (We recall that restricting the leg to the $\o$ sector is not the same as setting the index of the leg to $0$. The former leaves the leg, the latter removes it.) $b_{\o 0 \x 0}$ is similar, except that now the left leg is restricted to the $\o$ sector and the right leg to the $\x$ sector. So $b_{\x 0 \o 0}$ and $b_{\o 0 \x 0}$ are tensors with only horizontal legs. Similar definitions for $b_{0 \x 0 \o}$ and $b_{0 \o 0 \x}$ give tensors with only vertical legs. Eq.~\eqref{eq:ghgv} defines $g_h$ and $g_v$ as linear combinations of these tensors. Now, with this definition, it is easy to see that the last four terms in \eqref{eq:lingauge} cancel precisely the sectors
\beq
\label{eq:set0}
b_{\x \o \o \o}, b_{\o \x \o \o}, b_{\o \o \x \o}, b_{\o \o \o \x}
\eeq
 of the $b$ tensor (the second diagram), while the other sectors are left unchanged at first order. This accomplishes the stated goal of "somewhat reducing $b$".\footnote{This is the best (at first order in $b$) one can do with a gauge transformation.} We also have $\calN_1=(A_{g})_{0000}\approx 1$. So the normalized tensor $A_1=A_{g}/\calN_1$ is written as $A_*+b_1$ where $b_1\approx b$ with sectors \eqref{eq:set0} set to zero. When we describe the subsequent steps, we simply rename $A_1$ and $b_1$ as $A$ and $b$ and assume that the sectors \eqref{eq:set0} are zero.

\subsubsection{Disentangling}\label{sec:lin-disentangling}
The second step of our RG map is disentangling and splitting, shown in Fig.~\ref{fig:bigfig}(b) which we copy here (recall that we renamed $A_1$ as $A$ and $b_1$ as $b$):
\beq
\myinclude{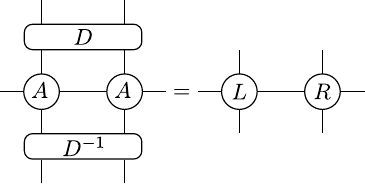}\,.
\label{eq:AADD2LR}
\eeq
Our goal will be to compute $L$ and $R$ to first order in $b$. For this we will need to include some diagrams which are second and third order in $b$. This is different from the previous section on gauge transformation where all considered diagrams were first order in $b$.

When we contract the two $A$ tensors in \eqref{eq:AADD2LR} and expand in powers of $b$, we get four diagrams:
\beq
\label{eq:TT}
  \myinclude{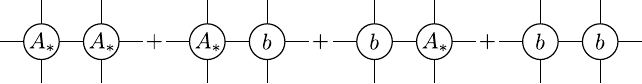}\,.
\eeq
The disentangler will be chosen as
\beq
\label{eq:Dlin}
D=\mathds{1}_{V\otimes V}+X+O(b^4),
\eeq
where $X=O(b^2)$ is given by
 \begin{equation}
 	\label{eq:Xdeflin}\,.
 	\myinclude{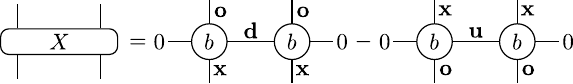}\,.
 \end{equation}
When we act with $D$ and with $D^{-1}=\mathds{1}-X+O(b^4)$ on the contraction of two $A$'s, the important term to the considered order in $b$ is when $X$ is contracted to the first diagram in \eqref{eq:TT}. This term cancels the following part of the last diagram in \eqref{eq:TT}:
\beq
	\myinclude{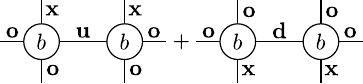}\,.
	\label{eq:aa-prob}
\eeq
The 4 sectors of $b$ involved here have one horizontal leg $\u$ or $\d$, one vertical leg $\x$, and two other legs $\o$. Furthermore, the position of the $\x$ leg (up or down) is correlated with whether it's $\u$ or $\d$ on the horizontal leg. This correlation is the raison d'\^etre of the $\u$ and $\d$ sectors.
	We will see below how it is generated naturally during the reconnection and rotation steps of the map.

We still have to perform the splitting, i.e.~find $L$ and $R$ so that Eq.~\eqref{eq:AADD2LR} holds. We write
\beq
\label{eq:LR1storder}
L=A_*+b_L, \quad R= A_*+b_R,
\eeq
and we have to find $b_L,b_R$ to linear order in $b$.
Let us postpone this task and see first what happens when we perform the reconnection and rotation.

\subsubsection{Reconnection and rotation}\label{sec:lin-reconnection-and-rotation}

Reconnection, Fig.~\ref{fig:bigfig}(c), contracts $L$ and $R$ in the opposite order, which at the linearized level amounts to the three diagrams:
\beq
\label{eq:threediag}
\myinclude{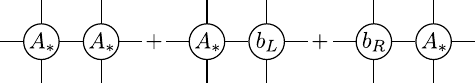}\,.
\eeq
{Then, a real isometry $J:V\otimes V \to V$ acts on the pairs of horizontal legs. Taking into account the subsequent rotation, we need to specify
\beq
J:(\o\oplus \x)\otimes (\o\oplus \x) \to \o \oplus \d \oplus \u \oplus \r.
\eeq
We make the following choices for how $J$ acts on various subsectors (see Section \ref{sec:isometries} below):
\beq
\label{eq:J_who_where}
J: \o \otimes \o \to \o,\quad J:\x\otimes \o \to \d,\quad J:\o\otimes \x \to \u, \quad  J:\x\otimes \x \to \r.
\eeq
By the first of these assignments, the first diagram in \eqref{eq:threediag} is mapped to $A_*$.}

The following remarks help understand the other assignments. We call ``important'' the sectors of $b_L$, $b_R$ which contribute to \eqref{eq:threediag}. These are:
\beq
\label{eq:lrimp}
\myinclude{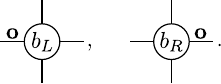}
\eeq
The second and third diagrams in \eqref{eq:threediag} will have an interesting structure: the former has $\o\x$ or $\o\o$ on the vertical pairs of legs (but not both $\o\o$ for the reason stated immediately after), while the latter satisfies a similar rule with  $\o\x$ replaced by $\x\o$. "Both $\o\o$" are excluded because, see \cref{sec:l-r-and-the-norm-of-kss},
\beq
\label{eq:lr}
(b_L)_{\o\o\o\o},(b_L)_{\o\o\x\o},(b_R)_{\o\o\o\o},(b_R)_{\x\o\o\o}\approx 0\,.
\eeq
The described asymmetry between $\x\o$ and $\o\x$ suggests that we should treat these sectors asymmetrically, which is why we map them by $J$ into two orthogonal sectors $\d$ and $\u$. Sector $\r$ is then introduced for the isometric image of $\x\otimes\x$.\footnote{The sector $\x\otimes\x$ will not actually occur in the linearized analysis; we are discussing it here for completeness.}

Then, after counterclockwise 90 degree rotation, as in Fig.~\ref{fig:bigfig}(d), we will obtain diagrams in the following 8 sectors, referred below as "special":
\beq
\label{eq:special}
\o\x\u\o, {\u\x\o\o}, {\o\o\d\x}, {\d\o\o\x}, {\u\x\u\o}, {\u\o\u\o}, {\d\o\d\x}, {\d\o\d\o}\,.
\eeq
The first four special sectors are the ones involved in the canceled diagram \eqref{eq:aa-prob}.

\subsubsection{Infinite-dimensional isometries}\label{sec:isometries}

{ In several places in this paper we will need to specify isometries between infinite-dimensional Hilbert spaces. We recall
\begin{definition}\label{def:isometry}An isometry $U:Y\to Z$ between complex Hilbert spaces $Y$ and $Z$ is a linear map which preserves the norm; it satisfies $U^\dagger U=\mathds{1}_{Y}$. For complex Hilbert spaces with real structures, a real isometry is one which is real with respect to those structures; it satisfies $U^TU=\mathds{1}_{Y}$.
	\end{definition}
The isometries will be used as follows. Given a contraction of two tensors $L\text{\bf ---}R$ where the contracted bond lives in $Y$, and a real isometry $U:Y\to Z$, we ``insert $U^TU=\mathds{1}_{Y}$ on the contracted bond''. That is, we define two new tensors $L^U = L\text{\bf ---}U$ and $R^U = U\text{\bf ---}R$. Then the contraction is preserved: $L^U\text{\bf ---}R^U=L\text{\bf ---}R$ where the first contraction is over $Z$.

The isometry is, of course, injective, but it is not required to be surjective. If it's not surjectve the contraction is still preserved, but some part of the $Z$ space is not used. If we want, we can make the isometry surjective replacing $Z$ by the image of $U$.

\begin{remark} Let us compare with the numerical tensor RG algorithms \cite{Evenbly-review}, which work with finite-dimensional $Y$ and $Z$, ${\rm dim} Y>{\rm dim} Z$. One chooses an orthogonal decomposition $Y=Y_1\oplus (Y_1)^\perp$, ${\rm dim} Y_1={\rm dim} Z$ and considers a ``partial isometry'' $U:Y\to Z$ which maps $Y_1$ isometrically onto $Z$, while $(Y_1)^\perp$ is projected to zero. Then in general $L^U\text{\bf ---}R^U\ne L\text{\bf ---}R$, i.e. the contraction cannot be exactly preserved, and one tries to choose $Y_1$ to minimize the ``truncation error''. (Ref.~\cite{Evenbly-review} formulates the discussion in terms of the transpose $U^T: Z\to Y$ which is an isometry, but because it acts ``in the wrong direction'', the contraction is not preserved.)
	\end{remark}

Let us state a simple general result about the isometries of the kind we will need.

For two countable sets of basis vectors $e \in \mathcal{A}$ and $f\in \mathcal{B}$, disjoint unions of finitely many subsets,
\beq
\mathcal{A} = \bigsqcup_{i=0}^I \mathcal{A}_i,\quad \mathcal{B} = \bigsqcup_{k=0}^K \mathcal{B}_k,
\eeq
let $\mathbf{y}_i$ (resp. $\mathbf{z}_k$) be complex Hilbert spaces with orthonormal basis vectors $e\in \mathcal{A}_i$ (resp. $f \in \mathcal{B}_k$), and let $Y$ and $Z$ be the direct sums
\beq
Y = \bigoplus_{i=0}^I \mathbf{y}_i, \quad Z = \bigoplus_{k=0}^K \mathbf{z}_k\,.
\eeq

\begin{lemma}\label{lem:isometry} Suppose that $\mathbf{y}_0$ and $\mathbf{z}_0$ are one-dimensional and all the other $\mathbf{y}_i$, $\mathbf{z}_k$ are infinite-dimensional. Then, given any map $p:\{1,\ldots,I\}\to \{1,\ldots,K\}$, there exists an isometry $J:Y\to Z$ such that $J:\mathbf{y}_0\to  \mathbf{z}_0$ and $J:\mathbf{y}_i\to  \mathbf{z}_{p(i)}$ for $i=1,\ldots,I$. This isometry can be chosen real with respect to the specified basis.
\end{lemma}

\begin{proof}
We will define $J$ on the basis vectors of $Y$ by the formula:
	\beq
	J(e) = \phi(e)\,,
	\eeq
	where $\phi:\mathcal{A}\to \mathcal{B}$ is an injective function defined below such that $\phi(\mathcal{A}_0)=\mathcal{B}_0$ and $\phi(\mathcal{A}_i) \subset \mathcal{B}_{k}$ where $k=p(i)$.
In words, every basis vector of $Y$ will be mapped to a distinct basis vector of $Z$, with coefficient 1. By linearity, this extends to a real isometry from $Y$ to $Z$ with needed properties.

It remains to define $\phi$.

On $\mathcal{A}_0$, $\phi$ simply maps its unique element to the unique element of $\mathcal{B}_0$.

On the infinite-dimensional sectors, $\phi$ should map $\mathcal{A}_i$ into $\mathcal{B}_k$ where $k=p(i)$, and it should be injective. If $p$ is injective, we enumerate the elements of $\mathcal{A}_i$ and $\mathcal{B}_k$ in some arbitrary order, and $\phi$ maps the $n$-th element of $\mathcal{A}_i$ to the $n$-th element of $\mathcal{B}_k$. If $p$ is not injective, then a simple modification is needed. As an example, suppose that $p^{-1}(1)=\{1,2\}$. Then we arrange so that $\phi$ maps elements of $\mathcal{A}_1$ (resp. $\mathcal{A}_2$) to odd-numbered (resp. even-numbered) elements of $\mathcal{B}_1$. In the general case when $|p^{-1}(k)|=r$ we split the sequence of elements of $\mathcal{B}_k$ modulo $r$.
\end{proof}


For example, the isometry specified in Eq.~\eqref{eq:J_who_where} is obtained by applying the lemma with
\beq
(\mathbf{y}_{0}, \mathbf{y}_{1}, \mathbf{y}_{2}, \mathbf{y}_{3})=(\o\otimes \o, \x\otimes \o, \o\otimes\x ,\x\otimes \x),\qquad
(\mathbf{z}_{0}, \mathbf{z}_{1}, \mathbf{z}_{2}, \mathbf{z}_{3})=(\o, \d, \u ,\r)\,.
\eeq
To follow through the details explicitly, we need to enumerate basis elements of $\x\otimes \x$. They come naturally as tensor products $e_i\otimes e_j$ of basis elements of $\x$, so they are indexed by pairs $(i,j)$, $i,j\in \mathbb{N}$. To enumerate them, one can use any pairing function \cite{wikipedia_pairing_function}, e.g.~the Cantor pairing function.}



\subsubsection{When is the linearized map a contraction?}\label{sec:lin-norm}
We just showed that after the RG step $b'$ has only the special components \eqref{eq:special} nonzero (to first order). This leads to some simplification of the analysis when the linearized RG map is a contraction, and with respect to which norm. Let $P_s$ ($P_n$) be the orthogonal projectors on the sum of special (non-special) sectors.  Let $\RG: A \mapsto A'$ be the RG map and $\nabla \RG$ be its Jacobian at $A_*$. Then $P_n \nabla \RG=0$ and we can restrict our attention to $P_s \nabla \RG$. We decompose $b=b_s+b_{n}$ where $b_s=P_s b$, $b_{n}=P_n b$. The important part of the Jacobian is $K_{ss} = P_s(\nabla \RG)P_s$, which shows how $b_s'$ depends on $b_s$ (to first order). The less important part is $K_{sn} = P_s(\nabla \RG)P_n$ which shows how $b_s'$ depends on $b_n$. The $K_{ss}$ and $K_{sn}$ are some bounded linear operators (with respect to the HS norm).  Consider the following weighted norm:
\beq
\label{eq:weighted}
\vvvert b \vvvert = \max(\|b_s\|, C\|b_n\|),
\eeq
where $\|\cdot\|$ is the HS norm and $C>0$ is a constant to be determined. As $P_n \nabla \RG=0$ we have
\beq
\label{eq:KK}
\vvvert (\nabla \RG)b \vvvert= \| (P_s\nabla \RG) b\|\le \|K_{ss}\|\|b_s\|+ \|K_{sn}\|\|b_n\| \le
\nu \vvvert b \vvvert,\quad \nu=\|K_{ss}\|+ C^{-1}\|K_{sn}\|\,.
\eeq
We see that if $\|K_{ss}\|<1$ and $\|K_{sn}\|$ is finite, we can always choose $C$ so that $\nu<1$ and $\nabla \RG$ is a contraction with respect to $\vvvert\cdot \vvvert$. That we only need to worry about $\|K_{ss}\|<1$ is the above-mentioned simplification.

\subsubsection{$b_L$, $b_R$ and the norm of $K_{ss}$}\label{sec:l-r-and-the-norm-of-kss}

We come back to the task of defining $L,R$ as in \eqref{eq:LR1storder}. We will also see how the condition $\|K_{ss}\|<1$ comes about.

The idea to define $L$ and $R$ is simple: we have to "cut" the l.h.s.~of \cref{eq:AADD2LR} vertically into two halves, assigning the left half to $L$, and the right half to $R$. The contraction of two $A$'s and the diagrams defining $X$ can both naturally be cut in two halves.

We expand in powers of $b$. Let $n_L$ (resp. $n_R$) denote the order of the left (resp. right) half of a diagram in $b$. Since here we only need $b_L$ and $b_R$ to first order, we have to keep diagrams where $n_L\le 1$ or $n_R\le 1$.

We start by discussing the diagrams obtained by expanding \cref{eq:AADD2LR} to $O(b^2)$. These are the diagrams in \eqref{eq:TT}, with diagrams from \eqref{eq:aa-prob} cancelled out. These are the most important diagrams. A few remaining diagrams, which are $O(b^3)$, will be dealt with below.

{ We will have to pay attention to two things. First, some diagrams cancel, so they drop out and do not have to be reproduced. Second, gluing of $L$ and $R$ must reproduce exactly the remaining diagrams; we don't want to introduce ``cross terms.''

The second issue will be handled by the method of ``auxiliary vectors''. Let us give an example. Suppose we need to reproduce the sum of two diagrams:
\beq
\label{eq:2diags-ex}
\myinclude{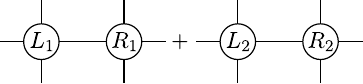},
\eeq
where the contracted bond lives in $V$. Let $V_{\rm aux}$ be an auxiliary two-dimensional Hilbert space spanned by the orthonormal basis $(e_1,e_2)$. We define $L$ and $R$ whose right (resp. left) leg lives in $V\otimes V_{\rm aux}$:
\beq
L = L_1 \otimes e_1+L_2 \otimes e_2= \myinclude{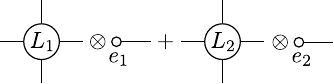},
\eeq
\beq
R = R_1 \otimes e_1+R_2 \otimes e_2 = \myinclude{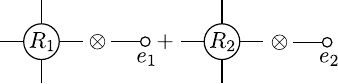}.
\eeq
(The auxiliary vector is tensored with the tensor's leg which points in the same direction.)
Then the contraction of $L$ and $R$ reproduces \eqref{eq:2diags-ex} without unwanted cross terms. This finishes the example.

For our real problem, we will need a larger, seven-dimensional auxiliary
Hilbert space $V_{\rm aux}$ spanned by the following orthonormal basis:
\beq
\label{eq:Vaux0}
V_{\rm aux} = {\rm span}(e_z, e_{\x\x}, e_{\o\x}, e_{\x\o}, e_{\o\o}, e_{\o\o}', e_{\o\o}^{\prime \prime}).
\eeq
Each of these auxiliary vectors will be used to reproduce a sum of certain group of diagrams called a ``channel''.
The vectors $e_z, e_{\x\x}, e_{\o\x}, e_{\x\o}$ will be used for the $z, \x\x, \o\x, \x\o$ channels,
respectively, while the last three vectors will be used for the $\o\o$ channel.

Because we introduced the auxiliary vectors, Eq.~\eqref{eq:LR1storder} needs to be modified. The following modification is appropriate:
\beq
L=A_*\otimes e_z+\tilde b_L, \quad R=  A_* \otimes e_z+\tilde b_R,
\eeq
where $e_z$ is tensored with the right (resp. left) leg of $A_*$ in the first (resp. second) equation. We will define $\tilde b_L$ and $\tilde b_R$, and then will pass to $b_L$ and $b_R$ by applying an isometry on the contracted leg in the $L \text{\bf ---} R$ contraction (see below).
}


The "$z$ channel" ($z$ for zero) consists of the first three diagrams in \eqref{eq:TT}, and of the fourth diagram with the contracted leg restricted to $\o$. This channel is reproduced by defining
\beq
\label{eq:zchannel-lin}
 \myinclude{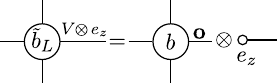}\,, \qquad \myinclude{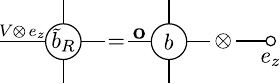}.
\eeq

Consider next the fourth diagram in \eqref{eq:TT} with the contracted leg restricted to $\x$. There $\x\x$, $\x\o$, $\o\x$, $\o\o$ channels are defined by restricting the the external horizontal legs of this diagram to $\x\x$, $\x\o$, $\o\x$, $\o\o$. The first three channels are reproduced by defining (see below for the meaning of boxes around some equations)
\begin{equation}\label{eq:linchan}
  \begin{array}{@{}l@{\hspace{1cm}}l@{}}
    \myinclude{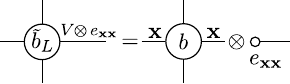}, & \myinclude{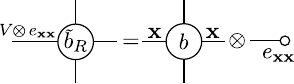}, \\
    \noalign{\vspace{10pt}}
    \myinclude{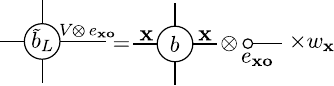}, & \boxed{\myinclude{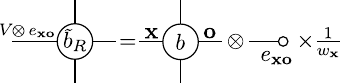}}, \\
    \noalign{\vspace{10pt}}
    \boxed{\myinclude{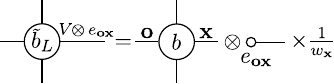}}, & \myinclude{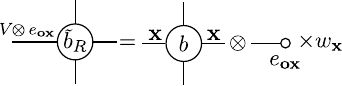}.
  \end{array}
  \end{equation}
  We also introduced a ``reweighting parameter'' $w_\x>0$ for the $\x\o$ and $\o\x$ channels. Playing with this parameter will help us achieve $\|K_{ss}\|<1$ (see below).

Finally consider the $\o\o$ channels. In this sector we have cancellations: the diagrams \eqref{eq:aa-prob} were canceled by the disentangler. We are left with the five diagrams
\begin{gather}
  \myinclude{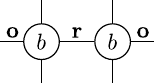}\, +\,   \myinclude{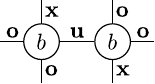}\,  +\,  \myinclude{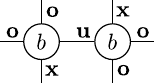}
\, +\,   \myinclude{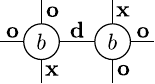}\, +\,  \myinclude{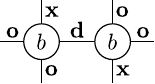}\label{eq:00diags}
\end{gather}
(Note that we omitted the diagrams involving $b_{\x \o\o\o}$ and $b_{\o\o\x\o}$. Those diagrams are higher order, since $b_{\x 000},b_{00\x0}\approx 0$ as a result of the gauge transformation.) The first three diagrams are reproduced by setting:
\beq
\label{eq:diag1split}
\myinclude{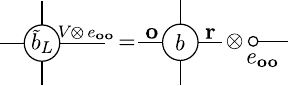}\,, \qquad\qquad \myinclude{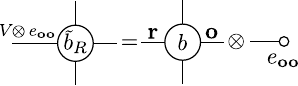}.
\eeq
\begin{equation}\label{eq:diag23split}
  \begin{array}{@{}l@{\hspace{1cm}}l@{}}
    \boxed{\myinclude{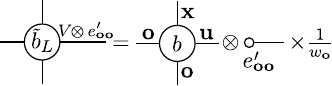}}, & \myinclude{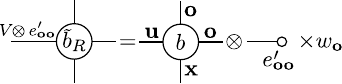}, \\
    \noalign{\vspace{6pt}}
    \myinclude{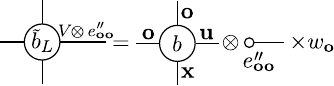}, & \boxed{\myinclude{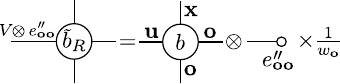}}.
  \end{array}
\end{equation}
The last two diagrams are reproduced by \cref{eq:diag23split} with $\u$ replaced by $\d$ in the r.h.s and all labels on vertical legs swapped. Here we introduced another reweighting parameter $w_\o>0$, needed for $\|K_{ss}\|<1$.

This concludes the treatment of the diagrams up to $O(b^2)$. Consider next the following diagram
\beq
\label{eq:cubic}
\myinclude{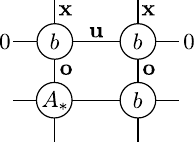}\,,
\eeq
which appears when we expand the l.h.s.~of \cref{eq:AADD2LR} at $O(b^3)$. It has $n_L=1$, $n_R=2$, so when we cut it vertically, it contributes to $\tilde b_L$ at linear order. The left leg of $A_*$ being in $\o$, this $\tilde b_L$ will contribute to $b'$. To suppress it, we use reweighting, which works here since $\tilde b_R$ is second order.
For simplicity let us use the same reweighting parameters: $w_\o$ if the right external leg is also in $\o$, and $w_\x$ if it's in $\x$. There are 3 more $O(b^3)$ diagram like \eqref{eq:cubic}, all treated in the same way.

{ At this stage we constructed $\tilde b_L$ and $\tilde b_R$ so that the contraction $L\text{\bf ---} R$ reproduces all the needed diagrams. The contracted bond lives in the space $V'=(V\otimes V_{\rm aux}) \oplus (\o \otimes \u) \oplus (\d \otimes \o)$. The first term in this direct sum comes from the splitting of the $O(b^2)$ diagrams, and the second and third terms come from the splitting of the $O(b^3)$ diagrams. We would like instead that the contracted bond lives in $V$, so that after reconnection we get a tensor with horizontal legs in $V$. This is easily achieved as follows. Using \cref{lem:isometry}, we pick a real isometry $\varkappa:V' \to V$ such that
\beq\label{eq:lin-kappa}
	\kappa: e_{0} \otimes e_z \mapsto e_0,\quad \kappa: (e_{0} \otimes e_z)^\perp \to \x.
\eeq
We define tensors $L^\kappa$ and $R^\kappa$ by contracting the $V'$ leg of $L$ and $R$ with $\kappa$, i.e.
\beq
L^\kappa=L\text{\bf ---}\kappa,\quad R^\kappa=\kappa \text{\bf ---} R\,.
\eeq
Because $\kappa^T \kappa=\mathds{1}_{V'}$, we have $L^\varkappa\stackrel{V}{\text{\bf ---} }R^\kappa = L\stackrel{V'}{\text{\bf ---}} R$. We have
\beq
\label{eq:lin-L-kappa}
L^\kappa=A_*+b_L,\quad R^\kappa=A_*+b_R
\eeq
where $A_*$ is obtained by contracting the zeroth-order terms in $L$ and $R$ with $\kappa$, while $b_L$ and $b_R$ are defined as contractions of $\tilde b_L$ and $\tilde b_R$ with $\kappa$. This finishes the definition of $b_L$ and $b_R$ to first order.}

Now let us see how $\|K_{ss}\|<1$ is obtained given these $b_L$ and $b_R$, with the help of the reweighting parameters. We only have  to pay attention to the sectors of $b_L$ and $b_R$ that satisfy both of the following two conditions:
\begin{itemize}
	\item
	They are the "important" sectors from Eq.~\eqref{eq:lrimp}, i.e.~have $\o$ on the "external" horizontal leg. Only these sectors contribute to $b'$ at linear order.
	\item
	They are proportional to $b$ from the special sectors. Only these sectors contribute to $K_{ss}$.
\end{itemize}
There are only four assignments in Eqs.~\eqref{eq:zchannel-lin},\eqref{eq:linchan},\eqref{eq:diag1split},\eqref{eq:diag23split} which give rise to $b_L$, $b_R$ sectors satisfying both these conditions; we put boxes around them.
The key point is that all these equations contain rescalings by $1/w_\x$ or $1/w_\o$. Choosing $w_\x$ and $w_\o$ large we will suppress them and achieve $\|K_{ss}\|<1$. The remaining sectors do not matter since they either do not contribute to $b'$ at linear order, or contribute only to $K_{sn}$ and not to $K_{ss}$.

This is also a good moment to see why we needed to disentangle. If we did not cancel the diagrams  \eqref{eq:aa-prob} by the disentangler, we would need additional $b_L$, $b_R$ sectors to reproduces them. But both halves of those diagrams belong to special sectors, hence they both contribute to $K_{ss}$. Therefore, there would be no way to achieve $\|K_{ss}\|<1$ by reweighting.\footnote{A related comment is that tensor RG maps without disentangling ("simple RG maps") typically have eigenvalue 1 perturbations at the high-T fixed point, associated with the so called CDL tensors; see \cite{paper1}.}

We still owe a check that the $b_L, b_R$ sectors mentioned in \eqref{eq:lr} vanish to first order. This follows by the inspection of \eqref{eq:zchannel-lin},\eqref{eq:linchan},\eqref{eq:diag1split},\eqref{eq:diag23split}, and the diagrams obtained by cutting the cubic terms like \eqref{eq:cubic}. There are only a few diagrams which contribute to \eqref{eq:lr}, and they are all proportional to the corresponding sectors of $b$. These sectors are either zero ($b_{\o\o\o\o}$), or have been set to zero to first order by the gauge transformation.


This finishes the discussion of $b_L$, $b_R$ up to linear order. As explained we can make $\|K_{ss}\|<1$  by choosing $w_\x$, $w_\o$ sufficiently large. An explicit condition on $w_\x$, $w_\o$ for this to happen will be worked out in \cref{sec:leading} below. We then choose $C$ sufficiently large so that $\nu<1$ in \eqref{eq:KK} and the linearized RG map is a contraction with respect to the $\vvvert\cdot\vvvert$ norm.

Hopefully, it is now less mysterious how we designed our RG map and why we think it's a good map. We will now proceed to the discussion of the map at the full nonlinear level. This will necessarily be more complicated, but the main ideas will be the same. We will also introduce a graphical language which will automatize the task of keeping track of various diagrams and making sure that no diagram is forgotten.

\subsection{Hat-tensors}
\label{sec:hat-tensors}

As we have already seen in the linearized discussion, various tensors will be separated into sectors. An essential ingredient of this paper will be to control the HS norm of each sector separately. (We have already seen a simple example of this in the weighted norm \eqref{eq:weighted}.) This leads to more refined estimates than using the overall HS norm of $b$ as in \cite{paper1, paper2,Ebel:2024jbz}.

\begin{definition}\label{def:hat}
  Let $T$ be a HS tensor with $n$ legs. We will say that an $n$-leg tensor
  $\widehat{T}$ is a hat-tensor for $T$ if
  \begin{equation}\label{eq:hat-condition}
    \| T_{a_1 a_2\cdots a_n} \| \le \widehat{T}_{a_1 a_2\cdots a_n} \,,
  \end{equation}
  where each index $a_i$ ranges over the sectors for the index space for that
  leg. (So $\widehat{T}$ is a finite dimensional tensor with real nonnegative components.)
\end{definition}

\begin{remark} \label{hattensorremark} A hat-tensor always exists for a HS
  tensor since we can simply define $\widehat{T}_{a_1, a_2,\cdots,a_n}$
  to be $\| T_{a_1, a_2,\cdots,a_n} \|$. We call this hat-tensor the minimal hat-tensor for $T$.

\end{remark}

We now have two types of tensors: the original tensors for which $V$ is infinite dimensional and the hat-tensors for which the index spaces are finite dimensional. When we want to emphasize that we are referring to the former, we will call them ``full-tensors.'' We will label indices for the full-tensors with letters such as $i_1,i_2,\ldots$. For the hat-tensors the possible values of the indices are the sectors for the corresponding leg. We will label indices for the hat-tensors with letters such as $a,b,c, \ldots$ or $a_1,a_2,\ldots$.

It is immediate that if $\widehat{A}$ and $\widehat{B}$ are hat-tensors for $A$ and $B$, then $\widehat{A} + \widehat{B}$ is a hat-tensor for $A+B$. We need to consider hat-tensors for contractions of tensors. Note that we will only contract legs of different tensors (perhaps even several legs simultaneously), never two legs of the same tensor ("no self-contraction").

\begin{lemma}\label{hatlemma}
  Let $T_1,T_2,\ldots,T_n$ be HS tensors and let $C$ be some contraction of
  these tensors without self-contractions.
  Let $\widehat{T}_i$ be hat-tensors for $T_i$. Then the contraction of
  $\widehat{T}_1, \widehat{T}_2, \ldots, \widehat{T}_n$ is a hat-tensor for $C$. (The $\widehat{T}_i$ are contracted in the same way as the $T_i$.)
\end{lemma}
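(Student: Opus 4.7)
The plan is to reduce the statement to two classical facts about the Hilbert--Schmidt norm: (i) the triangle inequality, applied after decomposing each $T_i$ into its sector components; and (ii) submultiplicativity of the HS norm under contractions without self-contraction, i.e.\ $\|C(S_1,\ldots,S_n)\| \le \prod_i \|S_i\|$ whenever $C$ is a contraction of HS tensors $S_i$ with no pair of contracted legs belonging to the same tensor.

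First I would fix a sector assignment $a_1,\ldots,a_m$ on the external legs of $C$ and unfold the restriction $C_{a_1\cdots a_m}$. Because contraction is linear in each tensor and the sector decomposition on each Hilbert space $V$ is an orthogonal direct sum, I can write each $T_i = \sum_{\text{sectors of its legs}} (T_i)_{\text{sectors}}$ and expand. Since no two contracted legs belong to the same tensor, the resulting sum is indexed by independent sector labels on every \emph{internal} (contracted) leg, and the external labels are frozen to $a_1,\ldots,a_m$ (because the restriction commutes with contraction on legs disjoint from it). Schematically,
\begin{equation*}
 C_{a_1\cdots a_m}
 \;=\; \sum_{\text{internal sectors } b}
 \; C\bigl((T_1)_{\sigma_1(a,b)},\ldots,(T_n)_{\sigma_n(a,b)}\bigr),
\end{equation*}
where $\sigma_i(a,b)$ denotes the sector labels inherited on the legs of $T_i$ from the external and internal assignments. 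The triangle inequality in HS norm then gives
\begin{equation*}
 \|C_{a_1\cdots a_m}\| \;\le\; \sum_{b}\; \bigl\|C\bigl((T_1)_{\sigma_1(a,b)},\ldots,(T_n)_{\sigma_n(a,b)}\bigr)\bigr\|.
\end{equation*}

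Next I would apply submultiplicativity of the HS norm to each term on the right, bounding it by $\prod_i \|(T_i)_{\sigma_i(a,b)}\|$. For a single leg contraction this is a direct Cauchy--Schwarz computation on the defining sum, and the general case (several legs contracted between two tensors, or multiple tensors forming a connected network, possibly with cycles) follows by iteration: contract two tensors at a time, grouping the legs connecting them, and use the two-tensor case at each step. The absence of self-contraction is what lets this iteration proceed. Finally, I apply the hat-tensor condition \eqref{eq:hat-condition} to each factor $\|(T_i)_{\sigma_i(a,b)}\| \le (\widehat{T}_i)_{\sigma_i(a,b)}$, giving
\begin{equation*}
 \|C_{a_1\cdots a_m}\| \;\le\; \sum_b \prod_i (\widehat{T}_i)_{\sigma_i(a,b)}.
\end{equation*}
The right-hand side is, by definition, the component with external labels $a_1,\ldots,a_m$ of the contraction of $\widehat{T}_1,\ldots,\widehat{T}_n$ in the same pattern as the $T_i$, which is exactly what had to be shown.

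The only non-cosmetic step is submultiplicativity of HS norms under a general contraction network; I expect this to be the main obstacle if one wants a self-contained treatment, though it is standard and essentially amounts to Cauchy--Schwarz applied iteratively along an elimination ordering of the internal legs. Everything else (sector decomposition, triangle inequality, unpacking the definition of hat-tensor) is bookkeeping and should fit in a few lines once the notation for the index assignments $\sigma_i(a,b)$ is fixed.
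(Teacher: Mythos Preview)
Your proposal is correct and follows essentially the same approach as the paper's own sketch: both reduce the general contraction to the two-tensor case by induction and then invoke Cauchy--Schwarz, with the sector structure handled along the way. The only cosmetic difference is that you separate the sector sum first via the triangle inequality and then apply the sector-free submultiplicativity bound, whereas the paper's sketch keeps the sectors inside the Cauchy--Schwarz step; the resulting bound is identical.
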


\begin{proof}[Sketch of the proof:] Proposition 2.4 in \cite{paper2} is a partial case of this lemma with one sector per leg, when the hat-tensor is just a scalar bound for the full HS norm. Like that proposition, the lemma is a simple consequence of the Cauchy-Schwarz inequality.
  A simple induction argument reduces the proof to the case of $n=2$. For this case we group together the legs of $A_1$ that are not involved in the contraction, and group together the legs of $A_1$ that are involved. We do a similar leg grouping for $A_2$. This reduces the $n=2$ case to the case of two 2-leg tensors. This case can then be proved by several applications of the Cauchy-Schwarz inequality, paying attention to different sectors.
\end{proof}

As mentioned in \cref{sec:tensors}, we will also have some tensors which are not HS and so do not have
hat-tensors. To handle these we make the following definition.

\begin{definition}\label{def:check}
  Let $S$ be a tensor (not necessarily HS) with some non-empty subset of its
  legs distinguished. We say $\widecheck{S}$ is a check-tensor for $S$ with
  this subset of distinguished legs if for any HS tensor $T$ with hat-tensor
  $\widehat{T}$, the contraction of the distinguished legs of $S$ with
  any subset of the legs of $T$ produces a HS tensor, and a hat-tensor
  for this contraction may be obtained by contracting $\widecheck{S}$
  and $\widehat{T}$ in the same manner.
\end{definition}

\begin{remark} \label{checkdiagramremark} In our diagrams we will put a tick mark across the
	distinguished legs for tensors that have a check-tensor.
\end{remark}

\begin{remark} \label{checkremark} The idea behind this definition is that $S$ is a bounded operator from the (Hilbert space of) distinguished legs to the undistinguished legs, and $\widecheck{S}$ bounds the operator norm of restrictions of $S$ to various sectors. This generalizes Prop.~2.2(b) from \cite{paper2}.
\end{remark}

Using \cref{hatlemma} and \cref{def:check} repeatedly we can compute hat-tensors for a wide class of \emph{allowed} contractions defined as follows:

\begin{definition}\label{def:allowed} Let $n \ge 1$ and let
	$T_1,T_2,\ldots,T_n$ be HS tensors with hat-tensors $\widehat{T}_i$.
	Let $m \ge 0$ and let $S_1,S_2,\ldots S_m$ be tensors each of which
	is assigned a set of distinguished legs and has a check-tensor
	$\widecheck{S}_j$ for that set. A contraction $C$ of all these tensors is called \emph{allowed} if
	there are no self-contractions, and for each $S_j$ the set of distinguished legs is
	contracted with some subset of the legs of a single tensor $T_i$, while the undistinguished legs of $S_j$ may be either contracted with legs of tensors $T_{i'}$, $i'\ne i$, or left uncontracted.
\end{definition}

The following lemma will be used extensively.
\begin{lemma}\label{checklemma} Let $C$ be an allowed contraction from the previous definition.
  Then $C$ is HS and if we contract the $\widehat{T}_i$ and
  the   $\widecheck{S}_j$ in the same manner as the contraction that formed $C$,
  then we obtain a hat-tensor for $C$.
\end{lemma}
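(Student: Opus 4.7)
The plan is to reduce the statement to \cref{hatlemma} by successively \emph{absorbing} each check-tensor $S_j$ into the unique HS tensor $T_{i(j)}$ whose legs contract with the distinguished legs of $S_j$ (the index $i(j)$ being well-defined by \cref{def:allowed}). Each absorption, via \cref{def:check}, turns the pair $(T_{i(j)}, S_j)$ into a single HS tensor with a hat-tensor obtained by contracting $\widecheck{S}_j$ and $\widehat{T}_{i(j)}$ in the same manner. After $m$ absorptions only HS tensors remain, so \cref{hatlemma} applies, and by associativity of tensor contraction the resulting hat-tensor equals the contraction of all original $\widehat{T}_i$ and $\widecheck{S}_j$ performed in the same way as $C$.

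Concretely, I would induct on $m$. The base case $m = 0$ is precisely \cref{hatlemma}. For $m \ge 1$, pick any $S_j$, set $i = i(j)$, and apply \cref{def:check} with $S_j$ in the role of $T$ and $T_i$ in the role of $S$: the partial contraction $T^{\mathrm{new}}$ formed by contracting the distinguished legs of $S_j$ with the corresponding legs of $T_i$ is HS; its remaining legs are the unused legs of $T_i$ together with the undistinguished legs of $S_j$; and $\widehat{T}^{\mathrm{new}} := \widecheck{S}_j \cdot \widehat{T}_i$ (contracted in the same way) is a hat-tensor for it. Replacing the pair $(T_i, S_j)$ in $C$ by $T^{\mathrm{new}}$ leaves the overall contraction unchanged and reduces the number of check-tensors to $m - 1$.

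The main bookkeeping step, and the only thing that can go wrong, is to verify that the reduced contraction is still allowed in the sense of \cref{def:allowed}. No self-contraction on $T^{\mathrm{new}}$ is introduced, because \cref{def:allowed} forbids the undistinguished legs of $S_j$ from contracting with legs of $T_i$. For each remaining $S_{j'}$, its distinguished legs still contract with a single HS tensor --- either an untouched $T_k$ (if $i(j') \ne i$) or $T^{\mathrm{new}}$ (if $i(j') = i$), whose legs include all legs of the original $T_i$ not used up by the absorption of $S_j$. Likewise, the undistinguished legs of $S_{j'}$ continue to contract only with HS tensors different from the one receiving its distinguished legs: if such an undistinguished leg was previously contracted with $T_i$, then necessarily $i(j') \ne i$, and $T^{\mathrm{new}}$ remains distinct from $T_{i(j')}$. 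Applying the inductive hypothesis to the reduced contraction and substituting $\widehat{T}^{\mathrm{new}} = \widecheck{S}_j \cdot \widehat{T}_i$ into the resulting hat-tensor yields the claimed hat-tensor for $C$.
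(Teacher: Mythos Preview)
Your proof is correct and follows the same strategy as the paper's: induct on $m$, absorbing each $S_j$ into its associated $T_{i(j)}$ via \cref{def:check} until only HS tensors remain and \cref{hatlemma} applies. The paper's proof is terser and does not spell out the verification that the reduced contraction remains allowed, which you check carefully; your extra bookkeeping is correct and fills in what the paper leaves implicit.
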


\begin{proof} Let $i$ be such that the distinguished legs
  of $S_1$ are contracted with some legs of $T_i$.
  Then by the definition of check-tensor, we may replace the contraction
  between $S_1$ and $T_i$ by a single HS tensor for which the contraction of
  $\widecheck{S}_1$ and $\widehat{T}_i$ is a hat-tensor. This reduces $m$ by
  $1$ and leaves $n$ unchanged. Continuing this process we are reduced to
  the case of $m=0$.  Then we can apply the previous lemma.
\end{proof}

\begin{remark} \label{checkhatremark} Whether one uses a hat-tensor or a check-tensor is sometimes a matter of necessity and sometimes of convenience. In particular, in this paper:
\begin{itemize}
	\item We use a hat-tensor for any tensor which is a contraction of $b$ tensors. This hat-tensor will be obtained from $\widehat{b}$ by Lemma \ref{hatlemma}.

	\item We use a check-tensor for any two-leg tensor $S$ which is an orthogonal projector on the direct sum of
	one or more sectors. For example, if $S$ is the projector on
	$\d \oplus \u \oplus \r$, then its check-tensor $\widecheck{S}_{aa}=1, \, a = \d, \u, \r$, and all other components equal to zero. We find it convenient to use a check-tensor even for projectors on finite-dimensional subspaces, even though these tensors are HS and a hat-tensor exists. For example, if $S$ is the projector on $\o$ then $\widecheck{S}_{\o\o}=1$ is the only nonzero component.
	\end{itemize}
\end{remark}

\subsection{Control of the 2x1 map via hat-tensors}\label{sec:control-of-the-2x1-map-via-hat-tensors}

Let us describe the main ideas of how we will control the 2x1 RG map on the nonlinear level. We denote by $\mathbb{H}_0$ the Hilbert space of complex tensors $b$ equipped with the HS norm and satisfying $b_{0000}=0$. The RG map acts on the tensor $A=A_*+b$, where $b\in \mathbb{H}_0$.

Let $\widehat b$ be a hat-tensor for $b$ (which may or may not be its minimal hat-tensor). It is a tensor with $2\times4\times 2\times 4=64$ real nonnegative components, one of which is zero: $b_{\o\o\o\o}=0$. We denote the set of such hat-tensors $\hat{\mathbb{H}}_0$. We will perform the four steps of the RG map one by one, and after every step we will estimate the arising new tensors by their hat-tensors, which will be expressed as functions of previous hat-tensors:
\begin{itemize}
	\item After the gauge transformation step we will obtain $A_1=A_*+b_1$ with a hat-tensor $\widehat{b}_1$ which is a function of $\widehat b$;
	\item After disentangling and splitting we'll get $L=A_*+b_L$, $R=b_R$ with hat-tensors $\widehat{b}_L$, $\widehat{b}_R$ which are functions of $\widehat{b}_1$;
	\item After reconnection we'll get $A_2=A_*+b_2$ with a hat-tensor $\widehat{b}_2$ a function of $\widehat{b}_L$ and $\widehat{b}_R$;
	\item And finally after rotation we have $A'=A_*+b'$ with $\widehat{b}'$ a function of  $\widehat{b}_2$.
\end{itemize}
All these functions will be given by explicit expressions which will be easily inferred from the diagrams defining the RG map, given below, using Lemma \ref{checklemma}. Composing the functions from the four steps of the 2x1 map, we get $\widehat{b'}$ as a function of $\widehat{b}$, which we call the master function:
\beq
\widehat{b}' = \mathfrak{M}(\widehat{b})\,.
\eeq
The master function acts from $\hat{\mathbb{H}}_0$ to itself. Because of the large dimensionality of this space, there is no question of computing it by hand. We realize it in a computer code accompanying the paper.

Our construction will define the 2x1 map for $b$ belonging to an open neighborhood $\Omega$ of $0\in \mathbb{H}_0$. Moreover $b'$ will be an analytic function of $b\in\Omega$. The neighborhood $\Omega$ will be defined in terms of a condition on a hat-tensor for $b$:
\beq
\label{eq:Omega}
\Omega = \{b: \exists\ \widehat{b}\text{ for }b\text{ such that } \widehat{b} \in \widehat\Omega\}\,.
\eeq
where $\widehat\Omega$ is a subset of $\hat{\mathbb{H}}_0$. For any given tensor $\widehat{b}$, it will be easy to check
whether $\widehat{b} \in \widehat{\Omega}$ or not.\footnote{The $\widehat{\Omega}$
will be downward closed (Def.~\ref{def:downward} below). So to check if $b\in \Omega$ it's enough to check if its minimal hat-tensor belongs to $\widehat\Omega$. Still, it's convenient to write the definition for $\Omega$ as above.}

Now suppose we are given a tensor $b$ whose hat-tensor $\widehat{b}$, specified numerically, belongs to $\widehat\Omega$, so the 2x1 map is defined. We apply the map and we get $b'$. The $b'$ will be defined explicitly by some diagrams in terms of $b$. Of course $b'$, like $b$, is an infinite-dimensional tensor, so we don't attempt to evaluate those diagrams. Instead we evaluate the master function and obtain a numerical hat-tensor $\widehat{b}'$, which gives us information about the size of $b'$. In particular we can check if $\widehat{b}'\in \widehat\Omega$. If this condition holds, this implies that $b'\in \Omega$, and we can iterate the 2x1 map. We so obtain a sequence of tensors $b=b^{(0)}, b'=b^{(1)}, b^{(2)}, b^{(3)}, \ldots$ (which are just defined, not evaluated), and the corresponding hat-tensors, which are all evaluated numerically.

We will find that when we iterate the map one of the following two possibilities is realized:
\begin{align}
  &\parbox{.8\linewidth}{After a certain number of iterations we get a hat-tensor $\widehat{b}^{(i_0)}\notin\widehat{\Omega}$.
  } \\[6pt]
  &\codetag \parbox{.8\linewidth}{The iterations enter a trajectory where all components $\widehat{b}^{(i)}$ start decreasing exponentially with $i$, in the sense that for some $i_0$ the condition $\widehat{b}^{(i_0+1)}\le \lambda \widehat{b}^{(i_0)}$ is satisfied componentwise with some $\lambda<1$.} \label{eq:bhatcase2}
\end{align}
In the first case we have to stop the iterations as the RG map is no longer defined. In the second case, as we will argue now, the exponential decrease will continue indefinitely.

The argument is based on some properties of functions of hat-tensors. We need a few definitions. If $x,y\in \bR_{\geq 0 }^n$ (e.g.~they are both hat-tensors), then we say $x\le y$ if this inequality holds componentwise. Used subsets of $\bR_{\geq 0 }^n$ will satisfy the following definition:
\begin{definition}\label{def:downward}
 A set $\widehat{\Lambda} \subset \bR_{\geq 0 }^n$ is called {\bf downward closed} if for any $x\in \widehat{\Lambda}$ and for any $y$ such that $0\le y\le x$ we have $y\in \widehat{\Lambda}$.
 \end{definition}
 The following properties will hold naturally for the functions of hat-tensors that we will consider.
\begin{definition}\label{def:mondandhom}
  Let $\widehat{\Lambda} \subset \bR_{\geq 0 }^n$ be downward closed. A map $f:\widehat{\Lambda} \to \bR_{\geq 0}^m$ is called {\bf monotonic} if the inequality $f(y) \le f(x)$ holds componentwise for all $y, x \in \widehat{\Lambda}$ such that $y \le x$ componentwise. The map is called {\bf subhomogeneous} if the inequality $f(\mu x)\le \mu f(x)$ holds componentwise for all $x\in \widehat{\Lambda}$ and all $0 \leq \mu \le 1$.
\end{definition}
In the course of \cref{sec:gauge,sec:disent,sec:reconnection-and-rotation,sec:masterf}, we will present explicit definitions of $\mathfrak{M}$ and $\widehat{\Omega}$, together with a proof that $\mathfrak{M}$ is monotonic and subhomogeneous on $\widehat{\Omega}$. The following key lemma then implies the convergence of $\widehat{b}^{(i)}$ satisfying \eqref{eq:bhatcase2}.
\begin{lemma}[Key lemma]\label{lem:key}\codetag
  Let $\widehat{b}^{(i)}$ be the sequence of hat-tensors generated by iterating the master function, assumed monotonic and subhomogeneous. Suppose \eqref{eq:bhatcase2} is realized. Then for all $i > i_0$,
  \begin{equation}
    \widehat{b}^{(i)} \leq \lambda^{i - i_0} \widehat{b}^{(i_0)},
  \end{equation}
  and thus $\widehat{b}^{(i)}$ converges to zero exponentially fast.
\end{lemma}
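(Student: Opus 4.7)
The plan is a direct induction on $i \ge i_0$, built out of the three structural ingredients announced just before the statement: monotonicity of $\mathfrak{M}$, subhomogeneity of $\mathfrak{M}$, and downward-closedness of the domain $\widehat{\Omega}$. The base case $i = i_0+1$ is literally the hypothesis \eqref{eq:bhatcase2}. For the inductive step, I would assume the claim at step $i$, namely $\widehat{b}^{(i)} \le \lambda^{i-i_0}\widehat{b}^{(i_0)}$ componentwise, and then propagate this bound to step $i+1$ by pushing both sides through $\mathfrak{M}$.

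Concretely, I would first invoke monotonicity of $\mathfrak{M}$ (Def.~\ref{def:mondandhom}) to get
\begin{equation}
\widehat{b}^{(i+1)} \,=\, \mathfrak{M}(\widehat{b}^{(i)}) \,\le\, \mathfrak{M}\!\left(\lambda^{i-i_0}\widehat{b}^{(i_0)}\right),
\end{equation}
and then subhomogeneity, applied with $\mu = \lambda^{i-i_0} \in (0,1]$, to rewrite the right-hand side as
\begin{equation}
\mathfrak{M}\!\left(\lambda^{i-i_0}\widehat{b}^{(i_0)}\right) \,\le\, \lambda^{i-i_0}\mathfrak{M}(\widehat{b}^{(i_0)}) \,=\, \lambda^{i-i_0}\widehat{b}^{(i_0+1)} \,\le\, \lambda^{i+1-i_0}\widehat{b}^{(i_0)},
\end{equation}
where the last step uses the hypothesis \eqref{eq:bhatcase2} once more. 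Chaining these inequalities closes the induction, and $\widehat{b}^{(i)} \to 0$ exponentially is then immediate from $\lambda<1$.

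The only point requiring some care — and which I expect to be the sole non-trivial aspect of the write-up — is verifying that every argument fed to $\mathfrak{M}$ actually lies in $\widehat{\Omega}$, since that is the set on which monotonicity and subhomogeneity are asserted. This is exactly what downward-closedness (Def.~\ref{def:downward}) delivers: the inductive hypothesis gives $\widehat{b}^{(i)} \le \widehat{b}^{(i_0)}$, and obviously $\lambda^{i-i_0}\widehat{b}^{(i_0)} \le \widehat{b}^{(i_0)}$, so both tensors are bounded componentwise by the element $\widehat{b}^{(i_0)} \in \widehat{\Omega}$ and hence belong to $\widehat{\Omega}$ themselves. With this bookkeeping explicit, the argument is essentially one line of induction, and I do not anticipate any deeper obstacle.
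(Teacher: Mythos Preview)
Your proposal is correct and follows exactly the paper's own proof: induction on $i$, with the base case given by \eqref{eq:bhatcase2} and the inductive step chaining monotonicity, subhomogeneity, and one more use of \eqref{eq:bhatcase2}. Your explicit check that all arguments lie in $\widehat{\Omega}$ via downward-closedness is a welcome bit of care that the paper leaves implicit.
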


\begin{proof}
  We proceed by induction on $i$. For $i=i_0+1$ the inequality holds by \eqref{eq:bhatcase2}. The induction step $i\to i+1$ is a consequence of monotonicity and subhomogeneity:
  \begin{align}
    \widehat{b}^{(i+1)} & = \mathfrak{M}(\widehat{b}^{(i)}) \leq \mathfrak{M}(\lambda^{i- i_0} \widehat{b}^{(i_0)})    &  & \text{(monotonicity)} \\
                    & \leq \lambda^{i - i_0} \mathfrak{M}(\widehat{b}^{(i_0)})                             &  & \text{(subhomogeneity)} \\
                    & \equiv \lambda^{i - i_0} \widehat{b}^{(i_0+1)} \leq \lambda^{i+1 - i_0} \widehat{b}^{(i_0)},
  \end{align}
  completing the induction.
\end{proof}

The following simple facts about monotonic, subhomogeneous maps will be useful below.
\begin{lemma} \label{lem:monsub}
	\begin{enumerate}
		 \item[(a)]  Any monomial of degree $\ge 1$ in components of $\widehat{b}$ is monotonic and subhomogeneous.
		 \item[(b)] The sum of two monotonic, subhomogeneous maps is monotonic and subhomogeneous.
		 \item[(c)] The composition of two monotonic, subhomogeneous maps is monotonic and subhomogeneous.
		 \item[(d)] The product of a nonnegative monotonic, subhomogeneous map and a nonnegative monotonic real-valued function is monotonic and subhomogeneous.
		 \end{enumerate}
		 \end{lemma}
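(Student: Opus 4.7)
All four parts reduce to elementary manipulations with componentwise inequalities on $\bR_{\ge 0}^n$, using only the definitions of monotonicity and subhomogeneity from \cref{def:mondandhom}. The plan is to dispatch the parts in the order (a), (b), (c), (d), since each later part is a straightforward combination of the ideas of the preceding ones. There is no real obstacle, but care is needed in (c) to verify that the argument of the outer map is always inside its domain, and in (d) to exploit monotonicity of the scalar factor in the subhomogeneity step.

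For part (a), I would write a monomial of degree $d \ge 1$ in the form $M(\widehat{b}) = \prod_i \widehat{b}_i^{d_i}$ with $d_i \ge 0$ and $\sum_i d_i = d$. Monotonicity follows since for $0 \le y \le x$ each factor satisfies $y_i^{d_i} \le x_i^{d_i}$, and the product of nonnegative factors preserves the inequality. Subhomogeneity is immediate: $M(\mu x) = \mu^d M(x) \le \mu M(x)$ because $\mu \in [0,1]$ and $d \ge 1$. Part (b) is one line: if $f_1,f_2$ are monotonic and subhomogeneous then $(f_1+f_2)(y) \le (f_1+f_2)(x)$ by adding the two componentwise inequalities, and $(f_1+f_2)(\mu x) \le \mu f_1(x) + \mu f_2(x) = \mu(f_1+f_2)(x)$.

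For part (c), consider $g \circ f$ with $f : \widehat{\Lambda}_1 \to \widehat{\Lambda}_2$ and $g : \widehat{\Lambda}_2 \to \bR_{\ge 0}^k$, both monotonic and subhomogeneous on downward-closed domains. Monotonicity of the composition follows by chaining: $y \le x$ gives $f(y) \le f(x)$ by monotonicity of $f$, and then $g(f(y)) \le g(f(x))$ by monotonicity of $g$. For subhomogeneity, one writes
\begin{equation}
g(f(\mu x)) \le g(\mu f(x)) \le \mu\, g(f(x)),
\end{equation}
where the first inequality uses monotonicity of $g$ together with $f(\mu x) \le \mu f(x)$, and the second uses subhomogeneity of $g$. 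Here the mild subtlety I would flag is that $\mu f(x)$ must lie in the domain $\widehat{\Lambda}_2$ of $g$, which is guaranteed because $\widehat{\Lambda}_2$ is downward closed and $0 \le \mu f(x) \le f(x) \in \widehat{\Lambda}_2$.

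For part (d), let $h(x) = f(x)\, g(x)$ with $f$ nonnegative, monotonic and subhomogeneous, and $g \ge 0$ monotonic (but not assumed subhomogeneous). Monotonicity is routine: $0 \le y \le x$ gives $f(y) \le f(x)$ componentwise and $0 \le g(y) \le g(x)$ in $\bR$, so their product (componentwise) satisfies $f(y)g(y) \le f(x)g(x)$ since all quantities are nonnegative. For subhomogeneity, I would argue
\begin{equation}
f(\mu x)\, g(\mu x) \le \mu f(x)\, g(\mu x) \le \mu f(x)\, g(x),
\end{equation}
where the first step uses subhomogeneity of $f$ and $g(\mu x) \ge 0$, and the second step uses $\mu x \le x$ combined with monotonicity of $g$ (this is exactly where monotonicity of $g$ substitutes for the missing subhomogeneity assumption on $g$). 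This completes the plan; the main thing to be careful about is keeping track of which factor needs which property at each step of part (d).
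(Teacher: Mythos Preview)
Your proof is correct and complete; each of the four parts is handled carefully, including the domain check in (c) via downward closure and the correct identification in (d) of where monotonicity of $g$ substitutes for subhomogeneity. The paper itself omits the proof of this lemma entirely, so there is no approach to compare against---your writeup is simply a fully worked version of what the authors deemed routine.
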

We omit the proof. We now proceed to the construction of the 2x1 map on the nonlinear level, and of the associated master function.

\subsection{Gauge transformation}
\label{sec:gauge}
We repeat the analysis of \cref{sec:lin-gauge-transformation} at the full nonlinear level. We keep the definitions of $g_h$ and $g_v$ given in \eqref{eq:ghgv}. In addition, we define the scalars
\begin{equation}
	\beta_h = \myinclude[scale=0.9]{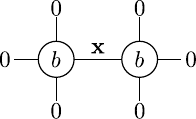}, \quad
	\beta_v = \myinclude[scale=0.9]{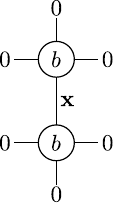}.
\end{equation}
It is easily checked that with $\beta = \beta_h, \beta_v$
\begin{equation}
	g^3 = - \beta g.
	\label{gcubed}
\end{equation}
At the nonlinear level, Eqs.~\eqref{eq:Glin} for $G_h,G_v$ are replaced by the exact equalities:
\begin{equation}
	G = \mathds{1}_V + g + \frac{1}{1+\sqrt{1-\beta}} \, g^2, \quad
	G^{-1} = \mathds{1}_V - g + \frac{1}{1+\sqrt{1-\beta}} \, g^2. \quad
	\label{gauge-trans}
\end{equation}
The first equation is the definition of $G$. The second equation is easily checked using \eqref{gcubed}. We will be assuming that $|\beta|<1$, and so $G,G^{-1}$ are analytic functions of $b$.

Let us briefly discuss how this is related to \cite{paper1}. There, the gauge transformation was done with
$G_{\rm there} = \exp(g)$,
while our definition corresponds to taking $G=\exp(c g)$ with $c=\arcsin(\sqrt{\beta})/\sqrt{\beta}=1+O(\beta)$. The two expressions agree to linear order in $g$ and give the same needed cancellations. We will use Eq.~\eqref{gauge-trans} as it is simpler for the quantitative analysis of higher-order terms.

We will apply $G_h,G_v$ and their inverses to the horizontal, vertical legs of $A$, respectively, as in Fig.~\ref{fig:bigfig}(a). We denote the resulting tensor by $A_{g}$. Partition function is preserved (\cite{paper2}, Sec.2.4.3):
\beq
\boxed{Z(A,\ell_x\times \ell_y)=Z(A_{g},\ell_x\times \ell_y)}\,.\label{eq:Zgauge}
\eeq
Note that $A_{g}$ is not normalized; we will worry about normalization at the end. As discussed in \cref{sec:lin-gauge-transformation}, if we expand $A_{g}$ in $g$, the first order in $g$ part will cancel the sectors of $b$ given in \eqref{eq:set0}. We now discuss the main improvement with respect to \cite{paper1}: a graphical language which allows to make this first-order cancellation explicit, and to seamlessly express the bounds on the remainders. The same language will be later used for disentangling, in a more complicated setting.

We start by writing each of $G_h,G_v$ as sums: $G_h = \sum_{i=1}^5 h_i$, $G_v = \sum_{i=1}^5 v_i$. The tensors $h_i$, $v_i$, are given in \cref{fig:vtens,fig:htens}; they represent various terms in \eqref{gauge-trans}. Note that we find it convenient to split the identity and $g$ as a sum of two terms each. Furthermore we define:
\begin{equation}
  h_i' = h_i, \, v_i = v_i' \,\, {\rm for} \,\, i=1,2,5, \quad \quad
  h_i' = -h_i, \, v_i = -v_i' \,\, {\rm for} \,\, i=3,4.
\end{equation}
Then $G_h^{-1} = \sum_{i=1}^5 h_i'$, and $G_v^{-1} = \sum_{i=1}^5 v_i'$.

\begin{figure}[H]
	\centering
	\codetag\includegraphics[scale=0.8]{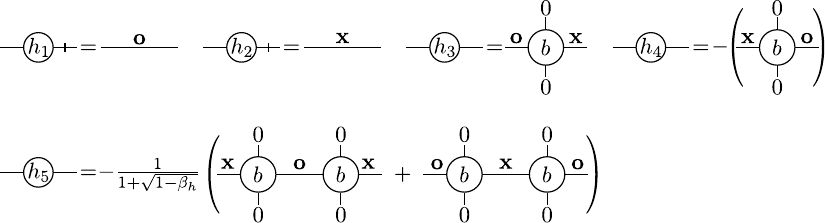}
	\caption{Definition of $h_i$ tensors. The lines labeled $\o$ and $\x$ in the definitions of $h_1$ and $h_2$ represent orthogonal projectors onto the corresponding subspaces. The right legs of $h_1,h_2$ are distinguished, hence the tick mark.}
	\label{fig:htens}
\end{figure}

\begin{figure}[H]
  \centering
  \codetag\includegraphics[scale=0.8]{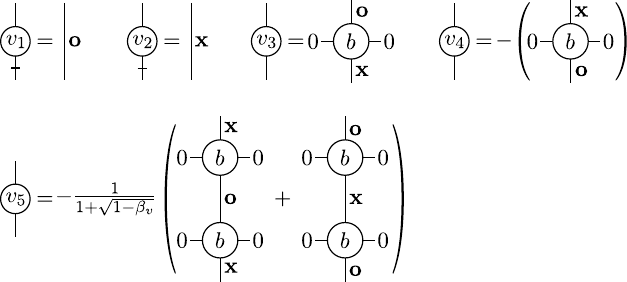}
  \caption{Definition of $v_i$ tensors. The bottom leg of $v_1,v_2$ is distinguished.}
  \label{fig:vtens}
\end{figure}

Now $A_{g}$ is given by the sum over $i,j,k,l$ ranging from $1$ to $5$ of the following diagrams:
\begin{equation}
	\label{eq:Ag0}
  A_{g} = \sum_{ijkl} D^{(ijkl)},\qquad D^{(ijkl)} :=\myinclude{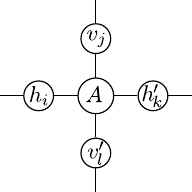}\,.\codetag
\end{equation}
As mentioned, there is some cancellation in $A_{g}$: terms first-order in $g$ cancel a part of $b$. We make this explicit by defining a set of index combinations:
\begin{equation}
  \mathcal{C} = \{2111, 4111, 1211, 1411, 1121, 1131, 1112, 1113 \}. \codetag
\end{equation}
The cancellation is now phrased by saying that the part of the sum in \eqref{eq:Ag0} over $ijkl\in \mathcal{C}$ is zero. For example the 2111 diagram picks up $b_{\x \o \o \o}$ while the 4111 diagram gives $h_4$ contracted with $A_*$ (recall that $b_{0000}=0$) which is exactly $-b_{\x \o \o \o}$. Similarly the other three pairs of diagrams cancel pairwise. We can therefore drop the canceled terms and restrict the sum (\ref{eq:Ag0}) to $ijkl\notin \mathcal{C}$.

We now proceed to bounding $A_{g}$ i.e.~finding its hat-tensor $\widehat{A}_g$. It will be expressed in terms of a hat-tensor for $A$, which we write in the form
\beq
\widehat A = \widehat A_*+\widehat b \codetag
\eeq
where $(\widehat{A}_*)_{\o \o \o \o}=1$ and all other components are zero, while $(\widehat b)_{\o \o \o \o}=0$.  We define hat or check-tensors for all other tensors in the diagrams $D^{(ijkl)}$:
\begin{itemize}
\item
 For $i=3,4,5$ the tensors $h_i, h_i'$,$v_i, v_i'$ are Hilbert-Schmidt. We define their hat-tensors from their expressions in Figs.~\ref{fig:htens},\ref{fig:vtens}. Using Lemma \ref{hatlemma}, these hat-tensors are expressed in terms of (contractions of) various components of $\widehat b$. Furthermore, the coefficients $1/(1+\sqrt{1-\beta})$ that appear in $h_5, h'_5, v_5, v'_5$ need to be replaced by appropriate upper bounds on their absolute values in terms of $\widehat{b}$. We have $|1/(1+\sqrt{1-\beta})|\le 1/(1+\sqrt{1-|\beta|})$ for $|\beta|<1$, and the r.h.s.~is an increasing function of $|\beta|$. So the needed upper bounds are obtained by replacing $\beta_h, \beta_v$ by upper bounds  $\widehat{\beta}_h,\widehat{\beta}_v$ on the absolute values of $\beta_h, \beta_v$, given by\footnote{Recall that $\x=\d\oplus\u\oplus \r$ on the horizontal legs; see Sec.~\ref{sec:sectors}. Contraction over such $\x$ implies a sum over $\u,\d, \r$. E.g.~$\widehat{b}_{\o \o \x \o} \ \widehat{b}_{\x \o \o \o}$ stands for $\sum_{a =\u,\d, \r}\widehat{b}_{\o \o a \o} \ \widehat{b}_{a\o \o \o}$.}
 \begin{equation}
 	\widehat{\beta}_h = \widehat{b}_{\o \o \x \o} \ \widehat{b}_{\x \o \o \o}, \quad \quad
 	\widehat{\beta}_v = \widehat{b}_{\o \o \o \x} \ \widehat{b}_{\o \x \o \o}\,, \codetag
 \end{equation}
 as long as $\widehat \beta_h<1$ and $\widehat \beta_v<1$, which we assume to be the case.
 \item
 Tensors $h_i, h'_i, v_i, v'_i$, $i=1,2$ are orthogonal projectors, so we define their check-tensors (Remark \ref{checkhatremark}).
 We declare the legs that are contracted with $A$ to be
 distinguished, denoted with a tick mark in Figs.~\ref{fig:htens},\ref{fig:vtens} for $h_1,v_1,h_2,v_2$.

On the horizontal legs $V=\o\oplus \d\oplus\u\oplus \r$. The check-tensors $\widecheck{h}_2, \widecheck{h}'_2$ are $4\times 4$ matrices with the $\u\u$, $\d\d$, $\r\r$ elements equal to 1, and all other elements 0. The check-tensors $\widecheck{h}_1, \widecheck{h}'_1$ are $4\times 4$ matrices with a single nonzero element $\o\o$ equal to 1.

 On the vertical legs $V=\o\oplus \x$. The check-tensors $\widecheck{v}_2$ $\widecheck{v}'_2$ are $2\times 2$ matrices with a single nonzero element $\x\x$ equal to 1. The check-tensors $\widecheck{v}_1$ $\widecheck{v}'_1$ are $2\times 2$ matrices with a single nonzero element $\o\o$ equal to 1.
\end{itemize}
With the given definition of distinguished legs, all diagrams $D^{(ijkl)}$ are allowed contractions (Def.~\ref{def:allowed}). So by Lemma \ref{checklemma} we obtain the hat-tensor for each diagram by replacing every tensor in the diagram by its hat or check-tensor, as defined above, and contracting these tensors. Summing the hat-tensors $\widehat{D}^{(ijkl)}$ over $ijkl\notin\mathcal{C}$, we obtain the hat-tensor $\widehat{A}_g$.

Here is a summary of our new graphical language. \emph{We represent a full-tensor as a sum of diagrams. We render cancellations explicit by identifying and removing the set of diagrams which sum to zero. After that a hat-tensor is obtained by simply putting hats, or checks, on all the full-tensors involved. If there are coefficients in the diagrammatic expression for the full-tensor, we also need to replace these coefficients by bounds on their absolute values.}

There is one final step in our gauge transformation. The gauge transformations can change the $0000$ component of our tensor by a small amount. To restore our normalization condition that the $0000$ component is $1$, we write
$A_{g}=\cN_1 A_1$,
where $\cN_1 = (A_{g})_{0000}$ and $A_1$ is normalized, $A_{1}=A_*+b_1$. Factoring out $\calN_1$ from the partition function, we can rewrite \eqref{eq:Zgauge} as:
\beq
\boxed{Z(A,\ell_x\times \ell_y)=\calN_1^{\rm Vol} Z(A_1,\ell_x\times \ell_y)\,,\quad{\rm Vol}=\ell_x\ell_y}\,.\label{eq:Zgauge1}
\eeq

We now turn to finding a hat-tensor $\widehat{A}_{1}$, which we write as
\beq
\widehat A_{1}=\widehat A_*+\widehat b_1 \codetag
\eeq
where $(\widehat{A}_*)_{\o\o\o\o}=1$ and has all other components are zero, while $(\widehat b_1)_{\o\o\o\o}=0$. We also write $\widehat{A}_g$ in a similar form, separating the $\o\o\o\o$ component:
\begin{equation}
	\widehat{A}_g =\widehat{\calN}_1 \, \widehat{A}_* + \widehat{b}_g,\qquad \widehat{\calN}_1:=(\widehat{A}_g)_{\o \o \o \o}\,. \codetag
\end{equation}
Since $A_1=A_{g}/\calN_1$, we can set $\widehat b_1=\widehat{b}_g/|\calN_1|_{plb}$ where  $|\calN_1|_{plb}$ is a \emph{positive lower} bound for $|\calN_1|$. Note that until now we only considered upper bounds on absolute values (or norms); e.g.~$\widehat{\calN}_1$ is an upper bound on $|\calN_1|$. However a positive lower bound on $|\calN_1|$ can be inferred without additional work, by the following argument. Our construction represents $\calN_1=1+ n_1$ where $n_1$ is a sum of various contractions of $b$ times various factors. The upper bound has the form $\widehat{\calN}_1=1+\widehat n_1$, where $\widehat n_1$ is an upper bound on $|n_1|$. Therefore, the needed positive lower bound has the form $|\calN_1|_{plb}=1 -\widehat{n}_1= 2-\widehat{\calN}_1$. (It will be positive since we will impose $\widehat{\calN}_1<2$.)

To summarize, the considerations of this section define the following functions:
\begin{align}
	&\widehat{\beta}_h,\widehat{\beta}_v:\hat{\mathbb{H}}_0\to \mathbb{R}_{\ge 0}, \\
	&\widehat{\calN}_1:\{ \hat b\in \hat{\mathbb{H}}_0: \widehat{\beta}_h,\widehat{\beta}_v<1\}\to \mathbb{R}_{\ge 0},\\
	& \widehat b_1:\hat{\Omega}_1\to \hat{\mathbb{H}}_0,\quad \widehat\Omega_1 := \{ \widehat{b}\in \hat{\mathbb{H}}_0: \widehat{\beta}_h < 1,\ \widehat{\beta}_v < 1,\ \widehat{\mathcal{N}}_1 < 2 \},\codetag \\
	&\calN_1:\Omega_1\to \mathbb{C},\qquad b_1:\Omega_1\to \mathbb{H}_0,\qquad \Omega_1= \{b \in \mathbb{H}_0: \exists\ \widehat{b}\text{ for }b\text{ such that } \widehat{b} \in \widehat\Omega_1\}.
\end{align}

\begin{proposition}\label{prop:summary-gauge}
(a) The sets on which $\widehat{\beta}_h$, $\widehat{\beta}_v$, $\widehat{\calN}_1$, $\widehat{b}_1$ are defined are downward closed. $\widehat{\beta}_h$, $\widehat{\beta}_v$, $\widehat b$ are monotonic and subhomogeneous. $\widehat{\calN}_1$ is monotonic.\\[5pt]
\noindent (b)  The functions $\calN_1$, $b_1$ are analytic on $\Omega_1$. \\[5pt]
\noindent (c) We have $b_1(0)=0$, $\calN_1(b)=1+O(b^2)$.\\[5pt]
\noindent (d) If $\widehat{b}$ is a hat-tensor for $b$, then $\widehat{b}_1(\widehat b)$ is a hat-tensor for $b_1(b)$. We have $\calN_1\ne 0$ on $\Omega_1$, and the bound:
	\beq
	|\calN_1(b)-1|\le \widehat{\calN}_1(\widehat b)-1\,.
	\eeq
\end{proposition}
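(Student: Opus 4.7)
My plan is to verify the four parts in order, using the diagrammatic structure of $\widehat{A}_g$ constructed in this section together with Lemmas~\ref{hatlemma}, \ref{checklemma}, and \ref{lem:monsub}. For part~(a), the maps $\widehat{\beta}_h,\widehat{\beta}_v$ are degree-$2$ monomials in components of $\widehat b$, hence monotonic and subhomogeneous by Lemma~\ref{lem:monsub}(a); the set $\{\widehat\beta_h,\widehat\beta_v<1\}$ is the preimage of a downward-closed set under a monotonic map, so it is downward closed. I would then analyze each $\widehat{D}^{(ijkl)}$ separately: each is an allowed contraction of $\widehat A=\widehat A_*+\widehat b$ with the hat/check-tensors of $h_i,v_j,h_k',v_l'$. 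The $\widecheck{h}_{1,2},\widecheck{v}_{1,2}$ are constants, the $\widehat{h}_{3,4},\widehat{v}_{3,4}$ are linear in $\widehat b$, and $\widehat{h}_5,\widehat{v}_5$ are a $\widehat b$-bilinear times the scalar $(1+\sqrt{1-\widehat\beta})^{-1}$, which is monotonic in $\widehat b$ on $\{\widehat\beta<1\}$. Combining via Lemma~\ref{lem:monsub}(a)--(d), each $\widehat{D}^{(ijkl)}$ is monotonic, and only the $(1,1,1,1)$ diagram contributes a pure constant (equal to $1$ at the $\o\o\o\o$-entry), while every remaining contribution is a sum of degree-$\ge 1$ monomials in $\widehat b$, hence also subhomogeneous. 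This yields $\widehat{\calN}_1=1+\widehat n_1$ with $\widehat n_1$ monotonic and subhomogeneous, and $\widehat{b}_g:=\widehat{A}_g-\widehat{\calN}_1\widehat{A}_*$ componentwise monotonic and subhomogeneous. Downward closure of $\widehat\Omega_1$ is immediate from monotonicity of $\widehat{\calN}_1$, and $\widehat{b}_1=\widehat{b}_g/(1-\widehat n_1)$ is monotonic by Lemma~\ref{lem:monsub}(d); for subhomogeneity, $\widehat{b}_g(\mu\widehat b)\le \mu\widehat{b}_g(\widehat b)$ combined with $1-\widehat n_1(\mu\widehat b)\ge 1-\widehat n_1(\widehat b)$ (from monotonicity of $\widehat n_1$) gives $\widehat{b}_1(\mu\widehat b)\le \mu\widehat{b}_1(\widehat b)$ directly.

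For parts~(b) and~(c), Eq.~\eqref{gauge-trans} expresses $G_h,G_v$ and their inverses as polynomials in $g_h,g_v$ (linear in $b$) with coefficients analytic in the scalars $\beta_h,\beta_v$ (quadratic in $b$) on $\{|\beta_h|,|\beta_v|<1\}$, so $A_g$ and $\calN_1=(A_g)_{0000}$ are analytic in $b$ on $\Omega_1$; analyticity of $b_1=A_g/\calN_1-A_*$ then follows once part~(d) supplies $\calN_1\ne 0$. Plugging $b=0$ yields $g_h=g_v=0$, $G_h=G_v=\mathds{1}$, hence $A_g=A_*$, $\calN_1=1$, $b_1=0$. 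For $\calN_1(b)=1+O(b^2)$, I would observe that the linear-in-$g$ part of $(A_g)_{0000}$ is a sum of terms of the form $\pm (gA_*)_{0000}$ with a single $g_h$ or $g_v$ acting on one leg; since by \eqref{eq:ghgv} $g_h$ and $g_v$ have no $\o\o$ matrix elements, these terms vanish, and all remaining contributions are $O(g^2)=O(b^2)$.

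For part~(d), the hat-tensor claim is a direct application of Lemma~\ref{checklemma}: each $D^{(ijkl)}$ is an allowed contraction with the distinguished-leg assignment from Figs.~\ref{fig:htens}--\ref{fig:vtens}, so $\widehat{D}^{(ijkl)}$ bounds $\|(D^{(ijkl)})_{abcd}\|$ sectorwise, and summing over $ijkl\notin\mathcal C$ via the triangle inequality gives $\widehat{A}_g$ as a hat-tensor for $A_g$. The $\o\o\o\o$ component yields $|\calN_1|\le \widehat{\calN}_1$; writing $\calN_1=1+n_1$ with $n_1=\sum_{(ijkl)\notin\mathcal C,\,\ne (1111)}(D^{(ijkl)})_{0000}$ and bounding each term by the corresponding $\widehat{D}^{(ijkl)}$ entry gives the sharper bound $|\calN_1-1|=|n_1|\le \widehat n_1=\widehat{\calN}_1-1<1$ (using $\widehat{\calN}_1<2$), so $\calN_1\ne 0$ and $|\calN_1|_{\mathrm{plb}}:=2-\widehat{\calN}_1$ is a positive lower bound. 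Dividing the componentwise bound by $|\calN_1|_{\mathrm{plb}}$ for $abcd\ne \o\o\o\o$ yields $\|(b_1)_{abcd}\|\le (\widehat{b}_1)_{abcd}$, while both sides vanish at $\o\o\o\o$ by construction.

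The most delicate step is the subhomogeneity of $\widehat{b}_1$, since division by $|\calN_1|_{\mathrm{plb}}$ is not itself subhomogeneous: the argument works only because the numerator $\widehat{b}_g$ consists entirely of degree-$\ge 1$ monomials (so it picks up the factor $\mu$), while the denominator only needs the weaker property of being bounded below monotonically. The other steps are mechanical bookkeeping once the graphical language of this section is in place.
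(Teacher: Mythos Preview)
Your proposal is correct and follows essentially the same approach as the paper's proof, which is a terse one-paragraph sketch pointing to Lemma~\ref{lem:monsub} for part~(a), to analyticity of $1/(1+\sqrt{1-\beta})$ and $1/\calN_1$ for part~(b), and declaring (c),(d) to be immediate from definitions. Your write-up simply fills in the bookkeeping the paper omits; in particular your explicit identification of the ``delicate step'' (subhomogeneity of $\widehat b_1$ despite division by $|\calN_1|_{\mathrm{plb}}$) is exactly the content of Lemma~\ref{lem:monsub}(d), which the paper invokes without unpacking.
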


\begin{proof}
	(a) follows with the help of Lemma \ref{lem:monsub}. Note that $1/(1+\sqrt{1-\widehat\beta_h})$, $1/(1+\sqrt{1-\widehat\beta_v})$ are nonnegative monotonic real functions of $\hat b$ as long as $\hat\beta_h,\hat\beta_v<1$, so that Lemma \ref{lem:monsub}(d) can be applied. (b) follows from the fact that on $\Omega_1$ we have $|\beta_h|,|\beta_v|<1$, $|\calN_1|>0$, so that $1/(1+\sqrt{1-\beta_h})$, $1/(1+\sqrt{1-\beta_v})$, $1/\calN_1$ are analytic. Multiplying by these factors were the only non-polynomial operations in the definitions of $\calN_1$, $b_1$. (c) follows from definitions. (d) is just a restatement of things already discussed.
\end{proof}

\subsection{Disentangling and splitting}
\label{sec:disent}
Here we will repeat the linearized discussion in \cref{sec:lin-disentangling} at the fully nonlinear level. Recall that disentangling and splitting is applied to the tensor $A_1=A_*+b_1$ which is the end result of the gauge transformation step. As in \cref{sec:lin-disentangling}, we rename $A_1$ as $A$ and $b_1$ as $b$. This will avoid writing index 1 on many tensors below. We will revert to $A_1$ and $b_1$ in \cref{sec:hat-tensor-estimates}.

Disentangling and splitting is represented in \cref{eq:AADD2LR}, which is a copy of Fig.~\ref{fig:bigfig}(b) up to the described renaming. So, we divide the lattice into horizontal $2 \times 1$ blocks and contract the two $A$'s in each block. Our disentangler $D$ will be a tensor with two legs on top and two on the bottom. We think of it as a matrix acting from the space corresponding to the two bottom legs to the space corresponding to the top two legs. In our network we insert $I= D^{-1} D$ on each pair of vertical legs between two blocks. We think of the $D$ above a block and the $D^{-1}$ below the same block as being in that block. The resulting block is shown in the left side of  \cref{eq:AADD2LR}. (The right side of the figure is for later use.)

We want to choose the disentangler $D$ to disentangle the contraction of two $A$'s in the block. The major part of this contraction comes from the 0 index, but $\x$ indices also contribute. Crudely speaking, disentangling means reducing the contribution of $\x$ indices to the contraction. Concretely, we will construct $D$ to cancel the diagrams shown in \cref{eq:aa-prob}. In this figure the contracted legs are never in $\o$, so $\AHT$ does not contribute and so we have replaced $A$ by $b$. The reason for cancelling these particular diagrams was explained in the linearized analysis. A discussion of the type of disentangler we are using can be found in Section 2.4.4 of \cite{paper2}. In particular Remark 2.6 discusses the origin of the name disentangler.\footnote{Ref.~\cite{Evenbly-Vidal} introduced disentanglers into numerical algorithms of tensor network RG. Ref.~\cite{paper1} first used disentanglers in a rigorous tensor RG study. For a closely related use of disentanglers in quantum lattice models see \cite{Vidal:2007hda}.}

If we group the top two legs together and the bottom two legs together in \cref{eq:aa-prob}, then the problem of cancelling these diagrams is analogous to how we cancelled certain diagrams in the original $A$ with a gauge transformation. So our disentangler $D$ is quite similar to the gauge transformation $G$ defined in Eq.~\eqref{gauge-trans}. It is defined by (cf. \eqref{eq:Dlin}):
\begin{equation}
	\label{eq:Ddef}
  D = \mathds{1}_{V\otimes V} + X + \frac{1}{1 + \sqrt{1 - \alpha}} \, X^2, \qquad D^{-1} = \mathds{1}_{V\otimes V} - X + \frac{1}{1 + \sqrt{1 - \alpha}} \, X^2,
\end{equation}
where $X$ is given in \cref{eq:Xdeflin}, and
\begin{equation}
  \alpha=\myinclude[scale=0.9]{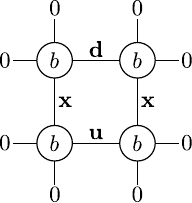}.
\end{equation}
Below we will make sure that $|\alpha|<1$, so that \eqref{eq:Ddef} are well defined and analytic in $b$. Here, $X$ should be considered as a matrix acting from the space of its bottom legs to the space of its top legs. Similarly to the gauge transformation case, the formula for $D^{-1}$ works because $X^3 = -\alpha X$. If we compute the contraction of $D$, $D^{-1}$ and the two $A$'s in \cref{eq:AADD2LR} to first order in $X$, we see that the diagrams from \cref{eq:aa-prob} are cancelled.

\begin{figure}[h]
  \centering
  \codetag\includegraphics[scale=0.8]{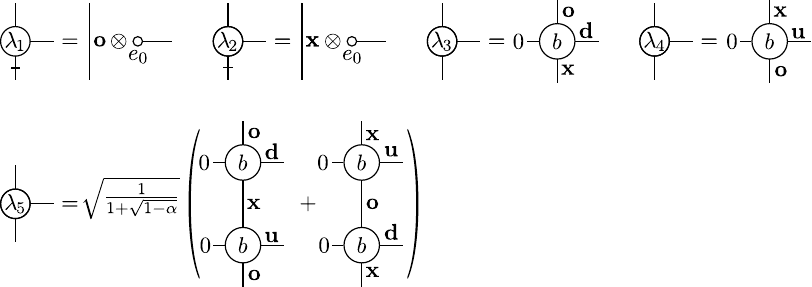}
  \caption{Definition of $\lambda_i$ tensors. Here, $\lambda_1$ and $\lambda_2$ are Kronecker products of the shown projectors with $e_0$, the basis vector of $\o$, a 1-tensor denoted by the circle with label $e_0$.}
  \label{fig:lambdas}
\end{figure}

\subsubsection{Decomposition of disentangler}\label{sec:decomposition-of-disentangler}
The next step is to decompose the disentangler $D$ into a sum of terms. This will be extremely helpful to accomplish two goals. One goal is to represent the contraction of two $A$'s with $D$ and $D^{-1}$ as the contraction of two new tensors $L$ and $R$ as shown in \cref{eq:AADD2LR}. We refer to this step as ``splitting.''
The second goal is to explicitly realize the discussed cancellation (see Sec.~\ref{sec:the-oo-channel} below). The l.h.s.~of \cref{eq:AADD2LR} will become a sum of terms, some of which will cancel each other exactly.

So we introduce $(\lambda_i)_{i=1}^5$ tensors in \cref{fig:lambdas}. The right leg of these tensors lives in the Hilbert space $V_1:=V\oplus (V\otimes V)$. The sectors of $V_1$ have the form $a$ or $a\otimes b$ where $a,b$ are sectors of $V$. All tensor elements not shown in \cref{fig:lambdas} are assumed zero. In particular, tensors $\lambda_i$ with $i=1,2,3,4$ vanish when the right leg is in the $V\otimes V$ subspace of $V_1$, while $\lambda_5$ vanishes when the right leg is in the $V$ subspace of $V_1$.

We also define $\rho_i$ tensors in an analogous way but with all sector labels reflected with respect to a vertical line and tensors $\rho_4, \rho_5$ {\bf acquiring minus signs} (this is needed to reproduce minus signs of some diagrams in $X$ and $X^2$). With these definitions it's easy to check that
\begin{equation}
	\label{eq:Ddec}
	\myinclude{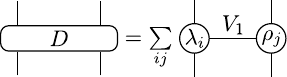},
\end{equation}
where we put $V_1$ label on the contracted leg to emphasize that it lives in a different Hilbert space.

Some comments are in order:
\begin{itemize}
	\item
Since $\lambda_5$ and $\rho_5$ are the only ones which have horizontal legs in the $V\otimes V$ part of $V_1$, they contract to each other but not with any other tensor. Their contraction reproduces the $X^2$ term in \eqref{eq:Ddef} including the prefactor.
\item
Since $\lambda_i,\rho_j$ with $i,j=1,2$ are the only ones which have $e_0$ horizontal legs, they only contract among themselves. Their contraction reproduces exactly the $\mathds{1}_{V\otimes V}$ term in $D$.
\item
Finally, $\lambda_3$ contracts only with $\rho_3$, and $\lambda_4$ only with $\rho_4$. These contractions reproduce the $X$ term in \eqref{eq:Ddef}.
\end{itemize}

We also define $\lambda'_i$ and $\rho'_i$ in an analogous way so that:
\begin{equation}
  \myinclude{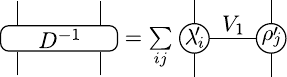}.
\end{equation}
They only differ from $\lambda_i$ and $\rho_i$ by some signs to account for the opposite sign of $X$ in $D^{-1}$; we skip the exact definition.

Substituting these expressions for $D$ and $D^{-1}$ into the left side of \cref{eq:AADD2LR} we get:
\begin{equation}\label{eq:dis-as-sum}
  \myinclude{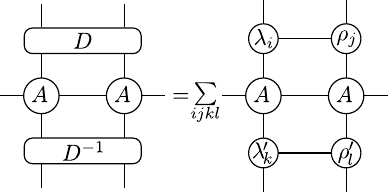}\codetag
\end{equation}

\subsubsection{Channels}

We now proceed to the splitting, i.e.~to defining the tensors $L$, $R$ in the r.h.s.~of \eqref{eq:AADD2LR}.
  Crudely speaking, splitting will be performed by cutting each diagram in \eqref{eq:dis-as-sum} across the three horizontal bonds and putting the left side into $L$ and the right side into $R$.
We represented the disentangler $D$ as a sum of contractions of $\lambda_i$ and $\rho_j$ so that we could perform this cutting.
Further tricks will accompany the cutting, whose role will be explained in due course.

As the first step we write the r.h.s.~of \cref{eq:dis-as-sum} as
$\sum_q H_q$ with the five terms $H_q$, $q\in\{z, \x \x,\o \x,\x \o,\o \o\}$ defined in
Fig.~\ref{fig:channels}, called channels. For the $\x \x,\o \x,\x \o,\o \o$, the name refers to the sectors on the horizontal legs. The $z$ ("zero") channel is different in that it is the horizontal contracted bonds which are restricted to $\o$. These channels are the nonlinear counterparts of the five channels with the same names considered in the linearized analysis in \cref{sec:l-r-and-the-norm-of-kss}.

\begin{figure}[h]
	\centering
	\includegraphics[scale=0.8]{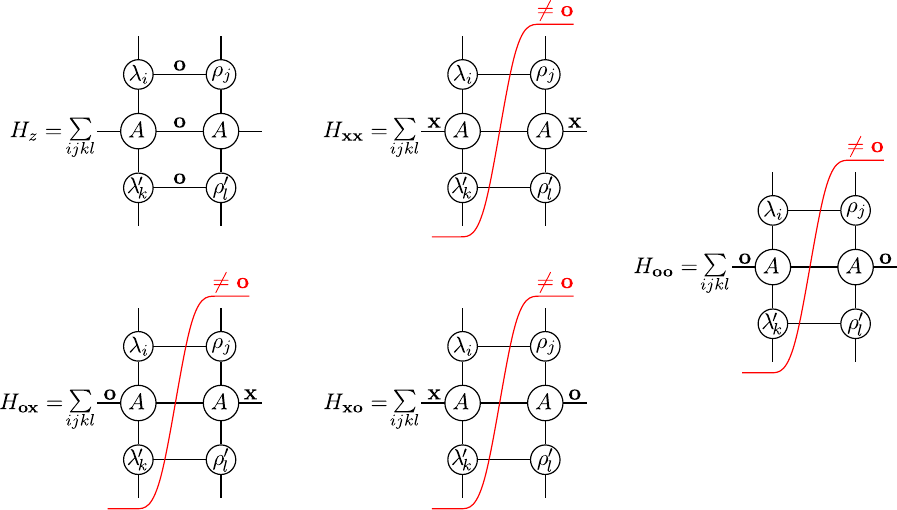}
	\caption{ Definition of channels. A line cutting through the bonds or legs with ``$\neq \o$'' label means that we forbid all these bonds or legs to be simultaneously in the $\o$ sector. In this particular case, this is the same as inserting there a projector to the orthogonal complement of $\o \otimes \o \otimes \o$ in $V_1\otimes V\otimes V_1$.
	}
	\label{fig:channels}
\end{figure}

We write each channel $H_q$ in the natural way as a contraction of two tensors denoted $L^{(0)}_q$ and $R^{(0)}_q$:\footnote{Extra manipulations will be needed before $L^{(0)}_q$ and $R^{(0)}_q$ are assembled into $L$ and $R$, hence the superscript $(0)$.}
\beq\label{eq:Hq}
\myinclude[scale=0.9]{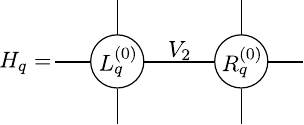}
\eeq
\begin{figure}[h]
	\centering
	\codetag\includegraphics[scale=0.8]{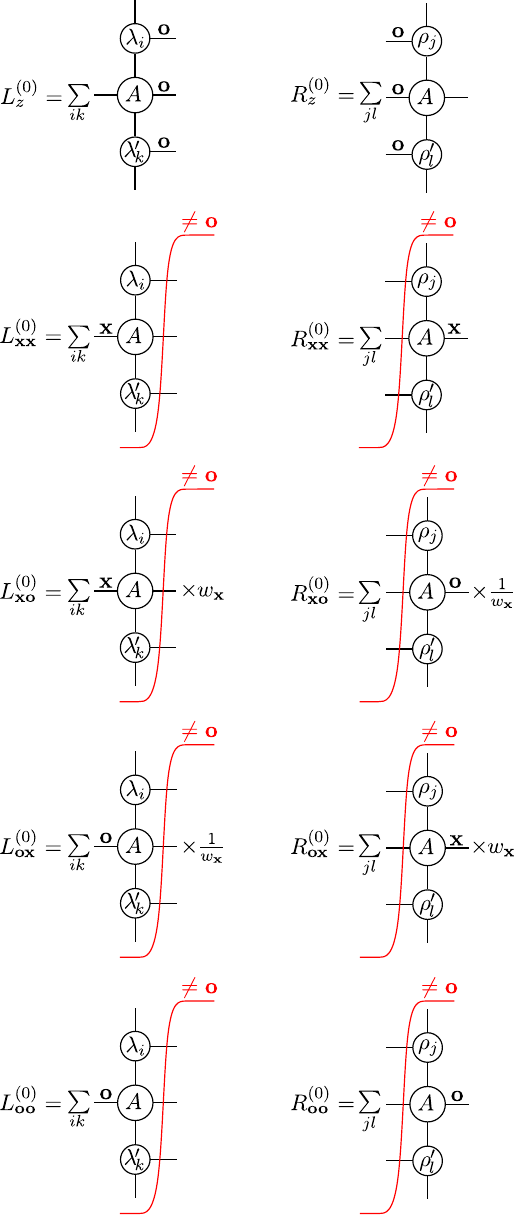}
	\caption{The definition of $L^{(0)}_q$, $R^{(0)}_q$ tensors. }
	\label{fig:LR}
\end{figure}
The involved steps are as follows (see Fig.~\ref{fig:LR} for the explicit definitions):
\begin{itemize}
	\item
	  We cut the three horizontal bonds in $H_q$. The cut bonds become horizontal legs of $L^{(0)}_q$ and $R^{(0)}_q$, living in $V_2 := V_1\otimes V\otimes V_1$. (We assume the ordering convention for the triple of bonds top, middle, bottom.) We emphasized this by $V_2$ label in \cref{eq:Hq}.
	\item
	We factor $\sum_{i,j,k,l=1}^5=\sum_{i,k=1}^5\sum_{j,l=1}^5$ in the definition of $H_q$, with $ik$ the indices of $\lambda,\lambda'$ and $jl$ the indices of $\rho,\rho'$. The sum over $ik$ goes into $L^{(0)}_q$, the sum over $jl$ into $R^{(0)}_q$.
	\item
	Furthermore, for the $\o \x$ and $\x \o$ channels we redistribute the weight
	of the diagram by the introduction of a parameter $w_\x>0$
        which multiples one of the factors $L^{(0)}_q$, $R^{(0)}_q$ and divides the other. We saw in \cref{sec:l-r-and-the-norm-of-kss} why this parameter is useful.
\end{itemize}

\subsubsection{Auxiliary space}\label{sec:auxiliary-space}

So we represented $H_q$ as a contraction of $L^{(0)}_q$ and $R^{(0)}_q$ for each $q$. We now wish to write the r.h.s. of \cref{eq:dis-as-sum}, which equals $\sum_q H_q$, as a contraction of $\sum_q L_q$ with $\sum_{q'} R_{q'}$. We could try to take $L_q \stackrel{?}{=} L^{(0)}_q$, $R_{q} \stackrel{?}{=} R^{(0)}_{q}$, but this would not work.
The diagonal terms $q=q'$ produce the five channels (good), but the cross terms $q \neq q'$
produce terms which do not exist in $\sum_q H_q$ (bad).

{ To avoid cross terms, we use the method of auxiliary vectors. We use the same auxiliary Hilbert space $V_{\rm aux}$ as in the linearized analysis, Eq.~\eqref{eq:Vaux0}.
The seven auxiliary vectors will be used for the same channels as there.}

For $q= z, \x \x, \o \x, \x \o$ (the $\o\o$ channel is considered separately below), we define
\beq
L_q = L^{(0)}_q \otimes e_q = \myinclude[scale=0.9]{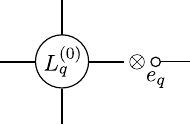},\qquad R_q = R^{(0)}_q \otimes e_q = \myinclude[scale=0.9]{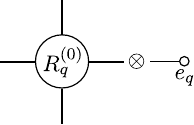}.
\label{eq:LRnot00}
\eeq
So, we think of $L_q$ (resp. $R_q$) as a 4-tensor, whose left (resp. right) leg lives in the Hilbert space
\beq
V_{3} := V_2\otimes V_{\rm aux} = V_1\otimes V\otimes V_1 \otimes V_{\rm aux}.
\label{eq:V3}
\eeq
Tensors $L_q$ and $R_q$ so defined have the same contraction as $L^{(0)}_q$ and $R_q^{(0)}$, i.e.~$H_q$. Furthermore, contractions between $L_q$ and $R_{q'}$, $q'\ne q$, vanish. Therefore, there are no unwanted cross terms, and we have ($\stackrel{V_3}{\text{\bf ---}}$ denotes contraction over $V_3$):
\beq
H_z + H_{\x\x} + H_{\o\x} + H_{\x\o} = (L_z + L_{\x\x} +L_{\o\x} + L_{\x\o})\stackrel{V_3}{\text{\bf ---}}(R_z + R_{\x\x} + R_{\o\x} + R_{\x\o}).
\label{eq:not00}
\eeq

\subsubsection{The $\o\o$ channel}\label{sec:the-oo-channel}
Just as in the linearized analysis, this needs a special treatment as this is the channel where we have exact
cancellations resulting from our choice of the disentangler $D$, see \eqref{eq:aa-prob} and \eqref{eq:Xdeflin}.

As a first step, we rewrite $L^{(0)}_{\o \o}$ and $R^{(0)}_{\o \o}$ in Fig.~\ref{fig:LR}, by splitting the middle horizontal legs into sectors $a$, summed over $\o, \d, \u, \r$:
\beq\label{eq:L00def}
\myinclude[scale=0.9]{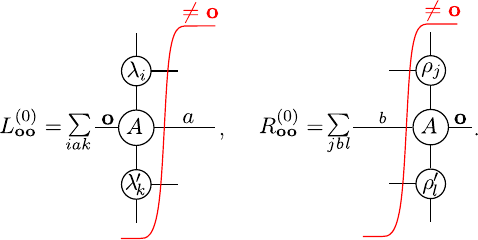} \codetag
\eeq
Now define the following set of groups of indices:
\begin{equation}
  \mathcal{E} = \{(2,\u,1), (1,\d,2), (4,\o,1), (1,\o,3)\}\,. \codetag
  \label{eq:Edef}
\end{equation}
Here is a crucial observation: In the contraction of $L_{\o \o}$ with $R_{\o \o}$ defined by \eqref{eq:L00def}, if we restrict the sums
to those terms with $(i,a,k) \in \mathcal{E}$ and $(j,b,l) \in \mathcal{E}$, then this partial sum
is zero. (The first two elements of $\mathcal{E}$ reproduce the two diagrams in \eqref{eq:aa-prob} and the second two elements of $\mathcal{E}$ reproduce the same diagrams with a minus sign.) This is the explicit realization of the cancellation mentioned at the beginning of Sec.~\ref{sec:decomposition-of-disentangler}.

In light of this observation let us split $L_{\o\o}^{(0)}=L_{\o\o}^{(1)}+L_{\o\o}^{(2)}$ where the two parts correspond to restricting the summation in \eqref{eq:L00def} to $(i,k,a) \notin \mathcal{E}$ (resp.~$(i,k,a) \in \mathcal{E}$):
\begin{equation}
	\label{eq:concl2}
  \raisebox{-2pt}{$L_{\o\o}^{(0)} = L_{\o\o}^{(1)}+L_{\o\o}^{(2)}$},\qquad\myinclude[scale=0.9]{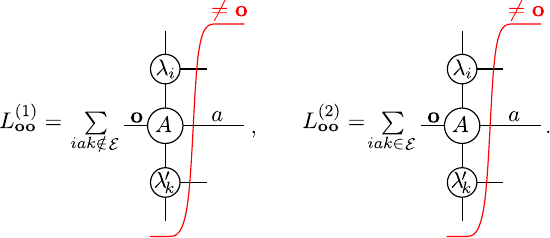}\codetag\\
\end{equation}
We also split $R_{\o\o}^{(0)} = R_{\o\o}^{(1)}+R_{\o\o}^{(2)}$ in the same fashion.

Recall that the contraction of $L_{\o\o}^{(0)}$ with
$R_{\o\o}^{(0)}$ produces the zero channel. Note that the
contraction of $L_{\o\o}^{(2)}$ with $R_{\o\o}^{(2)}$ is zero since this
contraction is precisely the sum over $\mathcal{E}$ which contains the terms that cancel.
Dropping this vanishing contraction, we are left with the decomposition of $\o\o$ into the sum of three
contractions over $V_2$:
\beq
H_{\o\o}
=L_{\o\o}^{(1)}\stackrel{V_2}{\text{\bf ---}} R_{\o\o}^{(1)}\ +\ L_{\o\o}^{(2)}\stackrel{V_2}{\text{\bf ---}} R_{\o\o}^{(1)}\ +\ L_{\o\o}^{(1)}\stackrel{V_2}{\text{\bf ---}} R_{\o\o}^{(2)}\,.
\label{eq:3terms}
\eeq
Given this equation, we now define $L_{\o \o}$ and $R_{\o \o}$ so that
\beq
H_{\o\o}= L_{\o\o}\stackrel{V_3}{\text{\bf ---}} R_{\o\o}.
\label{eq:needH00}
\eeq
To achieve this we use the auxiliary vector trick from Sec.~\ref{sec:auxiliary-space}. We define (For the first reading replace $w_{\o}$ and $P_Y$ by 1; see immediately below for the explanation of these factors.)
\beq
\myinclude[scale=0.9]{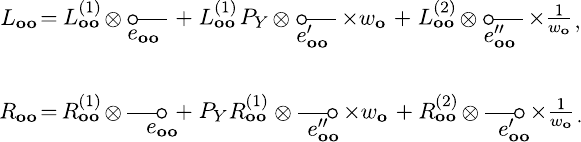}\codetag
\label{eq:L00R00def}
\end{equation}
In words, we consider the sums of Kronecker products of the left (resp. right) factors in \eqref{eq:3terms} with the auxiliary vectors $e_{\o\o}$, $e'_{\o\o}$, $e''_{\o\o}$, to eliminate the cross terms which would appear otherwise.\footnote{\label{note:nox}There are also no cross terms between $L_{\o \o}$, $R_{\o\o}$ and the $L_q$, $R_q$ tensors defined in \eqref{eq:LRnot00} for the other 4 channels.}

We now explain $w_\o$ and $P_Y$ in \eqref{eq:L00R00def}. The $w_{\o}>0$ is a reweighting parameter already discussed in \cref{sec:l-r-and-the-norm-of-kss}. As for $P_Y$, consider the following three subspaces in $V_2$:
\beq
\o \otimes (\u\oplus \d)\otimes \o,\quad \u\otimes \o\otimes\o,\quad \o \otimes \o\otimes\d\,.
\eeq
Let $Y$ be their direct sum and $P_Y$ the corresponding orthogonal projector. It's easy to check that the horizontal legs of $L_{\o\o}^{(2)}$ and $R_{\o\o}^{(2)}$ live in $Y$. Hence the result does not change if we multiply the tensors contracted with $L_{\o\o}^{(2)}$ and $R_{\o\o}^{(2)}$ by $P_Y$, which is what we did. As a matter of fact, this improvement is not essential but it leads to slightly better numerical estimates, so we implement it.

\subsubsection{Definition of $L$ and $R$}\label{sec:definition-of-l-and-r}

So far we represented the $\o\o$ channel as a contraction of $L_\o$ and $R_\o$, Eq.~\eqref{eq:needH00}, and the sum of 4 other channels as a contraction of the sum of $L_q$ and $R_q$ over those channels, Eq.~\eqref{eq:not00}. Given these and footnote \ref{note:nox}, we have
\beq\label{eq:LRdef}
\sum_{q} H_q = L \stackrel{V_3}{\text{\bf ---}} R,\qquad {L} := \sum_q L_q,\qquad {R} := \sum_q R_q,
\eeq
where the sum is over all 5 channels.

Recall that $\sum_q H_q$ equals the r.h.s.~of \eqref{eq:dis-as-sum}. So Eq.~\eqref{eq:LRdef} achieves the splitting. We will do one more minor step, to "erase" some unnecessary structure present in $V_3$. { This is analogous to the linearized analysis discussion around Eqs.~\eqref{eq:lin-kappa}-\eqref{eq:lin-L-kappa}. See also Q3 in Remark \ref{rem:QA} below.

We choose a real isometry $\varkappa:V_3 \to V$, which has the following properties:
\begin{align}
&\varkappa: e_{0} \otimes e_{0}\otimes e_{0} \otimes e_z \mapsto e_0 \label{eq:Kdef},\\
&\varkappa: (e_{0} \otimes e_{0}\otimes e_{0} \otimes e_z)^\perp \to \x.
\end{align}
Such an isometry exists by Lemma \ref{lem:isometry}. Following the proof of that lemma, an explicit construction can be given by enumerating all the basis elements of $V_3$. This can be done by using e.g.~multidimensional generalizations of the Cantor pairing function \cite{wikipedia_pairing_function}. We omit the details.}

We define tensors $L^\varkappa$ and $R^\varkappa$ contracting the $V_3$ leg of $L$ and $R$ with $\kappa$, i.e.
\beq
\label{eq:Lkappa}
L^\varkappa=L\stackrel{V_3}{\text{\bf ---}} \varkappa,\quad R^\varkappa=\varkappa \stackrel{V_3}{\text{\bf ---}} R\,. \codetag
\eeq
 Since $\varkappa$ is a real isometry, $\kappa^T \kappa=\mathds{1}_{V_3}$, the contraction of $L^\varkappa$ and $R^\varkappa$ over $V$ is the same as of $L$ and $R$ over $V_3$:
 \beq
 \sum_q H_q = L^\varkappa \stackrel{V}{\text{\bf---}}R^\varkappa\,,
 \label{eq:achieveLR}
 \eeq
This is the final form of splitting, which will be used below.

Let us split the tensors $L^\kappa$ and $R^\kappa$ into $A_*$ plus remainders. The component $L^{\kappa}_{0000}$ only has a contribution from $L_z$, Eq.~\eqref{eq:LkappaLz}. Using the definitions of $\lambda_i$ and $\lambda_i'$ in \cref{fig:lambdas} and the definition of $L^{(0)}_z$ in \cref{fig:LR}, we see that only the term with $i=1, k=1$ in the latter figure contributes. Then, using $A=A_*+b$, $b_{0000}=0$, we get $L^{\kappa}_{0000}=1$. Similarly, $R^{\kappa}_{0000} =1$. Thus we can write
\beq
\label{eq:LRsplit}
L^{\kappa}=\AHT+b_L,\qquad R^{\kappa}=\AHT+b_R,\qquad(b_L)_{0000} = (b_R)_{0000} = 0.
\eeq

\begin{remark} \label{rem:QA} Questions and answers.

\noindent Q1: The presented strategy for solving the equation in \cref{eq:AADD2LR} looks complicated. In numerical algorithms of tensor RG, one solves similar equations using singular value decomposition (SVD). Why can't we simply define $L$ and $R$ by performing an SVD of the r.h.s.~of \cref{eq:AADD2LR} as $U S V$, with $U,V$ unitary and $S$ nonnegative diagonal, and then defining $L_{\rm svd}=US^{1/2}$ and $R_{\rm svd}=S^{1/2} V$?

		\noindent A: This would not work for us for several reasons. First, tensors $L_{\rm svd}$ and $R_{\rm svd}$ are not guaranteed to be Hilbert-Schmidt in our infinite-dimensional setting. Second, since SVD involves matrix diagonalization, it would be nontrivial to estimate tensor elements of $L_{\rm svd}$ and $R_{\rm svd}$ in terms of tensor elements of $A$. In contrast, our definition of $L$ and $R$ involves nothing more than reshuffling and contracting tensor pieces, and the needed estimates will follow easily.

		\noindent Q2: What is the intuition behind the need to introduce the ``big'' Hilbert space $V_3$?

		\noindent A: The equation in \cref{eq:AADD2LR} has to be satisfied for arbitrary values of indices on 6 uncontracted legs. It is not so surprising that to solve this equation, the space of contracted indices should be in some sense "bigger" than $V$. In our infinite-dimensional setting, "bigger" means a complicated tensor product. In finite dimensions "bigger" would be literal, as the dimension of $V_3$ would have to be at least $\chi^3$ where $\chi$ is the dimension of $V$.

		\noindent Q3: Why erase almost all structure of $V_3$? What is special about the vector $e_{0} \otimes e_{0}\otimes e_{0} \otimes e_z$?

		\noindent A: We are studying the RG stability of the high-T fixed point. All tensors are split as $A_*$ plus remainders, and we only study how the norm of the remainders changes under the RG map. The basis vector $e_{0} \otimes e_{0}\otimes e_{0} \otimes e_z$ is the only component which contributes to the $A_*$ part of $L^\kappa$ and $R^\kappa$. It's not important how the other components are distributed over the tensor factors of $V_3$. They can be lumped into $\x$, in an arbitrary order, as this does not affect the norm of the remainder. { The order would have been important if we wanted, say, to discuss eigenvectors of the Jacobian of the map at $A_*$, but in this paper we do not do this.}

		\noindent Q4: Why are we keeping only two sectors $\o$, $\x$ on the contracted horizontal leg of $L^\kappa$, $R^\kappa$, and not four as for the horizontal legs of $A$?

		\noindent A: The subsequent 90 degree rotation will turn this horizontal leg into a vertical one.

	\end{remark}

\subsubsection{Hat-tensor estimates}\label{sec:hat-tensor-estimates}

We will now derive the hat-tensors $\widehat{L}^\kappa, \widehat{R}^\kappa$ for $L^\kappa,R^\kappa$. As with the gauge transformation, our graphical language makes this almost automatic.
\begin{itemize}
  \item We define the hat-tensors for $\lambda_3, \lambda_4, \lambda_5$ by putting hats on the $b$'s in \cref{fig:lambdas}. For $\lambda_5$, we also need to bound $1/(1+\sqrt{1-\alpha})^{1/2}$. We define an upper bound for $|\alpha|$:
  \begin{equation}\label{eq:hatalpha}
  	\widehat{\alpha} = \widehat{b}_{\o\o\d\x} \ \widehat{b}_{\d\o\o\x} \ \widehat{b}_{\u\x\o\o} \ \widehat{b}_{\o\x\u\o}, \codetag
  \end{equation}
  We will assume that $\widehat{\alpha}<1$. We then have $|1/(1+\sqrt{1-\alpha})^{1/2}|
  \le 1/(1+\sqrt{1-|\alpha|})^{1/2}
  \le  1/(1+\sqrt{1-\widehat{\alpha}})^{1/2}$. We use the latter bound in the definition of the hat-tensor.
  \item To obtain $\widehat{\lambda}_i', \widehat{\rho}_i, \widehat{\rho}_i'$, $i=3,4,5$, we first drop all the sign factors appearing in the definitions of the corresponding full-tensors and then repeat the same procedure as for $\lambda_3, \lambda_4, \lambda_5$.
  \item We define the check-tensors for $\lambda_1, \lambda_2$ by replacing the projector to $\o$ with the $2\times 2$ matrix with a single nonzero element $\o\o$ equal to $1$, the projector to $\x$ with the $2\times 2$ matrix with a single nonzero element $\x\x$ equal to $1$, and $e_0$ with a 4-dimensional vector with a single nonzero element $\o$ equal to $1$.
  \item To obtain $\widecheck{\lambda}_i', \widecheck{\rho}_i, \widecheck{\rho}_i'$, $i=1,2$, we first drop all the sign factors appearing in the definitions of the corresponding full-tensors and then perform the same replacements as for $\lambda_1, \lambda_2$.
  \item Hat-tensors for the orthonormal basis vectors of $V_{\rm aux}$ can be taken equal to the vectors themselves. (I.e.~we split $V_{\rm aux}$ into 7 one-dimensional sectors, one for each of its basis vectors.)
  \item It's now easy to check that the diagrams defining tensors $L_q$ and $R_q$ are allowed, and to define hat-tensors $\widehat{L}_q$ and $\widehat{R}_q$ using Lemma \ref{checklemma}.
  Summing those over $q$ we get hat-tensors $\widehat{L}$ and $\widehat{R}$.
\end{itemize}

  Note that $V_3$ has $N_3=N_1\times 4 \times N_1 \times 7$ sectors, where $N_1=4+4\times 4=20$ is the number of sectors of $V_1$.
  So $\widehat{L}$ and $\widehat{R}$ are 4-tensors with $4\times 2\times N_3 \times 2$ components.\footnote{ Many of these components are zero. Only 5 sectors of $V_1$ are actually used by $\lambda$ and $\rho$ ($\o$, $\d$, $\u$, $\d\otimes \u$, $\u\otimes\d$). So the number of used $V_3$ sectors is only $5 \times 4 \times 5 \times 7$. In the computer code that we use to compute the hat-tensors this is taken into account to speed up the computations.  }
On the other hand, the hat-tensors  $\widehat{L}^\kappa$ and $\widehat{R}^\kappa$ will only have $4\times 2 \times 2\times 2$ components.

There are two strategies to get $\widehat{L}^\kappa$ and $\widehat{R}^\kappa$ from $\widehat{L}$ and $\widehat{R}$. The simplest one would be to define a check-tensor $\widecheck{\kappa}$ and apply Lemma \ref{checklemma}. Below we describe a different strategy, which leads to much better hat-tensors. It takes advantage of the important fact that $\kappa$, being an isometry, maps sectors of $V_3$ to mutually orthogonal subspaces of $V$.

Defining hat-tensor for $L^\kappa$ means bounding the HS norm of its restriction $(L^\kappa)_{abcd}$. We consider the two cases $c=\o,\x$ separately. Denote by $\s_0$ the sector of $V_3$ spanned by $e_0\otimes e_0\otimes e_0\otimes e_z$. We have, for any $a,b,d$
\beq
\label{eq:LkappaLz}
L^\kappa_{ab\o d} = L_{ab\mathbf{s}_0 d} = (L_z)_{ab\mathbf{s}_0 d},
\eeq
where the first equation is by definition of $L^\kappa$ and the second follows because $L_z$ is the only tensor among $L_q$'s whose right leg lives in $\mathbf{s}_0$. Therefore we obtain the hat-tensor:
\beq
\widehat{L}^\kappa_{ab\o d} = (\widehat L_z)_{ab\mathbf{s}_0 d},\codetag
\eeq

Let now $(\s_r)_{r=1}^{N_3-1}$ be the remaining used sectors of $V_3$ in an arbitrary order. We have, for any fixed indices $i\in a,j\in b,l\in d$,
\begin{equation}
L^\kappa_{ij\x l} = \sum^{N_3-1}_{r=1} L_{ij \mathbf{s}_r l}\text{\bf---}\kappa_{\mathbf{s}_r \x}\,.
\end{equation}
The terms in the r.h.s. are vectors in the $\x$ subspace of $V$ which are mutually orthogonal for different values of $r$. This follows from the facts that sectors of $V_3$ are mutually orthogonal and that $\kappa$ being an isometry maps orthogonal vectors to orthogonal vectors. So the square of the norm of the l.h.s. equals the sum over $r$ of squares of the norms of the terms in the r.h.s. Summing the resulting equation over $i,j,l$ and taking the square root, we obtain:
\begin{equation}
	\label{eq:quad}
	\widehat{L}^\kappa_{ab\x d}
	= \left(
	\sum^{N_3-1}_{r=1} (\widehat{L}_{ab \mathbf{s}_r d})^2\right)^{1/2}\,. \codetag
\end{equation}
Note that for any given $\mathbf{s}_r$, the term in the r.h.s.~of this equation receives contributions from one of the three hat-tensors $\widehat{L}_{\x\x}$, $\widehat{L}_{\o\x}$, $\widehat{L}_{\x\o}$, or one of the three pieces of $\widehat{L}_{\o\o}$ in the r.h.s.~of \eqref{eq:L00R00def}. The $\widehat L_z$ does not contribute here.

\begin{remark}\label{rem:nonan}
	Because of several terms combined in quadrature as in \eqref{eq:quad}, $\widehat{L}^\kappa$ is not an analytic function of $\hat b$. On the other hand $L^\kappa$ is analytic function of $b$. There is no contradiction of course.
	\end{remark}

The discussion for $\widehat{R}^\kappa$ is completely analogous; we omit the details.

Recall that $L^\kappa$ and $R^\kappa$ are normalized and were split into $A_*$ plus remainders as in Eq.~\eqref{eq:LRsplit}. So their hat-tensors obtained as above can be written as
\beq
\label{eq:LRhatsplit}
\widehat{L}^{\kappa} = \hatAHT+\widehat{b}_L,\qquad\widehat{R}=\hatAHT+\widehat{b}_R,\qquad(\widehat{b}_{L})_{\o\o\o\o} = (\widehat{b}_R)_{\o\o\o\o} = 0.
\eeq

To summarize, in this section we defined the following functions:\footnote{As noted above, full-tensors $b_L,b_R$ have only two sectors $\o,\x$ on their contracted legs, not 4 sectors like $b$. Not to introduce further notation, we will continue to use $\mathbb{H}_0$ to denote the space in which these full-tensors live, as well as $\hat{\mathbb{H}}_0$ for the space of their hat-tensors. Hopefully this will not lead to a confusion.}
\begin{align}
& \widehat{\alpha}:\hat{\mathbb{H}}_0\to \mathbb{R}_{\ge 0},\\
& \widehat{b}_L, \widehat{b}_R :  \hat\Omega_{LR}  \to \hat{\mathbb{H}}_0,\quad \hat\Omega_{LR} =  \{\hat b\in \hat{\mathbb{H}}_0: \widehat{\alpha}<1 \}, \codetag \\
& b_L,b_R: \Omega_{LR}  \to \mathbb{H}_0, \quad \Omega_{LR} = \{b \in \mathbb{H}_0: \exists\ \widehat{b}\text{ for }b\text{ such that } \widehat{b} \in \widehat\Omega_{LR} \}\,.
\end{align}

\begin{proposition}\label{prop:bLbRms}
	The functions $\widehat{b}_L$ and $\widehat{b}_R$ are monotonic and subhomogeneous on the (downward closed) sets on which they are defined. Functions $b_L$ and $b_R$ are analytic, $b_L(0)=b_R(0)=0$.
\end{proposition}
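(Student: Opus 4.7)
The plan is to mirror the proof of \Cref{prop:summary-gauge}, propagating monotonicity and subhomogeneity through the construction of \Cref{sec:hat-tensor-estimates} piece by piece, and to read off analyticity and the vanishing at $b=0$ from the explicit definitions of $L^\kappa,R^\kappa$.

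First, I would verify the downward closedness of $\widehat{\Omega}_{LR}$. By \eqref{eq:hatalpha}, $\widehat{\alpha}$ is a degree-four monomial in components of $\widehat{b}$, hence monotonic (and subhomogeneous) by \Cref{lem:monsub}(a); so $\{\widehat\alpha<1\}$ is downward closed. I would then walk through the building blocks: the check-tensors $\widecheck{\lambda}_i,\widecheck{\rho}_i,\widecheck{\lambda}'_i,\widecheck{\rho}'_i$ for $i=1,2$ and the basis vectors of $V_{\rm aux}$ are $\widehat{b}$-independent constants; the hat-tensors $\widehat{\lambda}_i,\widehat{\rho}_i,\widehat{\lambda}'_i,\widehat{\rho}'_i$ for $i=3,4$ are monomials in $\widehat{b}$; and those for $i=5$ are monomials in $\widehat{b}$ multiplied by $1/(1+\sqrt{1-\widehat\alpha})^{1/2}$, a nonnegative monotonic real function of $\widehat{b}$ on $\widehat\Omega_{LR}$. \Cref{lem:monsub}(d) then makes each of these monotonic and subhomogeneous.

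Next, I would propagate these properties upward. Each $\widehat{L}_q,\widehat{R}_q$ is an allowed contraction of the above pieces via \Cref{checklemma}, and \Cref{lem:monsub}(b,c) gives monotonicity and subhomogeneity of $\widehat{L}=\sum_q\widehat{L}_q$, $\widehat{R}=\sum_q\widehat{R}_q$. Passing to $\widehat{L}^\kappa,\widehat{R}^\kappa$, the $c=\o$ component is merely the selection $(\widehat{L}_z)_{ab\mathbf{s}_0 d}$, preserving both properties. The $c=\x$ component is the $\ell^2$ combination \eqref{eq:quad}; monotonicity is pointwise, and subhomogeneity uses the elementary bound $\bigl(\sum_r (\mu f_r)^2\bigr)^{1/2}=\mu\bigl(\sum_r f_r^2\bigr)^{1/2}$ for $\mu\in[0,1]$, which upgrades componentwise subhomogeneity of $\widehat{L}$ to that of $\widehat{L}^\kappa$. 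Subtracting $\hatAHT$ as in \eqref{eq:LRhatsplit} to extract $\widehat{b}_L,\widehat{b}_R$ affects only the $\o\o\o\o$ component, which is set to zero, so the two properties persist.

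For the full-tensor claims, I would observe that the only nonpolynomial ingredient of \eqref{eq:Ddef}--\eqref{eq:Lkappa} is the coefficient $1/(1+\sqrt{1-\alpha})$; since $|\alpha|\le\widehat{\alpha}<1$ on $\Omega_{LR}$, this factor is analytic in $b$, hence so are $b_L,b_R$. Vanishing at $b=0$ follows by inspection: at $b=0$ every $\lambda_i,\rho_j,\lambda'_i,\rho'_j$ with $i,j\ge 3$ is zero and $X=0$, so only the $i=k=1$ piece of $L^{(0)}_z$ (and analogously $R^{(0)}_z$) survives, contributing exactly $A_*$ after contraction with $\varkappa$ via \eqref{eq:Kdef}. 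I expect the main subtlety to lie in the $\ell^2$ combination \eqref{eq:quad}: it formally breaks analyticity (cf.~\Cref{rem:nonan}), so one must be careful that analyticity is a statement about $b_L,b_R$ themselves, not about their hat-tensor, and separately that the same combination cooperates with the scaling estimate needed for subhomogeneity.
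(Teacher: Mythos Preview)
Your proposal is correct and follows the same approach as the paper's (brief) proof, which likewise reduces the question to checking that the non-$\o\o\o\o$ entries of $\widehat{b}_L$ arise from degree-$\ge 1$ monomials in $\widehat{b}$ times monotonic scalar factors, with the quadrature \eqref{eq:quad} as the only step needing separate comment. One minor tightening: $\widehat{L}$ itself is not subhomogeneous in the $\o\o\mathbf{s}_0\o$ component (which equals the constant $1$), so your sentence ``\Cref{lem:monsub}(b,c) gives monotonicity and subhomogeneity of $\widehat{L}$'' overshoots slightly---but since that component becomes precisely the $\o\o\o\o$ entry you then set to zero, the argument is unaffected, and indeed the paper sidesteps this by working directly with the non-$\o\o\o\o$ entries of $\widehat{b}_L$ from the outset.
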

\begin{proof}
	This proposition is analogous to Prop.~\ref{prop:summary-gauge}, so we will be brief. Let us discuss why $\hat b_L$ is subhomogeneous. Leaving aside $(\hat b_L)_{\o\o\o\o}=0$, the other elements of $\hat b_L$ are obtained by combining in quadrature as in \eqref{eq:quad} tensor elements of $\hat L$ which are all subhomogeneous, given by monomials in $\hat b$ of order $\ge 1$, times scalar factors monotonic in $\hat b$. Combining in quadrature preserves subhomogeneity.
	\end{proof}

Recall that throughout \cref{sec:disent}, $A$ was used as a shorthand for $A_1$, and $b$ was used as a shorthand for $b_1$, where the end result of the gauge transformation was $A_1 = A_* + b_1$. When we compose the four steps of the 2x1 map, we will have to restore the subscript $1$.

\subsection{Reconnection and rotation}\label{sec:reconnection-and-rotation}

The last steps of our RG map is the reconnection followed by rotation, Fig.~\ref{fig:bigfig}(c),(d).
After disentangling and splitting, our tensor network consists of 2x1 blocks as in the r.h.s. of Fig.~\ref{fig:bigfig}(b). Tensors $L,R$ were defined in Section \ref{sec:definition-of-l-and-r}; at the end of which section we also replaced them with $L^\kappa$, $R^\kappa$. In terms of these latter tensors, reconnection is graphically described by the following diagram:
\begin{equation}\label{eq:A2}
  \myinclude{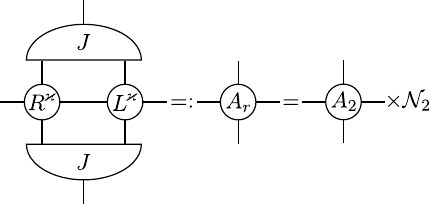}\,. \codetag
\end{equation}
Here:
\begin{itemize}
\item $J$ is an isometry from $V\times V$ onto $V$, see below;
\item $A_{r}$ ($r$ for reconnection)  is defined as the shown contraction;
\item Since $A_{r}$ is not in general normalized, we define the normalized $A_2=A_{r}/\mathcal{N}_2$, $\mathcal{N}_2=(A_{r})_{0000}$.
\end{itemize}
The reconnection step halves the horizontal size of the lattice. We have:
\beq
\boxed{Z(A_1,\ell_x\times \ell_y)=Z(A_{r},\ell_x/2\times \ell_y)=\mathcal{N}_2^{{\rm Vol}/2}
Z(A_2,\ell_x/2\times \ell_y)}\,.\label{eq:Zrec}
\eeq
After this, we perform the rotation step, Fig.~\ref{fig:bigfig}(d), which we copy here:
\begin{equation}\label{eq:Ap}
  \myinclude{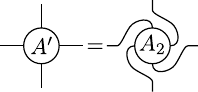}\,. \codetag
\end{equation}
This defines the final, normalized, tensor $A'$ as the 90 degree counterclockwise rotation of $A_2$. So clearly
\beq
\boxed{Z(A_2,\ell_x/2\times \ell_y)=Z(A', \ell_y\times \ell_x/2)}\,.\label{eq:Zrot}
\eeq
Combining Eqs.~\eqref{eq:Zgauge1},\eqref{eq:Zrec},\eqref{eq:Zrot}, we obtain \eqref{eq:Zpreserved} with $\mathcal{N}=\mathcal{N}_1^2 \mathcal{N}_2$. We record this as:
\begin{proposition}
The defined map $A\mapsto \calN A'$ is an RG map (for any values of the reweighting parameters $w_\x, w_\o>0$) in the sense that it preserves the tensor network partition function.
	\end{proposition}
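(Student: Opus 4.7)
The plan is to establish the identity $Z(A,\ell_x\times\ell_y) = Z(\mathcal{N}A',\ell_y\times\ell_x/2)$ with $\mathcal{N} = \mathcal{N}_1^2\mathcal{N}_2$ by chaining together the four boxed equations \eqref{eq:Zgauge1}, \eqref{eq:Zrec}, and \eqref{eq:Zrot}, which have in fact already been asserted in the preceding construction. The proof therefore reduces to verifying each of the four intermediate partition-function identities corresponding to the four steps of the map, plus confirming that the reweighting parameters $w_\x, w_\o > 0$ drop out.

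First I would dispose of the gauge transformation step. By the standard observation (Sec.~2.4.3 of \cite{paper2}), inserting $G_h G_h^{-1} = \mathds{1}_V$ on each horizontal bond and $G_v G_v^{-1} = \mathds{1}_V$ on each vertical bond of the periodic network leaves the contracted value unchanged. This gives $Z(A,\ell_x\times\ell_y)=Z(A_g,\ell_x\times\ell_y)$, and pulling out the scalar factor $\mathcal{N}_1 = (A_g)_{0000}$ per vertex yields \eqref{eq:Zgauge1}. Note that $\mathcal{N}_1 \neq 0$ on $\Omega_1$ by Proposition~\ref{prop:summary-gauge}(d), so this factoring is legitimate.

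Next I would handle the disentangling and splitting step together, since this is where the reweighting parameters enter and thus deserves the most care. Partitioning the network into horizontal $2\times 1$ blocks and inserting $D^{-1}D=\mathds{1}_{V\otimes V}$ on every pair of vertical bonds between vertically adjacent blocks preserves the partition function. Within each block the decomposition \eqref{eq:Ddec} followed by the channel split \eqref{eq:Hq}, the auxiliary-space trick of Sec.~\ref{sec:auxiliary-space}, the cancellation over $\mathcal{E}$ in the $\o\o$ channel, and finally the isometric relabeling $\varkappa$ in \eqref{eq:Lkappa} together establish the contraction identity \eqref{eq:achieveLR}, i.e.\ $\sum_q H_q = L^\varkappa \stackrel{V}{\text{\bf---}} R^\varkappa$. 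The reweighting parameters $w_\x$ and $w_\o$ appear as a rescaling of $L^{(0)}_q$ by a factor and $R^{(0)}_q$ by its inverse (or vice versa) in \eqref{eq:Hq} and \eqref{eq:L00R00def}; since each $H_q$ is bilinear in the left and right factors, these parameters cancel identically in the contraction and do not affect $H_q$, hence do not affect the partition function. Because $\varkappa$ is an isometry from $V_3$ onto $V$, replacing the $V_3$-contraction by the $V$-contraction of $L^\varkappa,R^\varkappa$ leaves the block value unchanged. This proves the first equality in \eqref{eq:Zrec}, and pulling out $\mathcal{N}_2$ per block gives the second equality.

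Finally the rotation step \eqref{eq:Zrot} is immediate since a $90^\circ$ rotation is merely a relabeling of vertices and edges; on the $\ell_x/2 \times \ell_y$ torus with periodic boundary conditions this produces the identical sum of products of tensor components, just written on the rotated $\ell_y \times \ell_x/2$ lattice. Chaining \eqref{eq:Zgauge1}, \eqref{eq:Zrec}, and \eqref{eq:Zrot} gives
\beq
Z(A,\ell_x\times\ell_y) = \mathcal{N}_1^{\mathrm{Vol}}\mathcal{N}_2^{\mathrm{Vol}/2} Z(A',\ell_y\times\ell_x/2) = Z(\mathcal{N}A',\ell_y\times\ell_x/2),
\eeq
with $\mathcal{N}=\mathcal{N}_1^2\mathcal{N}_2$, where the last equality uses the standard rescaling identity $Z(cA', N) = c^N Z(A', N)$ for the number of vertices $N=\mathrm{Vol}/2$. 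I expect the main obstacle to be the careful verification of \eqref{eq:achieveLR} — specifically, checking that after introducing the auxiliary basis vectors in $V_{\mathrm{aux}}$ the cross terms vanish in all five channels simultaneously, and that the $\o\o$-channel cancellation via $\mathcal{E}$ precisely reproduces the disentangler cancellation; everything else is a bookkeeping exercise that follows routinely from the construction already presented in Sections \ref{sec:gauge}--\ref{sec:reconnection-and-rotation}.
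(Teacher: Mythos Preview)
Your proposal is correct and matches the paper's approach exactly: the paper itself merely states that combining \eqref{eq:Zgauge1}, \eqref{eq:Zrec}, and \eqref{eq:Zrot} yields \eqref{eq:Zpreserved} with $\mathcal{N}=\mathcal{N}_1^2\mathcal{N}_2$, while you supply the supporting details (including the cancellation of $w_\x,w_\o$). The one step you leave slightly implicit is the reconnection proper---regrouping into $R^\varkappa$--$L^\varkappa$ pairs and inserting $J^T J=\mathds{1}$ on the paired vertical bonds, using that $J$ is an isometry---but this is indeed the routine bookkeeping you anticipate.
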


{ Let us next discuss the isometry $J:V\otimes V\to V$. We use the same $J$ as in the linearized analysis. The "ingoing" vertical legs of $J$ are split into sectors corresponding to $V=\o\oplus \x$, while the "outgoing" leg is split into sectors $V=\o\oplus\d\oplus\u\oplus \r$. This leg is vertical, but it will become horizontal after rotation, so we use 4 sectors as on the horizontal legs of $A$. Then, we require that $J$ maps isometrically the 4 sectors of $V\otimes V$ onto the 4 sectors of $V$ as in Eq.~\eqref{eq:J_who_where} (see \cref{fig:J-sectors}). The rationale behind these assignments was already discussed in \cref{sec:lin-reconnection-and-rotation}.}

\begin{figure}[h]
	\centering
	\codetag\includegraphics{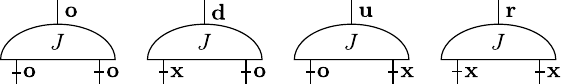}
	\caption{Nonzero sectors of the $J$ isometry. The ``ingoing'' legs are distinguished.}
	\label{fig:J-sectors}
\end{figure}

We next discuss the hat-tensors and check-tensors.

The isometry $J$ is not HS, so we use a check-tensor for it. The tensor $J$ has two distinguished legs, indicated by tick marks in \cref{fig:J-sectors}. The check-tensor $\widecheck{J}$ has elements equal to $1$ for the four elements corresponding to the four sectors in the figure. The rest of the elements are zero.

The contraction in \cref{eq:A2} which defines $A_{r}$ does not quite meet our definition of an allowed contraction (\cref{def:allowed}) since the two distinguished legs of $J$ are contracted with two different HS tensors. But there is an easy workaround by doing that contraction in two steps. We first contract $R^{\kappa}$ with $L^{\kappa}$, to get a HS tensor which is then contracted with $J$'s. Both steps are allowed contractions and Lemma \ref{checklemma} applies. We thus see that the hat-tensor $\widehat{A}_{r}$ is given by replacing $R^\kappa\to \widehat R^\kappa$, $L^\kappa\to \widehat L^\kappa$, $J\to \widecheck{J}$ in the l.h.s.~of \cref{eq:A2}.

The hat-tensor for $A_2=A_r/\calN_2$ is now obtained similarly to $A_1=A_g/\calN_1$ in the gauge transformation discussion. We start by writing
\beq
\widehat{A}_r = \widehat\calN_2 \widehat{A}_* + \widehat{b}_r,\qquad  (\widehat{b}_r)_{\o\o\o\o}=0. \codetag
\eeq
Here $\widehat{\calN}_2$, $\widehat{b}_r$ are functions of $\widehat{b}_L$ and $\widehat{b}_R$. $\widehat{\calN}_2$ is an upper bound on $|\calN_2|$, while a positive lower bound for $|\calN_2|$ is given by $|\calN_2|_{plb}=2-\widehat{\calN}_2$, positive assuming $\widehat{\calN}_2<2$. The hat-tensor for $A_2$ is then given by
\beq
\widehat{A}_2= \widehat{A}_* + \widehat{b}_2,\qquad
\widehat{b}_2= \widehat{b}_r/|\calN_2|_{plb}\,,
\eeq

Finally, we obtain $\widehat{A}'=\AHT + \widehat{b}'$ by rotating $\widehat{A}_2$ by 90 degrees counterclockwise.

 	We summarize the most important functions defined in this section:
 	\begin{align}
 		&\hat{\calN}_2:\{(\widehat{b}_L,\widehat{b}_R) \in \hat{\mathbb{H}}_0\times \hat{\mathbb{H}}_0\} \to \mathbb{R}_{\ge 0},\\
 		& \widehat{b}': \widehat{\Omega}_2 \to \hat{\mathbb{H}}_0 , \quad  \widehat{\Omega}_2=\{(\widehat{b}_L,\widehat{b}_R) \in \hat{\mathbb{H}}_0\times \hat{\mathbb{H}}_0 :\hat{\calN}_2<2\}\codetag \\
 		&  \calN_2:\{(b_L,b_R)\in \mathbb{H}_0\times \mathbb{H}_0\} \to\mathbb{C},\\
 		& b': \Omega_2 \to \mathbb{H}_0\,, \quad \Omega_2 = \{(b_L,b_R)\in \mathbb{H}_0\times \mathbb{H}_0: \exists\ \widehat{b}_L,\widehat{b}_R\text{ for }b_L, b_R\text{ such that } (\widehat{b}_L,\widehat{b}_R) \in \widehat\Omega_{2} \}\,.
 	\end{align}
 	The following proposition is analogous to Props.~\ref{prop:summary-gauge},\ref{prop:bLbRms}; we omit the proof.

 	\begin{proposition}\label{prop:summary-rec-rot}
$\hat{\calN}_2$ is monotonic. $\widehat{b}'$ is monotonic and subhomogeneous on the (downward closed) set $\Omega_2$. Functions $\calN_2$, $b'$ are analytic. We have $b'(0,0)=0$, $\calN_2(b_L,b_R)=1+O(b_L b_R)$, $\calN_2\ne 0$ in $\Omega_2$, and the bound:
\beq
\label{eq:N2dev}
|\calN_2-1|\le \hat{\calN}_2-1\,.
\eeq
 	\end{proposition}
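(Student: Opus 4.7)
The plan is to follow the template of Propositions~\ref{prop:summary-gauge} and \ref{prop:bLbRms}, handling each claim in turn. The construction of \cref{sec:reconnection-and-rotation} is structurally simpler than the earlier ones: $A_r$ is produced by a single contraction of $L^\kappa, R^\kappa$ with two copies of the isometry $J$; rotation is only a relabeling of legs; and the only non-polynomial operation is division by the scalar $\calN_2$. Every claim should therefore reduce routinely to Lemma~\ref{lem:monsub} and to the general framework of \cref{sec:hat-tensors,sec:control-of-the-2x1-map-via-hat-tensors}, with one small diagram check as the only substantive step.

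First I would dispose of the hat-tensor statements. By construction $\widehat{A}_r$ is a contraction of the nonnegative tensors $\widehat{L}^\kappa, \widehat{R}^\kappa, \widecheck{J}$, so each of its components is a polynomial with nonnegative coefficients in the entries of $(\widehat{b}_L, \widehat{b}_R)$ plus a constant contribution from the $A_*$ pieces of $L^\kappa, R^\kappa$. Monotonicity of the scalar $\widehat{\calN}_2$ is then immediate from Lemma~\ref{lem:monsub}(a),(b), and downward-closedness of $\widehat{\Omega}_2$ follows because the defining inequality $\widehat{\calN}_2<2$ is preserved under shrinking the arguments. For $\widehat{b}'$ I would observe that $\widehat{b}_r = \widehat{A}_r - \widehat{\calN}_2\widehat{A}_*$ consists only of monomials of degree $\ge 1$ in $(\widehat{b}_L,\widehat{b}_R)$, hence is monotonic and subhomogeneous, while $\widehat{b}_2 = \widehat{b}_r/(2-\widehat{\calN}_2)$ multiplies this by the nonnegative, monotonic scalar factor $1/(2-\widehat{\calN}_2)$; Lemma~\ref{lem:monsub}(d) preserves both properties, and the final rotation to $\widehat{b}'$ acts componentwise.

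For the analyticity and full-tensor statements: $A_r$ is polynomial (indeed quadratic) in the components of $b_L, b_R$; on $\Omega_2$ one has $|\calN_2|\ge 2-\widehat{\calN}_2>0$ by the same positive-lower-bound argument used in the gauge step, so $A_2 = A_r/\calN_2$ is analytic and $A'$ is a mere relabeling of it. At the origin $L^\kappa=R^\kappa=A_*$ forces $A_r=A_*$, giving $b'(0,0)=0$; and the bound $|\calN_2-1|\le\widehat{\calN}_2-1$ is just \eqref{eq:hat-condition} applied to the $\o\o\o\o$ component of $A_r-\AHT$.

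The only step requiring explicit inspection, and the one I expect to take some care, is $\calN_2 = 1 + O(b_L b_R)$. Here I would compute $(A_r)_{\o\o\o\o}$ directly from diagram~\eqref{eq:A2}: since the top and bottom copies of $J$ satisfy $J(\o\otimes\o)\subset\o$, projecting the four external legs of $A_r$ into $\o$ forces all four vertical legs of $L^\kappa$ and $R^\kappa$ into $\o$, leaving only the horizontal bond between them free. Substituting $L^\kappa=A_*+b_L$ and $R^\kappa=A_*+b_R$ and expanding, the $A_*\cdot A_*$ term yields $1$, the two mixed terms vanish because the $A_*$ factor collapses the remaining tensor onto its $0000$-component, which is zero by \eqref{eq:LRsplit}, and only the bilinear $b_L b_R$ contribution survives. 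This small verification---essentially the only point at which the specific structure of $J$ and the normalization of $L^\kappa, R^\kappa$ genuinely enter---is where I expect to spend the bulk of the effort; everything else is immediate bookkeeping on top of the general framework.
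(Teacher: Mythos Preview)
Your proposal is correct and follows precisely the approach the paper intends: the paper omits the proof entirely, stating only that it is analogous to Propositions~\ref{prop:summary-gauge} and~\ref{prop:bLbRms}, and you have filled in exactly those analogous details. The one small remark is that your justification of $|\calN_2-1|\le\widehat{\calN}_2-1$ via ``\eqref{eq:hat-condition} applied to $A_r-\AHT$'' implicitly uses that the diagrammatic expansion of $A_r$ separates cleanly into the $\AHT\!\cdot\!\AHT$ piece (contributing exactly $1$) plus terms each containing at least one $b_L$ or $b_R$, whose hat-bounds sum to $\widehat{\calN}_2-1$; this is the same reasoning spelled out for $\calN_1$ in \cref{sec:gauge}, so your compression is appropriate.
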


\subsection{Master function}\label{sec:masterf}

All the steps of the 2x1 map have now been defined. Composing them, we define the map $b'(b)$ on the set $\Omega$ as in \eqref{eq:Omega} where the associated set $\hat\Omega$ is given by:
\begin{align}
\hat\Omega = \{\hat b\in \hat{\mathbb{H}}_0: &\ \hat\beta_h(\hat b)<1,\ \hat\beta_v(\hat b)<1,\ \hat\calN_1(\hat b)<2,
\nonumber\\
 &\ \hat \alpha(\hat b_1)<1\text{ where }\hat b_1= \hat b_1(\hat b),
 \nonumber\\
 &\ \hat{\calN}_2(\hat{b}_L, \hat{b}_R)<2\text{ where }\hat b_L= \hat b_L(\hat b_1),\hat b_R= \hat b_R(\hat b_1)
 \}. \codetag
\end{align}
The associated hat-tensor map $\hat{b}'(\hat b)=:\mathfrak{M}(\hat b)$ is the master function, defined on $\hat\Omega$.

The $\calN$-factor is defined on $\Omega$ by
\beq
\calN(b) = (\calN_1(b))^2 \calN_2(b_L,b_R)\text{ where }b_L= b_L(b_1(b)), b_R= b_R(b_1(b))\,. \codetag
\eeq
The following proposition is obtained by putting together Props.~\ref{prop:summary-gauge},\ref{prop:bLbRms},\ref{prop:summary-rec-rot}; we omit the proof.

\begin{proposition}\label{prop:summary-master}
	The master function $\mathfrak{M}(\hat b)$ is monotonic and subhomogeneous on the downward closed set $\hat\Omega$. Functions $\calN(b)$, $b'(b)$ are analytic on $\Omega$. We have $b'(0)=0$, $\calN(b)=1+O(b^2)$, $\calN(b)\ne 0$ in $\Omega$.
\end{proposition}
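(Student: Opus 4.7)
The strategy is simply to assemble the four-step composition and transport the properties already established in Propositions~\ref{prop:summary-gauge}, \ref{prop:bLbRms} and \ref{prop:summary-rec-rot} through it, using Lemma~\ref{lem:monsub} to combine monotonicity and subhomogeneity.

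First I would verify that $\hat\Omega$ is downward closed. Suppose $\hat b\in\hat\Omega$ and $0\le \hat c\le \hat b$. By Prop.~\ref{prop:summary-gauge}(a), $\hat\beta_h,\hat\beta_v,\hat\calN_1$ and $\hat b_1$ are monotonic (in fact the first three are also subhomogeneous), so each of the conditions $\hat\beta_h(\hat c)<1$, $\hat\beta_v(\hat c)<1$, $\hat\calN_1(\hat c)<2$ holds and $\hat b_1(\hat c)\le \hat b_1(\hat b)=:\hat b_1$. By Prop.~\ref{prop:bLbRms}, $\hat\alpha$ is monotonic (being a monomial in components of its argument, cf.~Lemma~\ref{lem:monsub}(a)), and $\hat b_L,\hat b_R$ are monotonic. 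So $\hat\alpha(\hat b_1(\hat c))\le \hat\alpha(\hat b_1)<1$ and $\hat b_L(\hat b_1(\hat c))\le \hat b_L$, $\hat b_R(\hat b_1(\hat c))\le \hat b_R$. Finally $\hat\calN_2$ is monotonic (Prop.~\ref{prop:summary-rec-rot}), so $\hat\calN_2$ evaluated along $\hat c$ is $<2$ as well. Hence $\hat c\in\hat\Omega$.

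Next, monotonicity and subhomogeneity of $\mathfrak{M}$ follow from Lemma~\ref{lem:monsub}(c) applied to the composition
\begin{equation*}
  \mathfrak{M}(\hat b)=\hat b'\!\bigl(\hat b_L(\hat b_1(\hat b)),\,\hat b_R(\hat b_1(\hat b))\bigr),
\end{equation*}
together with Lemma~\ref{lem:monsub}(b) to handle the pairing $(\hat b_L,\hat b_R)$ as input to $\hat b'$. Each of the constituent maps $\hat b_1$, $\hat b_L$, $\hat b_R$, $\hat b'$ is monotonic and subhomogeneous on the relevant downward-closed set by Props.~\ref{prop:summary-gauge}(a), \ref{prop:bLbRms}, \ref{prop:summary-rec-rot}, and the definition of $\hat\Omega$ was arranged precisely so that each intermediate hat-tensor lies in the domain of the next map.

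For the analytic content, I would compose the full-tensor maps in the same order:
\begin{equation*}
  b'(b)=b'\!\bigl(b_L(b_1(b)),b_R(b_1(b))\bigr),\qquad
  \calN(b)=\calN_1(b)^2\,\calN_2\!\bigl(b_L(b_1(b)),b_R(b_1(b))\bigr).
\end{equation*}
Each factor is analytic on its respective domain by Props.~\ref{prop:summary-gauge}(b), \ref{prop:bLbRms}, \ref{prop:summary-rec-rot}, and the inclusion chain built into $\Omega$ guarantees that $b_1(b)\in\Omega_1$ and $(b_L,b_R)\in\Omega_2$ whenever $b\in\Omega$, so analyticity is preserved under composition. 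The identity $b'(0)=0$ follows from $b_1(0)=0$, $b_L(0)=b_R(0)=0$ and $b'(0,0)=0$. For the non-vanishing of $\calN$, I use that $\calN_1\ne 0$ on $\Omega_1$ (Prop.~\ref{prop:summary-gauge}(d)) and $\calN_2\ne 0$ on $\Omega_2$ (Prop.~\ref{prop:summary-rec-rot}); their product is therefore nonzero on $\Omega$.

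The only slightly delicate item is $\calN(b)=1+O(b^2)$. From Prop.~\ref{prop:summary-gauge}(c), $\calN_1(b)=1+O(b^2)$, hence $\calN_1(b)^2=1+O(b^2)$. From Prop.~\ref{prop:summary-rec-rot}, $\calN_2(b_L,b_R)=1+O(b_L b_R)$. Since $b_1(b)=b+O(b^2)$ (this is exactly the linearized analysis of \cref{sec:lin-gauge-transformation}, encoded in Prop.~\ref{prop:summary-gauge}(c)), and $b_L(b_1)$, $b_R(b_1)$ both vanish at $b_1=0$, they are $O(b_1)=O(b)$ by analyticity, so $b_L b_R=O(b^2)$. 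Multiplying the two expansions gives $\calN(b)=1+O(b^2)$, completing the proof. I do not anticipate any genuine obstacle here: the proposition is a bookkeeping statement, and the main effort has already been done in the preceding sections; the one step worth double-checking is the chain of domain inclusions used in the analytic composition, which is built into the very definition of $\hat\Omega$.
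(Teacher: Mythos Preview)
Your proposal is correct and follows exactly the route the paper intends: the paper itself omits the proof, stating only that the proposition is obtained by putting together Props.~\ref{prop:summary-gauge}, \ref{prop:bLbRms}, \ref{prop:summary-rec-rot}, and you have simply supplied the bookkeeping details. One small slip in the analytic-composition paragraph: you write ``$b_1(b)\in\Omega_1$'' where you mean $b\in\Omega_1$ (so $b_1$ is defined) and $b_1(b)\in\Omega_{LR}$ (so $b_L,b_R$ are defined); the argument is clearly unaffected.
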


\subsection{Constraints on $w_{\x}$ and $w_\o$ from linear stability}\label{sec:leading}

 Now that the 2x1 map has been described in detail, we would like to come back to the question about the numerical values of parameters $w_{\x}$ and $w_\o$. We would like to require that the linearization of the RG map is a contraction in some norm, as discussed in \cref{sec:lin}. Let us use the leading part of the master function to control the linearization of the RG map. We write
\beq
b'(b)= b'{}^{(1)}+O(b^2)
\eeq
where $b'^{(1)}$ is linear in $b$. We can get a hat-tensor for $b'{}^{(1)}$ by extracting the leading part of the master function, as follows:
\beq
\label{eq:linpart}
\widehat{b}'^{(1)}=\lim_{\epsilon\to 0} \epsilon^{-1} \mathfrak{M}(\epsilon \widehat b) \,.\codetag
\eeq
Our computer code for the master function $\mathfrak{M}(\hat b)$ \cite{our-code} (see also App.~\ref{sec:accompanying-code}) is suitable for both numerical and symbolic calculations. Numerical mode will be used in Section \ref{sec:results} for controlling the map at the full nonlinear level.
Using the symbolic mode of the code, we easily evaluate \eqref{eq:linpart} and obtain \cite{our-code}:
\begin{equation}\label{eq:lead}
  \begin{aligned}
    \red{\widehat{b}'^{(1)}_{\u\x\o\o}} & =\left[
    {W\, \red{\widehat{b}^2_{\o\x\u\o} }}+
    \left( 1+1/{w_{\x}^2} \right) \widehat{b}_{\o\x\r\o}^{2}+\left( 1+ w_{\o}^{2}+1/{w_{\x}^2} \right) \widehat{b}_{\o\x\d\o}^{2}  \right]^{1/2},
    \\
   \red{\widehat{b}'^{(1)}_{\o\x\u\o}} & = \left[{W\,\red  \widehat{b}^2_{\o\o\d\x}}+\left( 1+1/{w_{\x}^2} \right) \widehat{b}_{\o\o\r\x}^{2}+\left( 1+ w_{\o}^{2}+1/{w_{\x}^2} \right) \widehat{b}_{\o\o\u\x}^{2}  \right]^{1/2},
    \\
   \red{\widehat{b}'^{(1)}_{\d\o\o\x}} & = \left[W\,{\red \widehat{b}^2_{\u\x\o\o}}+\left( 1+1/{w_{\x}^2} \right) \widehat{b}_{\r\x\o\o}^{2}+\left( 1+ w_{\o}^{2}+1/{w_{\x}^2} \right) \widehat{b}_{\d\x\o\o}^{2}  \right]^{1/2},
    \\
   \red{\widehat{b}'^{(1)}_{\o\o\d\x}} & = \left[W\, {\red \widehat{b}^2_{\d\o\o\x}}+\left( 1+1/{w_{\x}^2} \right) \widehat{b}_{\r\o\o\x}^{2}+\left( 1+ w_{\o}^{2}+1/{w_{\x}^2} \right) \widehat{b}_{\u\o\o\x}^{2}  \right]^{1/2},
    \\
   \red{\widehat{b}'^{(1)}_{\u\x\u\o}} & = \left[\left( 1+ w_{\o}^{2}+{1}/{w_{\x}^2} \right) \widehat{b}_{\o\x\d\x}^{2}
   +\left( 1+ w_{\o}^{2}+1/w_{\x}^2\right) \widehat{b}_{\o\x\u\x}^{2}+\left( 1+1/{w_{\x}^2} \right) \widehat{b}_{\o\x\r\x}^{2}\right]^{1/2},
    \\
   \red{\widehat{b}'^{(1)}_{\d\o\d\x}} & = \left[\left( 1+ w_{\o}^{2}+1/{w_{\x}^2} \right) \widehat{b}_{\d\x\o\x}^{2}+\left( 1+ w_{\o}^{2}+1/{w_{\x}^2} \right) \widehat{b}_{\u\x\o\x}^{2}+\left( 1+1/{w_{\x}^2} \right) \widehat{b}_{\r\x\o\x}^{2}\right]^{1/2},
    \\
   \red{\widehat{b}'^{(1)}_{\d\o\d\o}} & = \widehat{b}_{\o\x\o\x},                                                                                                                                                                                                      \\
   \red{\widehat{b}'^{(1)}_{\u\o\u\o}} & = \widehat{b}_{\o\x\o\x},
  \end{aligned} \codetag
\end{equation}
where we defined
\beq
W =  2 \left(1/w^2_{\x}+1/{w^2_{\o}} \right).
\eeq
Square roots in \eqref{eq:lead} are a consequence of combining in quadrature in \eqref{eq:quad}, which makes $\mathfrak{M}(\widehat b)$ non-analytic in $\widehat b$, see \cref{rem:nonan}. Checking \eqref{eq:lead} by pencil and paper would be a straightforward but tedious task; we haven't done it.

The special sectors of $\widehat{b}'^{(1)}$ and of $\hat{b}$ are shown in  \eqref{eq:lead} in red. (The special sectors were defined in \cref{eq:special}.) Only special sectors appear in the l.h.s. This means that the non-special sectors of $b'{}^{(1)}$ vanish. This is as discussed in \cref{sec:lin}.

The non-special sectors do appear in the r.h.s. of  \eqref{eq:lead}. But as discussed in \cref{sec:lin}, we may focus on the special-to-special part of the linearization, denoted there $K_{ss}$, and it's enough to ascertain that the operator norm (with respect to the HS norm of tensors) $\|K_{ss}\|<1$. Using \eqref{eq:lead}, we have the bound
\beq
\|K_{ss}\|\le \sqrt{W}\,.
\eeq
Therefore, for $W<1$ our map will be a contraction at the linearized level in the $\vvvert\cdot\vvvert$ norm described in  \cref{sec:lin}. Since the 2x1 map is analytic, it will also remain a contraction at the full nonlinear level in a sufficiently small neighborhood of the origin and in the same norm. So, once RG iterations get sufficiently close to the origin, further iterations will decrease exponentially fast.

Below we will always choose parameters $w_\x$ and $w_\o$ so that $W<1$. However we will no longer discuss the norm in which the map is a contraction. Instead, we will switch to controlling the map through the master function, as described in Section \ref{sec:control-of-the-2x1-map-via-hat-tensors}. This method can efficiently control both the early RG iterations, until the exponential decay sets in, and, thanks to Key Lemma \ref{lem:key}, later exponentially decaying iterations, saving the effort to exhibit a contracting norm.

\subsection{Symmetries}\label{sec:symmetries}

In conclusion, let us discuss symmetries of the 2x1 RG map, starting with the lattice symmetries.
The square lattice has $\mathbb{Z}_4$ discrete rotation symmetry and $\mathbb{Z}_2\times \mathbb{Z}_2$ reflection symmetries in the horizontal and vertical direction. The 2x1 map does not preserve these symmetries. For the rotation symmetry this is quite obvious, since the horizontal and the vertical directions are treated asymmetrically.

As for the reflection symmetries, they are violated by the assignment of minus signs for the $\rho_4,\rho_5$ tensors relative to the $\lambda_4$, $\lambda_5$ tensors (\cref{sec:decomposition-of-disentangler}). Notice however that when we defined the master function, we replaced all minus signs by plus signs. So the master function preserves the reflection symmetries. (Note that the vertical reflection exchanges $\u$ and $\d$.)

Let us discuss next global symmetries. For lattice spin models, a global symmetry acts on spin values in a way which preserves the energy of a spin configuration. When such a model is translated into a tensor network, the global symmetry is inherited by the resulting tensor. The tensor network partition function remains invariant when we act on the tensor by gauge transformations, Fig.~\ref{fig:bigfig}(a). The global symmetry group is formed by gauge transformations which leave the tensor invariant. In practice, consequences of a global symmetry group are best understood by decomposing the Hilbert spaces of vertical and horizontal legs into irreducible representations. Global symmetry then manifests itself through selection rules: a tensor element can be nonzero only if the tensor product of the four respective representations contains the trivial representation. We will see some examples in \cref{sec:results}.

In our description of the 2x1 map we did not keep track of the possible global symmetry $\mathcal{G}$. It is possible to generalize the discussion.  It is natural to assume that index $0$ transforms in the trivial representation. Then, one can show that the 2x1 map preserves the global symmetry. Namely, all tensors which entered the definition of the map can be consistently defined so that they are $\mathcal{G}$-invariant, provided that $b$ is $\mathcal{G}$-invariant. In particular the final tensor $b'$ will be $\mathcal{G}$-invariant.
To take advantage of this, it would be necessary to define a more fine-grained master function, dividing various sectors into subsectors corresponding to different irreps. This will not be done in this work.

\section{Computer-assisted bounds on the high-T phase}\label{sec:results}

In this section, we will use the 2x1 map defined in \cref{sec:construction} to prove results in Table \ref{tab:results} about the high-T phase of lattice models. We consider general tensors first, and a couple of specific models later.

\subsection{General tensors}

Let $\mathbb{O}$ be a set containing the origin in $\mathbb{H}_0$ (which may or may not be open). Consider a tensor network built out of tensor $A=A_*+b$, $b\in \mathbb{O}$. Fix some values of reweighting parameters of the 2x1 map. We will say that $\mathbb{O}$ is a \emph{basin of stability} if for any $b=b^{(0)}\in \mathbb{O}$:
\begin{itemize}
	\item
the 2x1 map can be iterated infinitely many times;
\item
the resulting sequence of tensors $b^{(i)}$, $i=1,2,\ldots$, tends to zero in the HS norm.
\end{itemize}

For $\delta>0$, we consider a closed neighborhood of the origin in $\mathbb{H}_0$ given by:
	\beq
	\mathbb{O}_\delta=\{b\in \mathbb{H}_0: \|b_{abcd}\|\le \delta \text{ for all $abcd\ne \o\o\o\o$}\}\,.
	\eeq
The following theorem is our first main result.
\begin{theorem}(Stability of the high-T fixed point)
	\label{th:stability}
Fix the reweighting parameters $w_\x=2.2$, $w_\o=2$. Then $\mathbb{O}_\delta$, $\delta=0.02$, is a basin of stability.
\end{theorem}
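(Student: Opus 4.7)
The plan is to combine monotonicity and subhomogeneity of the master function $\mathfrak{M}$ (Proposition~\ref{prop:summary-master}) with the Key Lemma~\ref{lem:key}, reducing the infinite-dimensional claim about all of $\mathbb{O}_\delta$ to a single finite, computer-assisted calculation on a ``worst-case'' hat-tensor. First I would define the uniform hat-tensor $\widehat{b}^\star \in \hat{\mathbb{H}}_0$ whose components are $\delta$ on every sector $abcd \neq \o\o\o\o$ and $0$ on $\o\o\o\o$. By the definition of $\mathbb{O}_\delta$, any $b$ in that set has $\|b_{abcd}\| \le \delta$, so its minimal hat-tensor $\widehat{b}_b$ satisfies $\widehat{b}_b \le \widehat{b}^\star$ componentwise. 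Letting $b^{(i)}$ denote the iterates of the 2x1 map starting at $b^{(0)} := b$, and letting $\widehat{b}_b^{(i)}$ denote the hat-tensors obtained by iterating $\mathfrak{M}$ from $\widehat{b}_b$, monotonicity of $\mathfrak{M}$ and downward closedness of $\widehat{\Omega}$ give, by induction, $\widehat{b}_b^{(i)} \le \widehat{b}^{\star,(i)}$ componentwise, for as long as the right-hand side is defined. Hence the entire theorem reduces to the single claim that the iteration of $\mathfrak{M}$ starting at $\widehat{b}^\star$ (i)~stays in $\widehat{\Omega}$ forever and (ii)~drives $\widehat{b}^{\star,(i)}$ to zero componentwise.

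The strategy for (i) and (ii) is exactly the one described in \cref{sec:control-of-the-2x1-map-via-hat-tensors}. Using the accompanying code \cite{our-code} in interval-arithmetic mode, I would iterate $\widehat{b}^{\star,(i+1)} = \mathfrak{M}(\widehat{b}^{\star,(i)})$ and after each step verify that the current hat-tensor lies in $\widehat{\Omega}$, i.e.\ that the scalar bounds $\widehat{\beta}_h, \widehat{\beta}_v, \widehat{\alpha} < 1$ and $\widehat{\calN}_1, \widehat{\calN}_2 < 2$ all hold. One continues until the componentwise contraction condition \eqref{eq:bhatcase2} is certified: there exists an iterate $i_0$ and an explicit $\lambda < 1$ such that $\widehat{b}^{\star,(i_0+1)} \le \lambda\, \widehat{b}^{\star,(i_0)}$. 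Once this is observed, the Key Lemma~\ref{lem:key} takes over and delivers $\widehat{b}^{\star,(i)} \le \lambda^{i-i_0}\, \widehat{b}^{\star,(i_0)}$ for all $i > i_0$; downward closedness of $\widehat{\Omega}$ then implies $\widehat{b}^{\star,(i)} \in \widehat{\Omega}$ for all $i \ge i_0$, and exponential decay to zero is automatic.

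Propagating back to an arbitrary $b \in \mathbb{O}_\delta$ is then immediate. Since $\widehat{b}_b^{(i)} \le \widehat{b}^{\star,(i)}$ and $\widehat{b}^{\star,(i)} \in \widehat{\Omega}$ for every $i$, downward closedness of $\widehat{\Omega}$ gives $\widehat{b}_b^{(i)} \in \widehat{\Omega}$ for every $i$, hence $b^{(i)} \in \Omega$ and the 2x1 map can be applied indefinitely. Exponential decay of $\widehat{b}^{\star,(i)}$ forces componentwise decay of $\widehat{b}_b^{(i)}$, and finally
\begin{equation*}
\|b^{(i)}\|^2 \;=\; \sum_{abcd} \|b^{(i)}_{abcd}\|^2 \;\le\; \sum_{abcd} \bigl(\widehat{b}_b^{(i)}\bigr)_{abcd}^{\,2} \;\longrightarrow\; 0,
\end{equation*}
establishing convergence in HS norm.

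The hard part is precisely the finite computation certifying (i) and (ii) with the chosen parameters $w_\x = 2.2$, $w_\o = 2$. With these values one computes $W = 2(w_\x^{-2} + w_\o^{-2}) \approx 0.913 < 1$, so the linearization \eqref{eq:lead} contracts at rate roughly $\sqrt{W} \approx 0.96$, comfortably below but not dramatically below $1$. This means two things. On the one hand, $\delta = 0.02$ must be small enough that the nonlinear contributions to $\mathfrak{M}$ built up in \cref{sec:gauge,sec:disent,sec:reconnection-and-rotation,sec:masterf} — which are at least quadratic in $\widehat{b}$ — do not overwhelm the linear contraction before a decrease is observed. On the other hand, because the contraction rate is close to $1$, several master-function iterations may be needed before the iterate is small enough for the quadratic-and-higher terms to become irrelevant and the inequality \eqref{eq:bhatcase2} to trigger. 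Both of these are confirmed by running the companion code with interval arithmetic; the role of the argument above is to explain why the resulting finite numerical certificate implies the full statement of the theorem.
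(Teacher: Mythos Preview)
Your proposal is correct and follows essentially the same strategy as the paper: iterate the master function from the uniform hat-tensor with all non-$\o\o\o\o$ entries equal to $\delta$, verify in interval arithmetic that condition \eqref{eq:bhatcase2} eventually holds (the paper reports $i_0=15$, $\lambda=0.96$), and invoke the Key Lemma. Your detour through the minimal hat-tensor $\widehat{b}_b$ and monotonicity is a slight redundancy---the paper simply notes that $\widehat{b}^\star$ is itself a hat-tensor for every $b \in \mathbb{O}_\delta$, so a single hat-tensor sequence suffices---but the argument is otherwise identical.
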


\begin{proof}
	To control the 2x1 map we use the master function as described in Section \ref{sec:control-of-the-2x1-map-via-hat-tensors}.
	We define a hat-tensor $\hat b^{(0)}\in \hat{\mathbb{H}}_0$ with $b^{(0)}_{\o\o\o\o}=0$ and all other components $\hat b^{(0)}_{abcd}=\delta$. We then start acting on $\hat b^{(0)}$ repeatedly with the master function $\mathfrak{M}$, generating a sequence of hat-tensors $\hat b^{(i)}$, $i=1,2,\ldots$. The idea is that $\hat b^{(0)}$ defines a box around $A_*$ which flows under RG, and we need to arrange that it flows to zero size.

We perform the first $i_0+1$ iterations numerically using our computer code. At every step the code checks that the master function is defined and if so, it evaluates the next $\hat b^{(i)}$. Suppose that all these evaluations went through, and the last two tensors satisfy \eqref{eq:bhatcase2} which we copy here:
\beq
\label{eq:start}
\hat b^{(i_0+1)}\le \lambda\, \hat b^{(i_0)}\text{ for some $\lambda<1$}. \codetag
\eeq
For $\delta=0.02$, $w_\x=2.2, w_\o=2$, our calculations \cite{our-code} show that this holds for $i_0=15$ with $\lambda=0.96$. (Our code evaluates $\hat b^{(i)}$ and checks \eqref{eq:start} using interval arithmetic so that the argument is rigorous.)
Given \eqref{eq:start}, Key Lemma \ref{lem:key} tells us that all further iterations of the master functions are also defined and tend to zero.

By Proposition \ref{prop:summary-master}, since the master functions iterates are defined, the 2x1 map iterates $b^{(i)}$, $i=1,2,\ldots$ are also defined for any $b^{(0)}$ for which $\hat b^{(0)}$ is a hat-tensor, i.e.~for any $b^{(0)} \in \mathbb{O}_\delta$, and moreover $\hat b^{(i)}$ is a hat tensor for $b^{(i)}$. Therefore, $b^{(i)}$ tends to zero in the HS norm.
\end{proof}

How did we arrive at the above numerical values of $\delta,w_\x,w_\o$? We took $w_{\x}$, $w_\o$ within the range $W<1$ but not too large, as the latter would unnecessarily enhance $K_{ns}$ and the nonlinear terms. We ran master function iterations for several values of $w_\x$ and $w_\o$ near 2, and for several values of $\delta$, producing plots like \cref{fig:0.02box} below. We then picked $w_\x=2.2$ and $w_\o=2$ as giving us the largest $\delta$ among those we tried, for which we could see exponential decrease set in after 10-20 iterations. Of course, varying $w_{\x}$ and $w_\o$ a bit, the theorem still holds for a somewhat different $\delta$. In the future, it would be interesting to optimize them systematically and to establish an even larger basin of attraction of the high-T fixed point; see \cref{sec:conclusions} for a discussion.

In \cref{fig:0.02box} we show the sequence of hat-tensors as a function of the RG step $i$, for $\delta,w_\x,w_\o$ as in the theorem. Note that many components of $\hat{b}^{(i)}$ behave nonmonotonically in the initial iterations, but eventually the exponential decay sets in, and condition \eqref{eq:start} can be verified. The initial nonmontonic behavior suggests that it would be challenging, or perhaps impossible, to prove Theorem \ref{th:stability} by exhibiting a norm in which the 2x1 map is a contraction in the whole of $\mathbb{O}_\delta$.

\begin{figure}[h]
	\centering
	\codetag\includegraphics[scale=0.7]{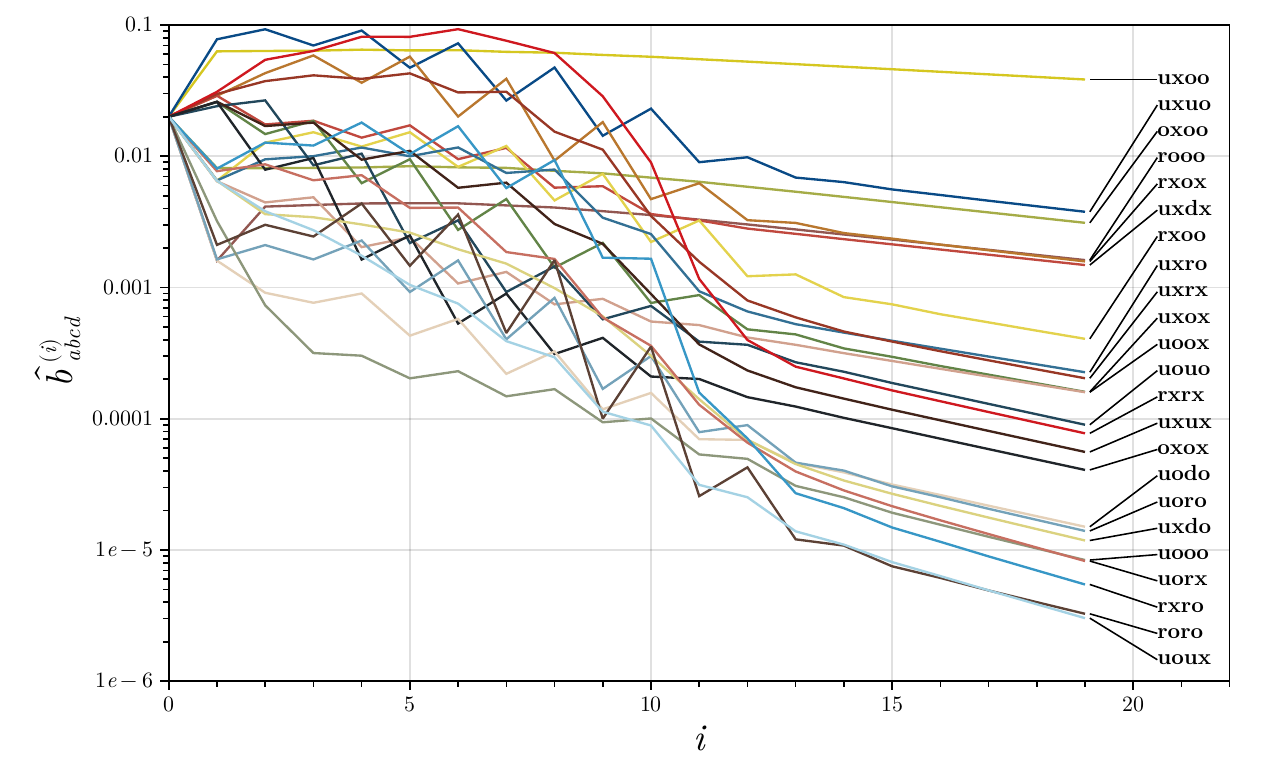}
	\caption{Master function iterates with the initial conditions $\widehat{b}^{(0)}_{\o\o\o\o} = 0$, $\widehat{b}^{(0)}_{abcd} = \delta = 0.02$ for all other components, and for the reweighting parameters $w_\x = 2.2$,  $w_{\o} = 2$. We only plot one sector per each group related by the lattice reflections. This is because the reflection symmetry of the initial $\widehat{b}^{(0)}$ is preserved by the master function, see \cref{sec:symmetries}. (Our interval arithmetic calculations are initialized with precision $10^{-21}$. After 20 iterations the precision is $10^{-17}$, i.e.~completely negligible on the scale of this plot.)
	}

	\label{fig:0.02box}
\end{figure}

We next discuss the free energy. The tensor network free energy per site is defined by \cite{paper2}\footnote{Sometimes the free energy is defined with the opposite sign.}
\beq\label{eq:freeen}
f(A) = \lim_{\ell \to \infty} f(A,\ell\times\ell),\quad f(A,\ell_x\times\ell_y)=\frac{1}{\ell_x\ell_y} \log Z(A,\ell_x\times \ell_y)\,.
\eeq
\codetag We will write $f(b)$ to denote $f(A_*+b)$ hoping that this will not lead to a confusion. For Hamiltonians with finite-range interactions, and hence for tensor networks obtained by translating partition functions of such Hamiltonians, the existence of such a limit is a classic result \cite{ruelle1999statistical}. For an arbitrary $A$ the limit may not always exist.

However, if the RG iterates tend to zero, then it's easy to show that the limit exists along a subsequence and moreover the free energy is analytic. Thus all such tensors belong to the same phase, which we call the high-T phase. More precisely we have the following result:

\begin{proposition}\label{prop:free} Suppose $\mathbb{O}$ is a basin of stability. Then for any $b\in \mathbb{O}$ the limit defining the free energy exists along the subsequence $\ell=2^k$. The free energy is given by the sum of the following series:
	\beq
	\label{eq:fbseries}
	f(b) = \sum_{i=0}^\infty 2^{-i-1}\log \calN(b^{(i)})\,.
	\eeq
	It is analytic in the interior of $\mathbb{O}$.
	\end{proposition}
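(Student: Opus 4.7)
The plan is to iterate the identity
\begin{equation*}
Z(A, \ell_x \times \ell_y) \;=\; \mathcal{N}(b)^{\ell_x\ell_y/2}\, Z(A^{(1)}, \ell_y \times \ell_x/2),
\end{equation*}
obtained from \eqref{eq:Zpreserved} together with $Z(cA, n\times m) = c^{nm} Z(A, n\times m)$. Taking the logarithm, dividing by $\ell_x \ell_y$, and applying the identity twice per two-dimensional halving, I get, for $f_k(b) := f(A_* + b, 2^k \times 2^k)$ and any $1 \leq m \leq k-1$, the telescoping formula
\begin{equation*}
f_k(b) \;=\; \sum_{i=0}^{2m-1} 2^{-i-1} \log \mathcal{N}(b^{(i)}) \;+\; 2^{-2m}\, f_{k-m}(b^{(2m)}).
\end{equation*}
Setting $m = k-1$ reduces the remainder to $\tfrac{1}{4}\,2^{-2(k-1)}\log Z(A_* + b^{(2k-2)},\, 2 \times 2)$, involving only the $2\times 2$ partition function.

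I then send $k \to \infty$. Because $\mathbb{O}$ is a basin of stability, by \cref{lem:key} the hat-tensor iterates $\widehat{b}^{(i)}$ eventually satisfy \eqref{eq:bhatcase2} and from then on decay as $\lambda^i$ with some $\lambda<1$, so $\|b^{(i)}\| \to 0$ exponentially fast. \cref{prop:summary-master} gives $\mathcal{N}(b) = 1 + O(\|b\|^2)$ and $\mathcal{N} \neq 0$ on $\Omega$, hence $|\log \mathcal{N}(b^{(i)})| = O(\|b^{(i)}\|^2)$, making the series $\sum 2^{-i-1} \log \mathcal{N}(b^{(i)})$ absolutely convergent. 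For the remainder, $\|b^{(2k-2)}\| \to 0$ forces $Z(A_* + b^{(2k-2)},\, 2 \times 2) \to Z(A_*, 2 \times 2) = 1$ (the $2 \times 2$ partition function being continuous in the HS norm, with value $1$ at $b=0$ since only all-zero indices contribute), so $\log Z$ stays bounded while the $2^{-2(k-1)}$ prefactor crushes the remainder to zero. This gives both existence of the limit along $\ell = 2^k$ and the series representation of $f(b)$.

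For analyticity on $\mathrm{int}(\mathbb{O})$: given $b_0 \in \mathrm{int}(\mathbb{O})$, I choose $r > 0$ so small that a single hat-tensor $\widehat{b}^{*}$ bounds every $b$ in the closed ball $\bar B(b_0, r)$, lies inside $\widehat{\Omega}$, and whose master-function iterates still satisfy \eqref{eq:bhatcase2}. The Key Lemma then produces a \emph{uniform} exponential bound $\|b^{(i)}\| \leq C\lambda^i$ for all $b \in \bar B(b_0, r)$ (via $\|b^{(i)}\| \leq \|\widehat{b}^{(i)}\|$). Each partial sum of the series is analytic in $b$ (the iterate $b^{(i)}$ being analytic by \cref{prop:summary-master}, with $\mathcal{N}$ analytic and nonzero), and uniform convergence on $\bar B(b_0, r)$ follows from the quadratic bound on $|\log \mathcal{N}|$ combined with the exponential decay. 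Weierstrass's theorem then yields analyticity of the limit $f$ at $b_0$.

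The main technical obstacle I anticipate is the uniform hat-tensor construction in the analyticity step: one must verify that each interior point admits a common hat-tensor bounding an entire open ball of tensors, which lies in $\widehat{\Omega}$ and whose master-function orbit still enters the exponential-decay regime of \eqref{eq:bhatcase2}. This should follow from downward-closedness of $\widehat{\Omega}$, continuity and monotonicity of $\mathfrak{M}$, and the stability of condition \eqref{eq:bhatcase2} under small perturbations of the initial hat-tensor, but writing this out cleanly requires some care.
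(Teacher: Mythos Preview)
Your approach is essentially the same as the paper's: iterate \eqref{eq:Zpreserved} to get a telescoping identity, use $b^{(i)}\to 0$ together with $\calN=1+O(b^2)$ and $\calN\ne 0$ from \cref{prop:summary-master} for convergence of the series, and show the $2\times 2$ remainder dies. The paper's proof is terser (it defers to \cite{paper2}, Prop.~4.3) but structurally identical.

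One point to clean up: your sentence ``Because $\mathbb{O}$ is a basin of stability, by \cref{lem:key} the hat-tensor iterates $\widehat{b}^{(i)}$ eventually satisfy \eqref{eq:bhatcase2}\ldots'' inverts the logic. The Key Lemma says \eqref{eq:bhatcase2} $\Rightarrow$ exponential decay of the \emph{master-function} iterates; it does not follow from the bare basin-of-stability hypothesis (which only gives $b^{(i)}\to 0$ in HS norm, and says nothing about hat-tensors or $\mathfrak{M}$-iterates). Fortunately you do not need exponential decay here: $b^{(i)}\to 0$ already makes $|\log\calN(b^{(i)})|$ bounded, and the geometric factor $2^{-i-1}$ alone gives absolute convergence and kills the remainder. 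Your analyticity paragraph is where the hat-tensor/Key Lemma machinery is genuinely needed to obtain \emph{uniform} bounds on a ball, and there your reasoning (and your caveat about it) is appropriate; the paper glosses over this step entirely.
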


\begin{proof} This is fully analogous to \cite{paper2}, Prop.~4.3, so we will be brief. (The RG map there had scale factor 4.) By Prop.~\ref{prop:summary-master}, $\calN(b)$ is analytic and nonzero in $\Omega$, and $\calN(b)=1+O(b^2)$. Since $b^{(i)}\to 0$, the series in \eqref{eq:fbseries} converges, and its sum is analytic in the interior of $\mathbb{O}$.
	Furthermore, by repeatedly using \eqref{eq:Zpreserved}, we show that $f(b,2^{k+1}\times 2^{k+1})$ equals the partial sum of \eqref{eq:fbseries} over $i\le 2k+1$ plus the "remainder" $2^{-2k} f(b^{(2k)},2\times 2)$. As $k\to\infty$, the remainder tends to zero, and we are done.
	\end{proof}

Combining this with Theorem \ref{th:stability}, we get

\begin{proposition}\label{prop:free1}
The free energy is analytic in the interior of $\mathbb{O}_{\delta}$, $\delta=0.02$.
\end{proposition}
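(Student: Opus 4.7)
The plan is to combine Theorem~\ref{th:stability} with Proposition~\ref{prop:free}; essentially no new work is required beyond verifying that the hypotheses of the latter are satisfied by the former. First, I would invoke Theorem~\ref{th:stability}, which (for the specific choice $w_\x = 2.2$, $w_\o = 2$) establishes that $\mathbb{O}_\delta$ with $\delta = 0.02$ is a basin of stability: every $b \in \mathbb{O}_\delta$ generates an infinite sequence of 2x1 RG iterates $b^{(i)}$ that tends to zero in Hilbert--Schmidt norm. This takes care of the dynamical assumption that Proposition~\ref{prop:free} requires about its set $\mathbb{O}$.

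Next, I would apply Proposition~\ref{prop:free} with $\mathbb{O} = \mathbb{O}_\delta$. The proposition provides two outputs: existence of the free energy $f(b)$ along the subsequence $\ell = 2^k$ given by the convergent series \eqref{eq:fbseries}, and analyticity of $f(b)$ on the interior of the basin. The analyticity part is precisely the conclusion of Proposition~\ref{prop:free1}, so the proof is complete once one checks that $\mathbb{O}_\delta$ (interpreted here as a closed neighborhood in $\mathbb{H}_0$) has nonempty interior, which is immediate since it contains an open HS-ball about the origin in each sector.

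There is essentially no obstacle: the only nontrivial checks are already done upstream. The hard content — monotonicity and subhomogeneity of $\mathfrak{M}$, the computer-assisted verification of \eqref{eq:start} after $i_0 = 15$ iterations, and the analyticity plus $\mathcal{N}(b) = 1 + O(b^2)$ behavior used to sum the series \eqref{eq:fbseries} — is encoded in Proposition~\ref{prop:summary-master}, Theorem~\ref{th:stability}, and Proposition~\ref{prop:free}, and I would simply cite them. If one wishes to be slightly more explicit, I would briefly recall that analyticity of each term $\log \mathcal{N}(b^{(i)})$ as a function of $b$ in the interior of $\mathbb{O}_\delta$ follows from the analyticity of $\mathcal{N}$ (Proposition~\ref{prop:summary-master}) composed with the analytic maps $b \mapsto b^{(i)}$, and that the series converges uniformly on compact subsets of the interior because $\widehat{b}^{(i)}$ decays exponentially there by Lemma~\ref{lem:key}, making $|\mathcal{N}(b^{(i)}) - 1|$ summable via the bound in Proposition~\ref{prop:summary-master}. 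Hence $f(b)$ is a locally uniform limit of analytic functions, and is therefore itself analytic by the standard Weierstrass theorem applied on the Banach space $\mathbb{H}_0$.
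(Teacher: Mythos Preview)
Your proposal is correct and matches the paper's approach exactly: the paper obtains Proposition~\ref{prop:free1} simply by the sentence ``Combining this with Theorem~\ref{th:stability}, we get,'' i.e.~by feeding the basin of stability from Theorem~\ref{th:stability} into Proposition~\ref{prop:free}. Your additional remarks about nonempty interior and uniform convergence of the series are fine but more detailed than what the paper itself spells out.
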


It's interesting to discuss how these results compare to the cluster expansion. The cluster expansion \cite{friedli_velenik_2017} also provides an analytic representation for the free energy in the region where it converges. The cluster expansion computes the Taylor expansion of the free energy of a lattice model in polymer activities. As discussed in \cite[App.~B]{paper1}, the free energy of a tensor network can also be computed via a cluster expansion, with polymers being connected subsets of $k$ lattice points, polymer activity bounded by $\|b\|^k$, and the hardcore repulsion as the compatibility condition. Applying a standard criteria of cluster expansion convergence such as Koteck\'y-Preiss \cite{kotecky_preiss_1986}, our back-of-the-envelope estimate gives convergence for  $\|b\|<0.011$.\footnote{This cluster expansion is similar in structure to the Ising model in strong magnetic field discussed in \cite[Section 5.7.1]{friedli_velenik_2017}, replacing their polymer activities by the $\|b\|^k$ bound, $k=|S|$. Note that $k\ge 2$ in our case. We choose $a(S)=\alpha |[S]_1|$ where $|[S]_1|$ is the same as in \cite{friedli_velenik_2017}, and $\alpha$ is a free parameter we introduce to optimize the result. We use their Exercise 5.3 to bound the number of polymers. The result $\|b\|<0.011$ follows for $\alpha=0.1$.} We don't want to claim that this is the best the cluster expansion can do. But it does look like our RG method is already doing better than the cluster expansion in this general tensor network setting. Furthermore it can be significantly improved, as we will discuss in \Cref{sec:conclusions}.

Unlike the cluster expansion of the free energy, Eq.~\eqref{eq:fbseries} is not a Taylor expansion. Term number $i$ in this equation, if expanded in $b$, would contain terms of all orders. Eq.~\eqref{eq:fbseries} converges extremely rapidly for two reasons. First, because of the factor $2^{-i-1}$ and second because $b^{(i)}\to 0$ and $\calN=1+O(b^2)$. For any concrete $b^{(0)}$, one can take advantage of this fast convergence by evaluating a few terms of the expansion explicitly and estimating the rest.

For generic $b^{(0)}$, we can get the following bounds on the free energy in $\mathbb{O}_\delta$.
Let us write $\widehat\calN_1=1+\widehat n_1$, $\widehat\calN_2=1+\widehat n_2$. Then, using $ \calN=\calN_1^2\calN_2$ we have the lower and upper bounds
\begin{gather}
C_i^-\le \log \calN(b^{(i)})\le  C_i^+,\nonumber \\
C_i^\pm =2 \log [1\pm \widehat n_1(\widehat b^{(i)})]+\log [1\pm \widehat n_2(\widehat b^{(i)})]\,.\label{eq:bndpm} \codetag
\end{gather}
Note that $C_i^-$ is negative while $C_i^+$ is positive.

We can use \eqref{eq:bndpm} to bound every term in the series \eqref{eq:fbseries}. We split the resulting series into the head $i\le i_0$ plus the tail $i> i_0$, where $i_0$ is as in the proof of Theorem \ref{th:stability}. The head can be evaluated since $\widehat n_1(\widehat b^{(i)})$, $\widehat n_2(\widehat b^{(i)})$ are available in the process of iterating the master function. A tail term number $i$ can be estimated by $2^{i_0-i}$ times the last head term, since $\widehat n_1$ and $\widehat n_2$ are monotonic, and we know from Key Lemma \ref{lem:key} that $\widehat b^{(i)}\le \widehat b^{(i_0)}$. We thus obtain free energy bounds valid for any $b\in \mathbb{O}_\delta$:
\beq
\label{eq:explbnd}
\sum_{i=0}^{i_0}(1+\delta_{i,i_0}) 2^{-i-1}C_i^- \le f(b) \le \sum_{i=0}^{i_0}(1+\delta_{i,i_0})2^{-i-1}C_i^+\,, \codetag
\eeq
where $\delta_{i,i_0}$ is the Kronecker delta, $\delta_{i,i_0}=1$ for $i=i_0$ and is zero otherwise.

Bound \eqref{eq:explbnd} may appear not particularly impressive, for example it does not determine the sign of $f(b)$. We show it to demonstrate the general idea. As mentioned above, in a more meaningful treatment we need to evaluate the first few terms of \eqref{eq:fbseries} explicitly, and use a bound like \eqref{eq:explbnd} on the remaining terms. This is expected to give an extremely accurate rigorous estimate of free energy throughout the basin of stability.

\subsection{Ising model}

Let us examine next how well our method can detect the high-T phase in some familiar lattice models, starting with the (nearest-neighbor) Ising model defined by the partition function:
\begin{equation}
  Z = \sum_{\sigma_{x} = \pm 1} \exp
  \Bigl(\beta \sum_{\langle xy\rangle } \sigma_{x}\sigma_{y} \Bigr),
\end{equation}
where $x$ are vertices of the square lattice and $\sum_{\langle xy\rangle }$ means summation over nearest-neighbor pairs. The Ising model can be represented as a tensor network composed of the tensor $A=A(\beta)$ with bond dimension $2$ and the following nonzero tensor elements \cite{TEFR},\cite[App.~A.1]{paper1}
\begin{gather}
	\label{eq:Isingtensor}
 A_{0000} = \cosh(4\beta) + 3, \quad  A_{0101} =  A_{1111} =\cosh(4\beta) - 1, \quad  A_{0011} = {\sinh(4\beta)}, \codetag
\end{gather}
and rotations thereof.\footnote{To be precise, tensor network partition function $Z(A,\ell\times\ell)$ with periodic boundary conditions equals the Ising model partition functions on the square lattice rotated by 45 degrees, which contains $2\ell^2$ spins. This needs to be kept in mind when comparing the free energy per site.\label{foot:factor2}} The $\mathbb{Z}_2$ symmetry of the Ising model is reflected in $A$ as follows (see Section \ref{sec:symmetries}). The state 0 is $\mathbb{Z}_2$-even i.e.~transforms in the trivial representation. The state 1 is $\mathbb{Z}_2$-odd i.e.~transforms in the nontrivial representation. All elements of $A$ with an odd number of 1 indices are zero. However, we will not take advantage of the $\mathbb{Z}_2$ symmetry in the analysis below.

As usual we normalize the tensor by writing $A=A_{0000}(A_*+b(\beta))$ with $b(\beta)\in \mathbb{H}_0$. Note that $b(\beta)=O(\beta)$ as $\beta\to0$ and has nonnegative tensor elements which are monotonic functions of $\beta$.

 We would like to find $\bar\beta$ such that $A_*+b(\beta)$ for $\beta\le\bar\beta$ flows to $A_*$ under tensor RG. This will imply that the Ising model for $\beta\le\bar \beta$ is in the high-T phase. Already Theorem \ref{th:stability} could be used to get an estimate for $\bar\beta$, but we will instead get a much better result using the explicit form of $A$.

We find it natural to do a preliminary RG step. We define tensor $A^{(0)}$ (which will be the initial tensor for 2x1 map iterations) as:
\beq
\label{eq:preliminary}
\myinclude{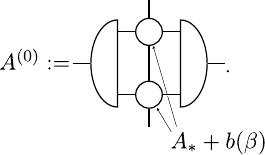}\codetag
\eeq
I.e.~$A^{(0)}$ is a contraction of two copies of $A_*+b$ put on top of each other, followed by an isometry on horizontal legs. We choose this isometry so that it maps 00 to 0,\footnote{This is a shorthand for $e_0\otimes e_0\mapsto e_0$.} 10 to 1, 01 to 2, 11 to 3.

 We think of $A^{(0)}$ having infinitely many indices, but only a finite number are "active" i.e.~give rise to nonzero components: 2 vertical and 4 horizontal. We define one horizontal and vertical sectors for $A^{(0)}$ per each active index (distributing inactive indices into $\x$, $\u$, $\d$, $\r$ so that each of these sectors is infinite-dimensional):
 \begin{align}
 	& \text{vertical:\ }&&\o = {\rm span}(e_0),\quad \x={\rm span}(e_1,\ldots),\nonumber\\
 	& \text{horizontal:\ }&&\o = {\rm span}(e_0),\quad \u={\rm span}(e_1,\ldots),\quad \d={\rm span}(e_2,\ldots),\quad
 	\r={\rm span}(e_3,\ldots)\,.
 \end{align}
 We see that $A^{(0)}$, unlike $A$, allows a natural sector decomposition $\o \oplus \d \oplus \u \oplus \r$ on the horizontal legs, which is the reason why we defined it.

$A^{(0)}$ is a normalized tensor. We write $A^{(0)}=A_*+b^{(0)}(\beta)$ with $b^{(0)}(\beta)\equiv b_{\rm Ising}^{(0)}(\beta)\in \mathbb{H}_0$. (We will use the "Ising" subscript only in formulations of main results.)

The tensor $b^{(0)}(\beta)$ is $O(\beta)$ for $\beta\to0$ and has nonnegative tensor elements which are monotonic functions of $\beta$. It inherits these properties from $b(\beta)$.

 Let $\hat b^{(0)}(\beta)$ be the minimal hat-tensor for $b^{(0)}(\beta)$. Because $b^{(0)}(\beta)$ is a nonnegative tensor and because every sector contains one active index, $\hat b^{(0)}(\beta)$ is obtained from $b^{(0)}(\beta)$ replacing active indices by sector labels.

Instead of the "square" neighborhood $\mathbb{O}_\delta$ in Theorem \ref{th:stability}, we will need more general "rectangular" subsets of $\mathbb{H}_0$ defined by
\beq
\mathbb{O}(\hat b)=\{b\in \mathbb{H}_0: \|b_{abcd}\|\le  \hat b_{abcd}\text{ for all $abcd\ne \o\o\o\o$}\}\,,
\eeq
where $\hat b$ is any hat-tensor with $\hat b_{\o\o\o\o}=0$. In other words, $\mathbb{O}(\hat b)$ is a set of all $b\in \mathbb{H}_0$ for which $\hat b$ is a hat-tensor. If all non-$\o\o\o\o$ components of $\hat b$ are positive, $\mathbb{O}(\hat b)$ is a closed neighborhood of the origin, otherwise it has empty interior.

The proof of Theorem \ref{th:stability} starts by considering the tensor $\hat b^{(0)}$ whose non-$\o\o\o\o$ components are all equal to $\delta$. We now repeat that proof with a different $\hat b^{(0)}$, namely $\hat b^{(0)}=\hat b^{(0)}(\bar\beta)$ defined as above. This tensor defines a rectangular box around $A_*$, whose size grows with $\bar\beta$. This box then flows under RG. We would like to take this initial box as large as possible, by increasing $\bar\beta$, while preserving the condition that it eventually flows to zero size. We ensure the latter by checking that master functions iterates initialized at $\hat b^{(0)}(\bar\beta)$ start converging exponentially after 10-20 steps. We can also tweak the reweighting parameters $w_\x$ and $w_\o$.

\begin{proposition}\label{eq:Isingbasin}
	(a) The set of tensors $\mathbb{O}(\hat b_{\rm Ising}^{(0)}(\bar\beta))$, $\bar\beta=0.12$, is a basin of stability for the 2x1 RG map with the reweighting parameters $w_\x=2.3$, $w_{\o} = 2$.\\[5pt]
	\noindent (b) The free energy is analytic on a neighborhood of $\mathbb{O}(\hat b_{\rm Ising}^{(0)}(\bar\beta))$.\\[5pt]
	\noindent (c) $b_{\rm Ising}^{(0)}(\beta) \in \mathbb{O}(\hat b_{\rm Ising}^{(0)}(\bar\beta))$ for any $0\le \beta\le \bar\beta.$
\end{proposition}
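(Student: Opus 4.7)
The plan is to prove the three parts in reverse order of difficulty: (c) is essentially immediate from earlier remarks in the text, (a) is a computer-assisted rerun of the proof of Theorem \ref{th:stability} with a different initial hat-tensor, and (b) requires a small extra enlargement step.

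For (c), I would invoke the monotonicity of the Ising tensor noted just after \eqref{eq:preliminary}: the components of $b(\beta)$ are nonnegative and monotonically nondecreasing in $\beta$, and this property is inherited by $b^{(0)}(\beta)$ since the preliminary step \eqref{eq:preliminary} is a sum of products of components of $A_*+b$ (plus the identification of $A_*$ pieces). Because every sector used here contains an active index, the minimal hat-tensor $\hat b^{(0)}(\beta)$ simply reads off the nonnegative entries of $b^{(0)}(\beta)$, hence is itself componentwise monotonically nondecreasing in $\beta$. For $0\le\beta\le\bar\beta$ this gives $\hat b^{(0)}(\beta)\le \hat b^{(0)}(\bar\beta)$ componentwise, so $\hat b^{(0)}(\bar\beta)$ is a hat-tensor for $b^{(0)}(\beta)$, which is exactly the statement $b^{(0)}(\beta)\in\mathbb{O}(\hat b^{(0)}(\bar\beta))$.

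For (a), I would reproduce the proof strategy of Theorem \ref{th:stability} with the initial hat-tensor $\hat b^{(0)}=\hat b^{(0)}(\bar\beta)$, $\bar\beta=0.12$, and the reweighting parameters $w_\x=2.3$, $w_\o=2$. Concretely, I would feed $\hat b^{(0)}(\bar\beta)$ (whose entries are obtained by running the Ising formulas \eqref{eq:Isingtensor} through the preliminary step \eqref{eq:preliminary}) into the companion code of \cite{our-code}, iterate $\mathfrak{M}$ in interval arithmetic, and verify that the contraction condition \eqref{eq:start}, $\hat b^{(i_0+1)}\le \lambda\,\hat b^{(i_0)}$ for some $\lambda<1$, holds after a moderate number of steps $i_0$. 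Together with Key Lemma \ref{lem:key} and Proposition \ref{prop:summary-master}, this furnishes convergence of $\hat b^{(i)}$ to zero, and hence convergence in HS norm of $b^{(i)}$ to zero for any $b^{(0)}\in\mathbb{O}(\hat b^{(0)}(\bar\beta))$, since the iterated $\hat b^{(i)}$ remains a hat-tensor for $b^{(i)}$ at each step.

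For (b), Proposition \ref{prop:free} applied to the basin from (a) gives analyticity on the interior of $\mathbb{O}(\hat b^{(0)}(\bar\beta))$, but the statement requires analyticity on a genuine open neighborhood of that set. To obtain this I would enlarge the initial hat-tensor componentwise to $\hat b^{(0),\eta}=\hat b^{(0)}(\bar\beta)+\eta\,\mathds{1}$ (adding $\eta>0$ to every non-$\o\o\o\o$ component) and rerun the iteration of (a) for this new hat-tensor. Because the verified margin $\lambda<1$ in (a) is strict and the master function depends continuously on its argument on $\hat{\Omega}$, for $\eta$ sufficiently small the contraction condition \eqref{eq:start} is still met, so $\mathbb{O}(\hat b^{(0),\eta})$ is again a basin of stability (by the same argument as in (a)); its interior is nonempty and contains $\mathbb{O}(\hat b^{(0)}(\bar\beta))$ strictly in its interior since $\hat b^{(0),\eta}$ strictly dominates $\hat b^{(0)}(\bar\beta)$ componentwise. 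Proposition \ref{prop:free} applied to this enlarged basin then yields analyticity of $f$ on a neighborhood of $\mathbb{O}(\hat b^{(0)}(\bar\beta))$.

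The main obstacle is the computer-assisted verification of \eqref{eq:start} in parts (a) and (b): the initial hat-tensor is much less symmetric than the uniform $\delta$ box of Theorem \ref{th:stability}, and some components (driven by the larger Ising entries after \eqref{eq:preliminary}) will be visibly larger than others, so the transient nonmonotonic behavior before exponential decay kicks in could in principle be severe. Checking that the chosen $w_\x,w_\o$ still produce decay before one exits $\hat\Omega$ is the one piece that cannot be done by hand, and the numerical margin must moreover be comfortable enough that the enlargement by $\eta$ in (b) does not destroy it. In practice this is simply a matter of running the existing code and reading off $\lambda$, and any fixed $\eta$ below a verifiable threshold suffices.
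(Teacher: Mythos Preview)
Your proposal is correct and follows essentially the same route as the paper: (c) via monotonicity of $\hat b^{(0)}(\beta)$, (a) by rerunning the interval-arithmetic iteration of Theorem~\ref{th:stability} on $\hat b^{(0)}(\bar\beta)$ (the paper reports $i_0=15$, $\lambda=0.95$), and (b) by the same $\eta$-enlargement trick followed by Proposition~\ref{prop:free}. One small sharpening worth noting: the paper points out that $\mathbb{O}(\hat b^{(0)}(\bar\beta))$ actually has \emph{empty} interior, because several non-$\o\o\o\o$ components of $\hat b^{(0)}(\bar\beta)$ vanish (by the $\mathbb{Z}_2$ selection rule), so the enlargement in (b) is forced rather than merely a convenience to cover the boundary.
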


\begin{proof} Part (a) is proved by the strategy from the preceding paragraph. \Cref{fig:ising} shows the corresponding master function iterates initialized at $\hat b_{\rm Ising}^{(0)}(\bar\beta)$. Condition \eqref{eq:start} is satisfied for $i_0=15$ with $\lambda=0.95$ \cite{our-code}. This proves part (a) by the arguments analogous to Theorem \ref{th:stability}.

We next argue for (b). Note that $\mathbb{O}(\hat b^{(0)}(\beta))$ has empty interior, some non-$\o\o\o\o$ elements of $\hat b^{(0)}(\beta)$ being zero, although starting from $\hat b^{(1)}(\beta)$ they are all nonzero (see \Cref{fig:ising}).\footnote{\label{note:selrule}Let's explain why this happens, taking sector $\u\x\u\o$ as an example. Since we are not subdividing the infinite-dimensional sectors by symmetry, there are $\mathbb{Z}_2$-even indices in $\u$, $\x$. For the initial tensor $b^{(0)}$, those indices are inactive, but they become active at subsequent iterations. See also the discussion in \cref{sec:symmetries}.} So we cannot appeal directly to Proposition \ref{prop:free}, but there is an easy workaround. Let $\hat b^{(0)}(\bar\beta,\epsilon)$ be a regularization of $\hat b^{(0)}(\bar\beta)$ where all non-$\o\o\o\o$ elements were increased by an $\epsilon>0$. When $\epsilon$ is very small, the first $i_0$ master function iterates of $\hat b^{(0)}(\bar\beta,\epsilon)$ are very close to those of $\hat b^{(0)}(\bar\beta)$. The given proof of Proposition \ref{eq:Isingbasin} then implies that $\mathbb{O}(\hat b^{(0)}(\bar\beta,\epsilon))$ is also a basin of stability, for a sufficiently small $\epsilon>0$. The latter set has nonempty interior, which contains $\mathbb{O}(\hat b^{(0)}(\bar\beta))$, and on which free energy is analytic by Proposition \ref{prop:free}.

Finally, (c) follows from the fact, stated above, that $\hat b_{\rm Ising}^{(0)}(\beta)$ is monotonic in $\beta$.
\end{proof}

Let us recap. We have translated the Ising model into a tensor network. We have shown that in the range $\beta\le \bar\beta=0.12$ the free energy of the resulting tensor network is analytic under arbitrary perturbations of sufficiently small HS norm (which include perturbations which break $\mathbb{Z}_2$ symmetry such as the magnetic field). This proves that the Ising model for $\beta\le 0.12$ is in the high-T phase. Note that the Ising model critical point is located at $\beta_c = \frac{1}{2} \log(1 + \sqrt{2}) \approx 0.44$. Thus, our RG argument captured roughly one quarter of the full high-T phase $\beta<\beta_c$.

For the sake of the argument, let us compare to the cluster expansion. The best published cluster expansion estimate we know of is $\beta<0.151$ \cite[Sec.7.0.1]{Procacci}.\footnote{We are grateful to Alessandro Giuliani for mentioning this optimized result, which also uses a somewhat improved version of the Koteck\'y-Preiss condition \cite{kotecky_preiss_1986}. We mention here two older results. Ref.~\cite[Section V.7, Example 2]{simon2014statistical} used the cluster expansion to prove analyticity for $\beta e^\beta<1/192$. Ref.~\cite{GALLAVOTTI} used a different method for a lattice gas. Applied to the Ising model, their condition gives analyticity for $\beta<0.02$.} So this is a bit better than our result for the Ising model. Of course both methods can be optimized further, if needed. As discussed further in Section \ref{sec:conclusions}, we see the main virtue of our method is that, after appropriate generalization, it can be used around the critical point.

\begin{figure}[h]
	\centering
	\codetag\includegraphics[scale=0.7]{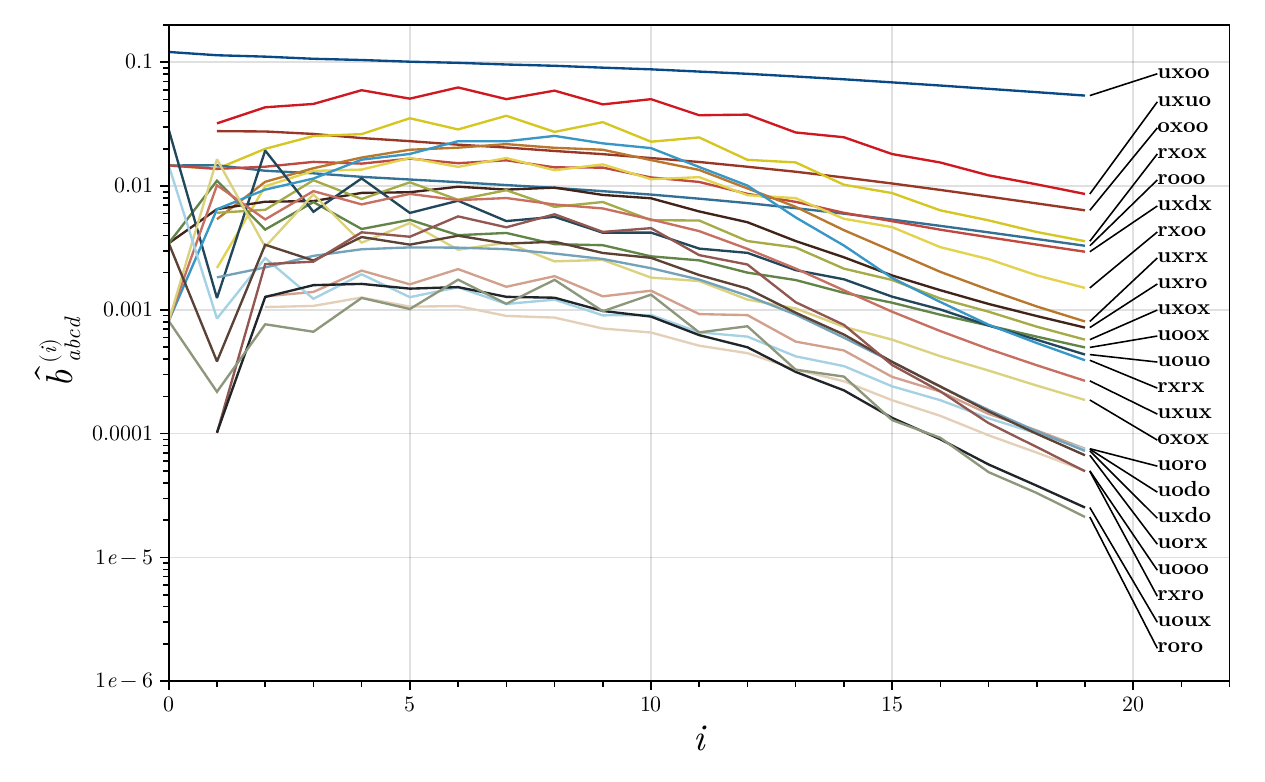}
	\caption{Master function iterates initialized at the tensor $\widehat{b}_{\rm Ising}^{(0)}(\bar\beta)$, $\bar\beta=0.12$. The reweighting parameters are: $w_{\o} = 2.0$ and $w_\x = 2.3$. Some of the trajectories appear only starting from the second step, as the initial tensor has these sectors identically zero.}
	\label{fig:ising}
\end{figure}

To conclude this section, let us use the master function iterates to get estimates of the free energy.
The free energy per site of the original tensor network obtained by translating the Ising model is given by
\beq
\label{eq:fAbeta}
f(A(\beta)) = \log(\cosh(4\beta)+3) + \frac{1}{2} f(b^{(0)}(\beta))\,,
\eeq
where the first term comes from the normalization of $A$ and the factor $1/2$ in front of the second term is because the preliminary RG step halved the number of tensors in the network. The free energy $f(b^{(0)})$ is computed by the series \eqref{eq:fbseries}. As discussed below Proposition \ref{prop:free1}, for a proper treatment we should evaluate the first few terms in \eqref{eq:fbseries} and estimate the rest. Since our purpose here is just to demonstrate the general idea, we will be content to simply bound all terms in $f(b^{(0)}(\beta))$ using the general formula \eqref{eq:explbnd}. Note that since $b^{(0)}(\beta)$ is a nonnegative tensor, the free energy is non-negative, and only the upper bound \eqref{eq:explbnd} is of interest. So we have:
\beq
\label{eq:fb0upper}
0\le f(b^{(0)}(\beta)) \le \sum_{i=0}^{i_0} (1+\delta_{i,i_0}) 2^{-i-1}C_i^+,
\end{equation}
where $C_i^+$ are given by \cref{eq:bndpm}. The bound on $f(b^{(0)}(\beta))$ and the resulting bounds on $f(A(\beta))$ are shown in \cref{fig:isingfbound} for $\beta \leq 0.12$. In the same plots we include the exact value of these quantities, extracted using Onsager's exact solution (see footnote \ref{foot:factor2}). For $f(A(\beta))$, the dominant contribution comes from the first term in \eqref{eq:fAbeta}. So the uncertainty due to $f(b^{(0)}(\beta))$ is hardly visible in much of the range of $\beta$.

\begin{figure}[h]
	\centering
	\codetag \raisebox{-0.6cm}{\includegraphics[width=0.45\textwidth]{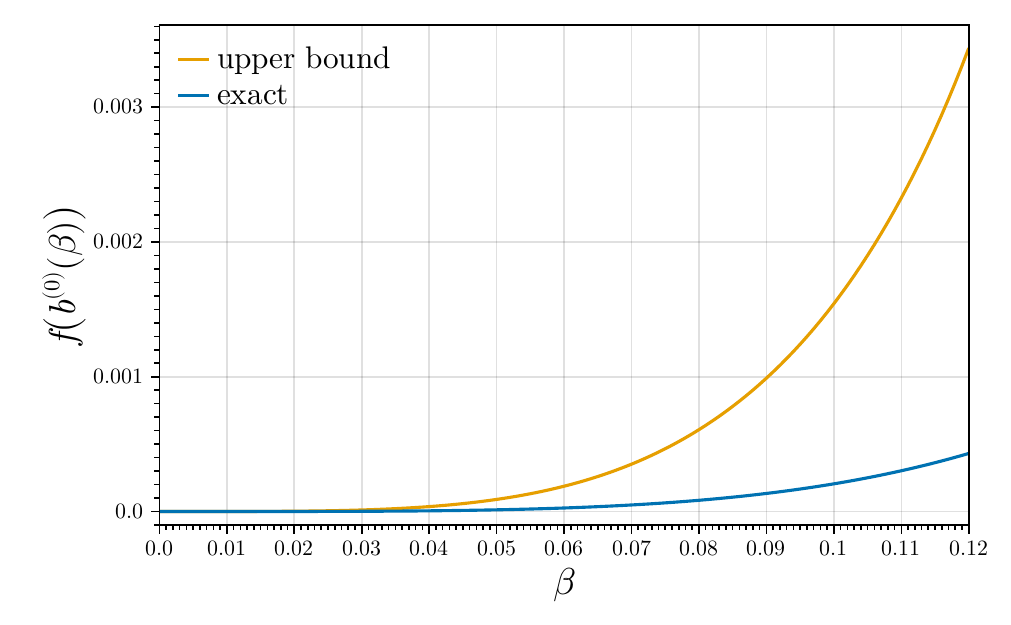}}(a)
	\raisebox{-0.6cm}{\includegraphics[width=0.45\textwidth]{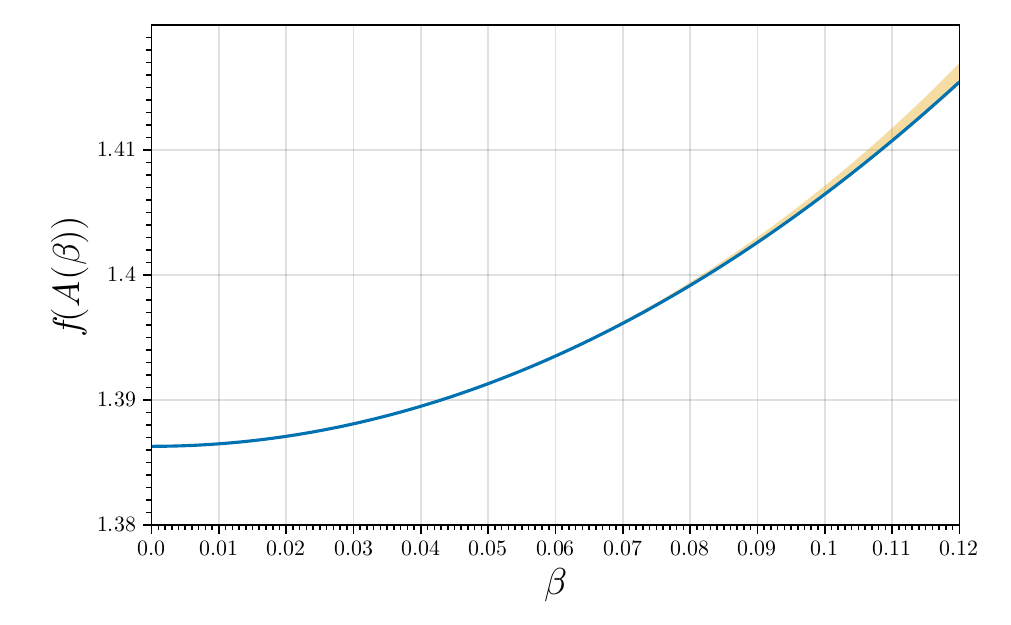}}(b)
	\caption{Bounds on the Ising model free energy compared to the exact values. (a) Orange: Upper bound \eqref{eq:fb0upper} on $f(b^{(0)}(\beta))$, computed for $\beta=0\ldots 0.12$ with step $10^{-3}$. For each $\beta$, master function iterations were performed up to $i_0=15$, checking the Key Lemma condition at the end. Blue: the exact value. (b) Shaded region: the allowed range of $f(A(\beta))$. Blue line: the exact value from Onsager's solution. }
	\label{fig:isingfbound}
\end{figure}

\subsection{XY model}\label{sec:xy-model}

Our last example is the (nearest-neighbor) XY model, defined by the following partition function:
\begin{equation}\label{eq:XY-pf}
  Z = \int \prod_{x} \frac{d \theta_{x}}{2\pi} \exp\Bigl( \beta \sum_{\langle xy\rangle} \cos(\theta_{x} - \theta_{y}) \Bigr),
\end{equation}
where $x$ are vertices of the square lattice, and the variables $\theta_{x}$ live on the unit circle $S^1\simeq[0,2\pi)$.
Since the space of spin states is continuous, tensor network representation of this model requires a tensor with infinitely many indices. This fits quite naturally into our framework.

We rewrite the partition function as a tensor network using the same method as for the Ising model in \cite{TEFR},\cite[App.~A.1]{paper1}, i.e.~rotating the square lattice by 45 degrees. Specifically, we encode the interaction among $\theta_1, \theta_2, \theta_3, \theta_4$ at the vertices of a single plaquette by the function
\begin{equation}\label{eq:XYcontinous}
  A_{\theta_1 \theta_2 \theta_3 \theta_4} = e^{\beta \left[ \cos(\theta_1 - \theta_2) + \cos(\theta_2 - \theta_3) + \cos(\theta_3 - \theta_4) + \cos(\theta_4 - \theta_1) \right] }.
\end{equation}
We view this function as a tensor on $V\times V\times V\times V$ where $V=L^2(S^1)$ with the measure $d\theta/(2\pi)$. The partition function \cref{eq:XY-pf} can then be obtained by placing copies of $A$ in every other plaquette in a checkerboard pattern and contracting neighboring tensors where contraction means integrating the product over $\theta\in S^1$. The tensor $A$ is Hilbert-Schmidt, the norm being given by
\beq
\|A\|=\left(\int \frac{d\theta_1}{2\pi}  \frac{d\theta_2}{2\pi} \frac{d\theta_3}{2\pi} \frac{d\theta_4}{2\pi}  |A_{\theta_1 \theta_2 \theta_3 \theta_4}|^2\right)^{1/2}<\infty\,.
\eeq

To connect with the theory developed in this paper, we introduce an orthonormal basis in $V$ and express $A$ in this basis. We use the orthonormal basis $e_n= e^{in\theta}$, $n \in \mathbb{Z}$. Thus our Hilbert space basis is numbered by $\mathbb{Z}$, not by $\mathbb{Z}_{\ge 0}$ as in the previous sections. Hopefully this will not lead to a confusion. The index $0$ will still play the same special role as in the previous sections. In this basis, our tensor is expressed as
\begin{equation}\label{eq:XYdiscrete1}
  A_{n_1 n_2 n_3 n_4} = \int \frac{d\theta_1}{2\pi} \frac{d\theta_2}{2\pi} \frac{d\theta_3}{2\pi} \frac{d\theta_4}{2\pi} \, A_{\theta_1 \theta_2 \theta_3 \theta_4} \, e^{-i(n_1 \theta_1 + n_2 \theta_2 + n_3 \theta_3 + n_4 \theta_4)}.
\end{equation}
To evaluate this, we use the Fourier series \cite[\href{https://dlmf.nist.gov/10.35.2}{(10.35.2)}]{NIST:DLMF}:
\begin{equation}\label{eq:expcos}
  e^{\beta \cos \theta } = \sum_{n \in \mathbb{Z}} I_n(\beta) e^{in\theta},
\end{equation}
where $I_n(\beta)$ is the modified Bessel function of the first kind. Substituting \eqref{eq:expcos} into \eqref{eq:XYcontinous}, we obtain after a straightforward computation:
\begin{equation}\label{eq:XYdiscrete2}
  A_{n_1 n_2 n_3 n_4} = \sum_{k \in \mathbb{Z}} I_k(\beta) I_{k + n_1}(\beta) I_{k + n_1 + n_2}(\beta) I_{k + n_1 + n_2 + n_3}(\beta) \, \delta_{n_1 + n_2 + n_3 + n_4}.\codetag
\end{equation}
Note that the XY model is invariant under the $U(1)$ global symmetry, under which the state $e_n$ has charge $n$. The $U(1)$ symmetry is visible in $A$ through the charge conservation condition $n_1 + n_2 + n_3 + n_4=0$ for the nonzero tensor elements.

Let us note that \cref{eq:XYdiscrete2} is not the only way to represent the XY model as a tensor network. Refs.~\cite{PhysRevE.89.013308, Jha_2020} provide an alternative method, which does not rotate the lattice by 45 degrees. It gives rise to a simpler-looking tensor
\beq
\label{eq:altXT}
 A^{\rm there}_{n_1 n_2 n_3 n_4} = \sqrt{I_{n_1}(\beta) I_{n_2}(\beta) I_{n_3}(\beta) I_{n_4}(\beta)} \, \delta_{n_1 + n_2 + n_3 + n_4}\,.
 \eeq
 Both \eqref{eq:XYdiscrete2} and \eqref{eq:altXT} are valid tensors representing the XY model partition function. They can be both used to get a bound on the high-T phase of the model. Below we work with \cref{eq:XYdiscrete2}, for two reasons. First, it was obtained by the same method we used of the Ising model, and it will be interesting to compare the result to the Ising model. Second, it happens to give a better bound than \eqref{eq:altXT}.\footnote{This was the result of our computations. A possible intuitive explanation is as follows. Because of the 45 degree lattice rotation, the tensor network built out of $\ell^2$ $A$'s corresponds to the XY partition function of $2\ell^2$ spins (see footnote \ref{foot:factor2} for the Ising). So passing from the spins to the tensors we have already performed a sort of "RG step" of scale factor $\sqrt{2}$. This is expected to bring us a bit closer to the high-T fixed point. Therefore, tensor RG iterations should have an easier time converging for $A$ than for $A^{\rm there}$ for which there is no such effect.}

 We write
 \beq
 \label{eq:bbeta}
 A=A_{0000} (A_*+b(\beta)),
 \eeq
 where $b(\beta)\in \mathbb{H}_0$. We define $\o$ and $\x$ as follows:
 \begin{equation}
 	\o = \mathrm{span}(e_0), \qquad \x = \mathrm{span}(\{e_n: n \ne 0\}).
 \end{equation}
 We use these sectors on both vertical and horizontal legs of $b(\beta)$. As a first step, we need to compute a hat-tensor $\hat b(\beta)$ for $b(\beta)$. $I_n(\beta)$ decreases exponentially with $|n|$. So the series \eqref{eq:XYdiscrete2} defining $A_{n_1 n_2 n_3 n_4}$ converges rapidly, and moreover the resulting tensor coefficients decrease exponentially for large $n_1,n_2,n_3,n_4$. For a purely numerical study, one could evaluate sufficiently many of these tensor coefficients, and thus get an approximate $\hat b(\beta)$. However since here we are aiming for a rigorous result, we need a rigorously valid $\hat b(\beta)$, which needs a bit of work to estimate tails. We describe our procedure in Appendix \ref{sec:hat-bbeta-for-the-xy-model}.

 Note that all states in the $\x$ sector carry non-zero $U(1)$ charge. So the hat tensor $\hat b(\beta)$ satisfies the selection rule $\hat b_{\o\o\o\x}=0$ (and rotations).

\codetag We next do the preliminary RG step expressed by the same diagram \eqref{eq:preliminary} as for the Ising model analysis. This defines the tensor $A^{(0)}=A_*+b_{\rm XY}^{(0)}(\beta)$ where $b_{\rm XY}^{(0)}(\beta)\in \mathbb{H}_0$. The isometry on the horizontal legs acts as follows:
\beq
\o\otimes\o\to \o,\quad \x\otimes\o\to \u, \quad \o\otimes\x\to\d, \quad \x\otimes\x\to\r.
\eeq
Thus, the horizontal Hilbert space of $A^{(0)}$ decomposes into 4 sectors $\o\otimes\u\otimes\d\otimes \r$.
We obtain the hat-tensor $\hat b_{\rm XY}^{(0)}(\beta)$ from $\hat b(\beta)$. Note that $(\hat b_{\rm XY}^{(0)})_{\o\o\o\o}=0$ because $\hat b_{\o\o\o\x}=0$.

\begin{remark}\label{rem:selruleXY}After the preliminary RG step, all states in sectors $\x,\u,\d$ (but not in $\r$) carry nonzero $U(1)$ charge. So $\hat b_{\rm XY}^{(0)}(\beta)$ like $\hat b(\beta)$ satisfies a selection rule: elements with three indices $\o$ and the fourth index $\x$, $\u$ or $\d$ vanish. After we start acting with the 2x1 map, some states of zero charge will enter also into $\x,\u,\d$, and this selection rule will no longer apply. This is visible in \Cref{fig:XY} below. See also Section \ref{sec:symmetries} and footnote \ref{note:selrule}.
	\end{remark}

\begin{figure}
	\centering
	\includegraphics[scale=0.7]{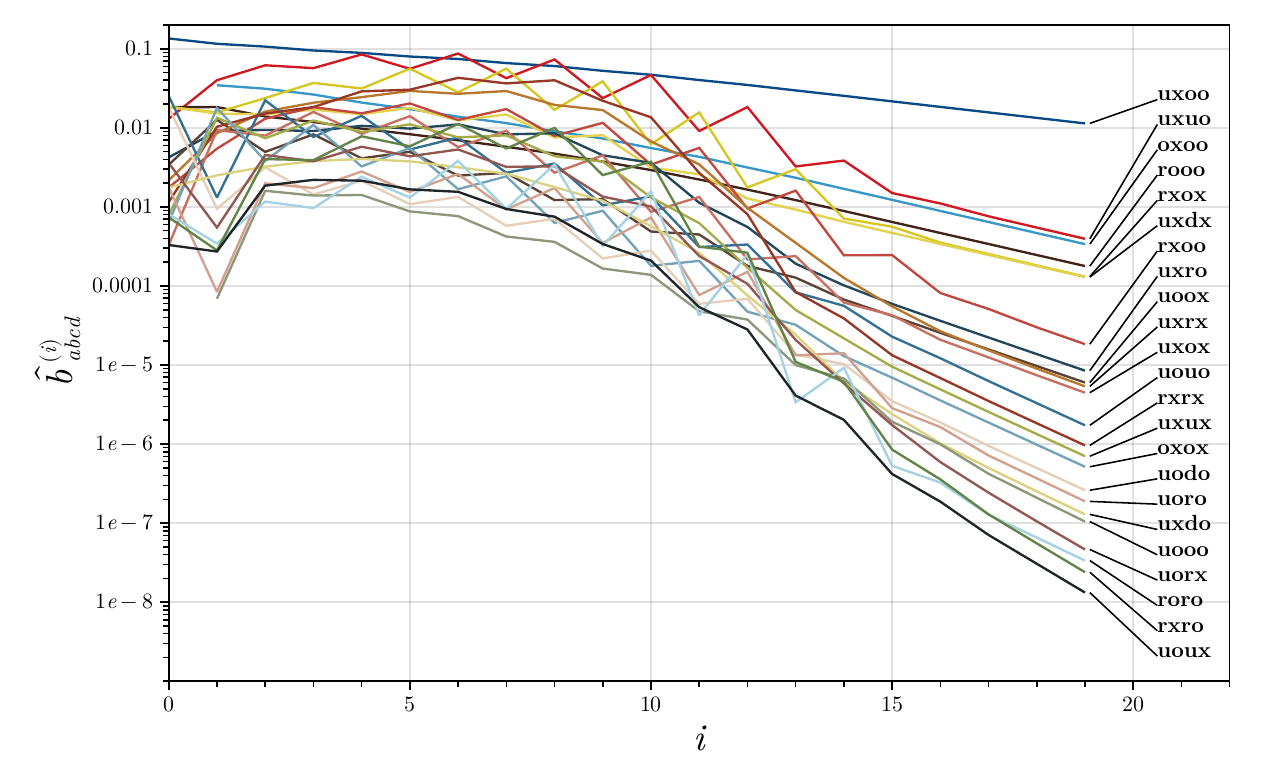}
	\caption{RG flow of the $\widehat{b}$ tensor for XY model initiated at $\beta=0.19$. The 2x1 map parameters are: $w_{\o} = 2.4$ and $w_\x = 2.3$. Some of the trajectories appear only starting from the second step, as the initial tensor has these sectors identically zero (see Remark \ref{rem:selruleXY}).}
	\label{fig:XY}
\end{figure}

We then proceed like in the Ising model analysis, replacing $\hat b_{\rm Ising}^{(0)}(\bar\beta)$ with $\hat b_{\rm XY}^{(0)}(\bar\beta)$. We tweak the values of the reweighting parameters a bit to improve $\bar\beta$. We thus obtain:
\begin{proposition}\label{eq:XYbasin}
	\noindent (a) The set of tensors $\mathbb{O}(\hat b_{\rm XY}^{(0)}(\bar\beta))$, $\bar\beta=0.19$, is a basin of stability for the 2x1 RG map with the reweighting parameters $w_\x=2.3$, $w_{\o} = 2.4$.\\[5pt]
	\noindent (b) The free energy is analytic on a neighborhood of this basin of stability.\\[5pt]
	\noindent (c) $b_{\rm XY}^{(0)}(\beta)\in \mathbb{O}(\hat b_{\rm XY}^{(0)}(\bar\beta))$ for all $0\le \beta\le \bar\beta-\delta$, $\delta=5\times 10^{-5}$.
\end{proposition}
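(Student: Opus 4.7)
The plan is to mimic the three-step argument used for the Ising model in Proposition \ref{eq:Isingbasin}, since all the general machinery (the master function, the Key Lemma, and the free-energy analyticity statement) applies unchanged. For part (a), I would take $\widehat{b}^{(0)} = \widehat{b}_{\rm XY}^{(0)}(\bar\beta)$ as initial data, run the master-function iteration in interval arithmetic using the companion code with the stated reweighting parameters $w_\x = 2.3$, $w_\o = 2.4$, and certify numerically that after some $i_0$ iterations the Key Lemma condition \eqref{eq:start} $\widehat{b}^{(i_0+1)} \le \lambda\,\widehat{b}^{(i_0)}$ holds for some $\lambda<1$; Figure \ref{fig:XY} indicates this should set in after roughly ten to fifteen steps. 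Key Lemma \ref{lem:key} together with Proposition \ref{prop:summary-master} then yields that for every $b \in \mathbb{O}(\widehat{b}_{\rm XY}^{(0)}(\bar\beta))$ the 2x1 map is defined for all iterations and $b^{(i)} \to 0$ in the HS norm.

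Part (b) is a transcription of the regularization argument used in Proposition \ref{eq:Isingbasin}(b). By Remark \ref{rem:selruleXY}, $\widehat{b}_{\rm XY}^{(0)}(\bar\beta)$ has several vanishing components forced by the $U(1)$ selection rule, so $\mathbb{O}(\widehat{b}_{\rm XY}^{(0)}(\bar\beta))$ has empty interior. I would introduce $\widehat{b}_{\rm XY}^{(0)}(\bar\beta,\varepsilon)$ by adding $\varepsilon>0$ to every non-$\o\o\o\o$ component and observe that for $\varepsilon$ small enough the first $i_0$ master-function iterates remain within an arbitrarily small neighborhood of those obtained in part (a), so that condition \eqref{eq:start} still holds. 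Proposition \ref{prop:free} then provides analyticity of the free energy on the interior of this larger basin, which contains $\mathbb{O}(\widehat{b}_{\rm XY}^{(0)}(\bar\beta))$.

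Part (c) is where the XY model genuinely departs from the Ising argument, and it is the step I expect to be the main obstacle. Unlike the Ising case, the components of $\widehat{b}_{\rm XY}^{(0)}(\beta)$ produced by the tail-bounding procedure of Appendix \ref{sec:hat-bbeta-for-the-xy-model} need not be monotonic in $\beta$: dividing by $A_{0000}(\beta)$ and adding Bessel-tail remainders can destroy strict monotonicity. My plan is to verify the componentwise inequality $\widehat{b}_{\rm XY}^{(0)}(\beta) \le \widehat{b}_{\rm XY}^{(0)}(\bar\beta)$ for all $\beta \in [0,\bar\beta-\delta]$ using an interval-arithmetic enclosure of $\widehat{b}_{\rm XY}^{(0)}(\cdot)$ on $[0,\bar\beta-\delta]$ treated as a single interval input, where each component, being an analytic function of $\beta$ with Bessel-function asymptotics, admits a computable uniform derivative bound $M$ on $[0,\bar\beta]$. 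The safety margin $\delta=5\times 10^{-5}$ then absorbs both the interval enclosure widths and the Lipschitz slack $M\delta$; equivalently one can cover $[0,\bar\beta-\delta]$ by a grid of width of order $\delta$ and check each grid point against $\widehat{b}_{\rm XY}^{(0)}(\bar\beta)$ with an $M\delta$ buffer. The technical crux is thus to quantify the $\beta$-dependence of the Bessel-tail estimates from Appendix \ref{sec:hat-bbeta-for-the-xy-model} tightly enough that $M$ is not so large as to force $\delta$ above the verified range; once this is done, combining with part (a) completes the proof.
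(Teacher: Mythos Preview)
Your proposal for parts (a) and (b) is essentially identical to the paper's proof: the paper verifies the Key Lemma hypothesis at $i_0=15$ with $\lambda=0.86$ via interval-arithmetic iteration of the master function, and handles (b) by the same $\varepsilon$-regularization you describe.

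For part (c) your approach is correct but takes a slightly more laborious route than the paper. You propose either a single interval-arithmetic enclosure over $[0,\bar\beta-\delta]$ or a grid combined with a uniform derivative bound $M$ on the components of $\widehat b_{\rm XY}^{(0)}$. The paper instead exploits a structural observation you do not mention: while the \emph{ratio} $\widehat b_{abcd}(\beta)=\|A_{abcd}(\beta)\|/A_{0000}(\beta)$ is not obviously monotone, each raw tensor element $A_{n_1n_2n_3n_4}(\beta)$ in \eqref{eq:XYdiscrete2} \emph{is} monotone increasing in $\beta$ (it is a sum of products of modified Bessel functions, all positive and increasing for $\beta>0$). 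Hence on any subinterval $I_i=[\beta_i,\beta_{i+1}]$ one obtains a valid hat-tensor for every $\beta\in I_i$ by evaluating the numerator $\|A_{abcd}\|$ at $\beta_{i+1}$ and the denominator $A_{0000}$ at $\beta_i$; interval arithmetic does this automatically when fed the interval $I_i$. The paper then covers $[0,\bar\beta-\delta]$ by finitely many such subintervals and checks $\widehat b_i^{(0)}\le \widehat b^{(0)}(\bar\beta)$ for each. This sidesteps any need to estimate derivatives of Bessel-tail remainders, which is the ``technical crux'' you anticipated; the monotonicity of the raw tensor elements makes that crux disappear.
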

\begin{proof} For (a),(b) we proceed as in Proposition \ref{eq:Isingbasin}(a),(b) for the Ising model. \Cref{fig:XY} shows the corresponding master function iterates. The hypothesis of \cref{lem:key} is satisfied for $i_0=15$ with $\lambda=0.86$ \cite{our-code}.

Part (c) is different from Ising. There, it was easy to see analytically that $\hat b(\beta)$, and hence $\hat b^{(0)}(\beta)$, were monotonic in $\beta$, so Proposition \ref{eq:Isingbasin}(c) was trivial with $\delta=0$. For XY, we don't have an easy proof of this monotonicity, due to more complicated expressions of tensor elements. (Numerically we do see monotonicity.) So we prove (c) by a computer-assisted argument \cite{our-code}. We split the  interval $[0,\bar\beta-\delta]$ into finitely many intervals $I_i=[\beta_i,\beta_{i+1}]$. For each interval $I_i$ we compute a tensor $\hat b_i^{(0)}$ which is a hat tensor for all $b^{(0)}(\beta)$, $\beta\in I_i$. This is done by running the procedure described in App.~\ref{sec:hat-bbeta-for-the-xy-model} in interval arithmetic.\footnote{To compute elements of the hat tensor $\hat b(\beta)$ we need to evaluate the norm of tensor elements in each sector, and normalize dividing by $A_{0000}$. Since all tensor elements are monotonic, Eq.~\eqref{eq:XYdiscrete2}, we obtain a valid hat tensor for all $\beta\in I_i$ evaluating the norm at $\beta=\beta_{i+1}$ and dividing by $A_{0000}$ at $\beta=\beta_{i}$. Our interval arithmetic code implements this logic automatically.} We then verify that $\hat b_i^{(0)}\le \hat b^{(0)}(\bar\beta)$ holds for all $i$.
\end{proof}

So we have proven that the XY model for $\beta \leq 0.18995$ lies in the high-T phase. The critical temperature for the XY model, where it transitions to a phase with algebraically decaying correlations (the BKT transition), is not exactly known. Various numerical techniques indicate that $\beta^{\text{XY}}_c \approx 1.12$ (see e.g.~\cite[Table 1]{Jha_2020} or \cite[Table I]{UedaBKT}), so our method captures approximately $17\%$ of the high-T phase in this case.
Note that the XY model is more disordered than the Ising model at the same $\beta$, in the sense that the spins have more room to fluctuate, and so it's natural that $\beta_c^{\rm XY}>\beta_c^{\rm Ising}$. It's reassuring that our method is also able to see this, as it gives a larger $\bar\beta$ for the XY than for the Ising. (There is also a rigorous bound $\beta_c^{\text{XY}} \geq 2\beta_c^{\text{Ising}}$ \cite{AIZENMAN1980281}.)

\section{Conclusions}\label{sec:conclusions}

We have introduced a new tensor RG map, but most importantly a new framework to do controlled renormalization group using tensor networks. The main characteristic of our RG map is that it comes equipped with a bounding box and a computable "master function" which shows how this bounding box varies from one RG step to the next. The RG flow happens in an infinite-dimensional space of tensors, but thanks to the bounding box we can control the RG flow via a finite calculation. Finding such a computer-assisted RG framework for lattice models has always been a dream, but it is perhaps for the first time that it is realized. This should allow myriad applications, once some generalizations described below are implemented.

In this paper we considered the two simplest concrete applications of this framework: proving that a lattice model is in the high-T phase and computing the model's free energy within the proven high-T domain. Of course there exist other means to do this. Perhaps the most standard tool is the cluster expansion, a.k.a.~the polymer expansion \cite{friedli_velenik_2017}. Other methods include \cite{DobrushinShlosman1985,Dobrushin1985,DobrushinShlosman1987} and \cite{Kennedy1990}. In this work we saw that in its simplest form presented here, our method goes somewhat further than simple applications of the cluster expansion for general models, while for specific models like the Ising model the optimized cluster expansion performs somewhat better.

{ For the future, another application within reach of our method is the analysis of the correlation functions on the lattice.
	This was discussed in Ref.~\cite{paper1} with the help of the cluster expansion, but it can be done by purely RG techniques, as follows. In the tensor network language, an $n$-point correlation function can be represented as a tensor network built of the tensor $A$ with $n$ tensors replaced by some ``defect'' tensors $B_1,\ldots ,B_n$ representing insertions of local operators. Then, one may apply the 2x1 map to such a network and track the flow of $B_i$'s in addition to that of $A$. The locality of the map guarantees that the $A$ flow remains unaffected by the presence of the defects, so $A$ will still approach the high-T fixed point tensor, which should also make the $B$ flow manageable for analysis. One should thus be able to recover the expected result that the connected correlation functions decay exponentially with the distance. }


Let us discuss the possible ways to improve the domain of applicability of our approach. There are a few improvements that should be relatively straightforward to implement. When there was a contraction of a left tensor and a right tensor, we introduced reweighting factors $w_\x$ and $w_\o$ using the trivial identity $w \frac{1}{w}=1$. A generalization is to introduce reweighting {\it matrices} $w=w_{AB}$ by inserting in the contraction the identity matrix in the form $w w^{-1}$ or $w^{-1} w$. Indices $A,B$ of these matrices will be of the form $A,B=abc$ where $a,b,c$ are the sectors cut through by the red lines in Fig.~\ref{fig:LR} and Eq.~\eqref{eq:concl2}. One can then optimize over invertible matrices $w$.

Our disentangler was chosen to cancel a certain class of diagrams to first order. Some such cancellation is essential to make the linearized map be a contraction, but the disentangler also produces terms that are not involved in this cancellation which can degrade our bounds. Using a less aggressive disentangler that does not fully cancel the selected diagrams could result in an improved domain of applicability.

We kept the parameters $w_\x$ and $w_\o$ constant along the RG trajectory. One can hope to improve the domain of applicability by allowing these parameters to vary for some finite number of initial steps of the RG flow. More generally, one can try varying other aspects of the RG map, e.g., the disentangler, for a finite number of RG steps.

Let us discuss now a more ambitious idea which should lead to a dramatic improvement of our method.
In following our tensor along the RG trajectory, we wrote the tensor as $A_* + b$, and the only information we kept track of for $b$ was the hat tensor $\hat{b}$. The approach is to express the tensor in the form $A=A_* + b_f + b$ where $b_f$ is a finite dimensional tensor and $b$ is infinite dimensional. We express the image of $A$ under the RG map in the form  $A_* + b_f' + b'$ where $b_f'$ is computed explicitly using a finite dimensional map that approximates the RG map. The infinite dimensional tensor $b'$ is then determined by requiring that $A_* + b_f' + b'$ is the image of $A_* + b_f + b$. The tensor $b$ is controlled by a hat tensor. The master function for that hat tensor is now a function of $b_f,b$. Since $b$ contributes to the renormalization of $b_f$, tensor $b_f'$ should be controlled by a set of intervals for its elements. After a finite number of iterations one can hope that $b_f$ will be small enough that we can switch to the map studied in this paper.

{ Our construction here was linked to a square lattice in 2D. Can it be extended to other 2D lattices or to 3D?
	Let us discuss these questions in turn.

	First, even if the original lattice spin model is defined say on 2D triangular or hexagonal lattice, we can rewrite it as a tensor network model on the square lattice by deforming the original lattice.  Then our results based on the 2x1 map apply, although this sacrifices the original lattice symmetry and may lead to suboptimal results. A different question is about RG maps adapted to the triangular or hexagonal lattice. For such numerical algorithms see \cite{Levin:2006jai, GILT, LoopTNR}. As for such rigorous RG maps, the method of disentangler maps on $V\otimes V$ used here does not seem to apply. An alternative idea which should be applicable is that of rearrange disentanglers \cite{Kennedy2022,Ebel:2024jbz}.

	As for 3D, the only current rigorous result is \cite{Ebel:2024jbz}, where the stability of the 3D high-T fixed point was established using rearrange disentanglers \cite{Kennedy2022}. Using hat-tensors, the algorithm of \cite{Ebel:2024jbz} could be used to obtain rigorous quantitative results about the size of the 3D high-T phase. One may also try to adapt the 2x1 scheme developed here to 3D, contracting the network in one direction at a time and rotating it appropriately. It remains to be seen if this can be done.\footnote{One suggestion would be to use the analogues of $\u$ and $\d$ sectors (one would need $4$ sectors in 3D) that would allow one to distinguish a class of special sectors that would hopefully be treatable by disentanglers.} 
}

Our long range goal is to use ideas from this paper to prove the existence of a critical fixed point tensor for some tensor RG map and control the map in a neighborhood of this fixed point. Here is a sketch of how such a proof might proceed. We express the tensor in the form $A=A_0 + b_f + b$ where $A_0$ and $b_f$ are finite dimensional tensors and $b$ is infinite dimensional. $A_0$ is an approximation to the critical fixed point. As above, the tensor $b_f$ is computed explicitly using a finite dimensional map that approximates the RG map, and $b'$ is determined by requiring that the RG map takes $A_0+b_f+b$ to $A_0+b_f^\prime+b^\prime$. ($A_0$ is kept fixed.) The tensor $b$ is controlled by a hat tensor. Our experience with existing numerical RG maps make us optimistic that one can get an approximate fixed point $A_0$ that is easily within $0.02$ of the exact fixed point. We would then prove that there exist $b_f,b$ such that $A_0+b_f+b$ is an exact fixed point. Most existing tensor network RG maps are not suitable for this program, in particular most use SVD which is not compatible with our HS norms. But we are hopeful that we can build a good RG map by modifying our 2x1 map.

\acknowledgments

SR and NE were supported in part by the Simons Foundation grant 733758 (Simons Bootstrap Collaboration). SR thanks the International Solvay Institute for Physics (Brussels) for hospitality while this work was being finalized. SR thanks the theoretical physicists of Belgium (as well as the architects, the painters, and the brewers) for the stimulating atmosphere. NE thanks Christophe Garban for the lecture series on BKT phase transitions that inspired the study of the XY model example for this paper.

\appendix

\section{Accompanying code}\label{sec:accompanying-code}

All our main results were obtained with computer assistance. We provide the {\tt Julia}~\cite{Julia-2017,Julia} code used in our analysis in the notebook {\tt HT.ipynb}, which is available in our GitHub repository~\cite{our-code}. The main part of the code, defining the master function, comprises only about 300 lines of code distributed across notebook cells 3--5. We have thoroughly documented and commented the code, and we hope this will facilitate its understanding.

Readers who wish to check whether their favorite lattice model lies in the high-temperature phase can follow the instructions in the {\tt README} file. A proof of such result requires only a few lines of code (see, for example, the first code cell in Section "General tensors," which proves \cref{th:stability}).

Several packages made our results possible. \cref{eq:lead} relies on {\tt TensorOperations.jl} \cite{TensorOperations.jl}, {\tt Symbolics.jl} \cite{gowda2022high}, and {\tt TaylorSeries.jl} \cite{benet_2025_15122713}. The proofs presented in \cref{sec:results} depend on {\tt TensorOperations.jl} \cite{TensorOperations.jl}, {\tt ArbNumerics.jl} \cite{ArbNumerics.jl}, {\tt SpecialFunctions.jl} \cite{SpecialFunctions.jl}, and {\tt HCubature.jl} \cite{HCubature.jl, Genz1980}. We use {\tt CairoMakie.jl}~\cite{DanischKrumbiegel2021} for plotting.

\section{$\hat b(\beta)$ for the XY model}\label{sec:hat-bbeta-for-the-xy-model}

In this appendix we explain how we get a rigorous hat-tensor $\hat b(\beta)$ for the XY tensor $b(\beta)$ defined by Eq.~\eqref{eq:bbeta}. We first present a series of lemmas containing estimates for the series defining $A_{n_1 n_2 n_3 n_4}$, and for the tails of the HS norm of $A$. We will then explain how we get $\hat b(\beta)$ using those lemmas.

Recall some useful facts about the Bessel functions. We have $I_n(\beta) = I_{|n|}(\beta)$. The standard power series representation of $I_n(\beta)$ implies that  $I_n(\beta)>0$ for $\beta>0$. For $n\ge 0$, it also implies a bound:
\beq
\label{eq:Iexp}
I_{n+1}(\beta)\le \frac{\beta/2}{n+1} I_n(\beta)\,.
\eeq
Using this several times, we get, for $n\ge 0$,
\beq
\label{eq:Iexp1}
I_{n}(\beta)\le \frac{(\beta/2)^{n}}{n!} I_0(\beta)\,.
\eeq
We will sometimes write $I_n$ for $I_n(\beta)$. In our estimates we will assume $\beta < 2$.
\begin{lemma}\label{lem:singleA}
	Suppose $k_-\le  0$, $k_+\ge 0$ are such that the four integers $k,k+n_1,k+n_1+n_2,k+n_1+n_2+n_3$ are all $\ge0$ for $k=k_+$ and all $\le 0$ for $k=k_-$. Denote $W_k= I_k I_{k+n_1}I_{k+n_1+n_2} I_{k+n_1+n_2+n_3}$. Then, for $\beta<2$,
	\beq
	\label{eq:Abnd}
	\sum_{k_-}^{k_+} W_k\le A_{n_1n_2n_3n_4}\le \sum_{k_-}^{k_+} W_k+ \frac{(\beta/2)^4}{1-(\beta/2)^4}(W_{k_+}+W_{k_-}).\codetag
	\eeq
	\end{lemma}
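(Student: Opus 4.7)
The lower bound is immediate: since all Bessel functions $I_n(\beta)$ are positive for $\beta>0$, every term $W_k$ in the infinite sum defining $A_{n_1 n_2 n_3 n_4}$ is nonnegative, so any partial sum is a lower bound. The entire content of the lemma is therefore the upper bound on the two tails $\sum_{k>k_+} W_k$ and $\sum_{k<k_-} W_k$.

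My plan is to estimate these tails by comparison with a geometric series of ratio $(\beta/2)^4$, using \eqref{eq:Iexp}. For the upper tail, the hypothesis tells us that all four shifted indices $k+s_i$ (where $s_1=0$, $s_2=n_1$, $s_3=n_1+n_2$, $s_4=n_1+n_2+n_3$) are $\ge 0$ at $k=k_+$, hence are $\ge j$ at $k=k_++j$ for any $j\ge 1$. Iterating \eqref{eq:Iexp} and noting that the denominators $(n+1)(n+2)\cdots(n+j)\ge 1$ for $n\ge 0$, one gets the clean bound $I_{n+j}(\beta)\le (\beta/2)^j I_n(\beta)$ valid for $n\ge 0$, $j\ge 1$. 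Applying this to each of the four factors of $W_{k_++j}$ yields $W_{k_++j}\le (\beta/2)^{4j} W_{k_+}$, so summing the geometric series (convergent since $\beta<2$) gives the contribution $\frac{(\beta/2)^4}{1-(\beta/2)^4} W_{k_+}$ to the upper bound.

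The lower tail is handled symmetrically using $I_n(\beta)=I_{|n|}(\beta)$. The key observation is that at $k=k_--j$ the shifted indices satisfy $k_--j+s_i\le -j<0$, and one checks by induction on $j$ that $|k_--j+s_i|=|k_-+s_i|+j$. The same iterated application of \eqref{eq:Iexp} then gives $I_{k_--j+s_i}(\beta)\le (\beta/2)^j I_{k_-+s_i}(\beta)$ for each factor, hence $W_{k_--j}\le (\beta/2)^{4j} W_{k_-}$, and summing yields the symmetric contribution $\frac{(\beta/2)^4}{1-(\beta/2)^4} W_{k_-}$. Adding the two tail bounds to $\sum_{k_-}^{k_+} W_k$ produces \eqref{eq:Abnd}.

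There is no real obstacle here; the only mildly delicate step is the index-counting showing that at $k=k_++j$ (resp.\ $k=k_--j$) all four Bessel indices have absolute value at least $j$ greater than the corresponding absolute values at $k=k_+$ (resp.\ $k=k_-$). This is what makes the hypothesis on $k_\pm$ exactly the right one so that the single inequality \eqref{eq:Iexp} can be iterated without worrying about the region where the index crosses zero.
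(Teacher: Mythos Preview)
Your proof is correct and follows exactly the same approach as the paper's own proof, which is stated tersely as ``terms for $k>k_+$ can be estimated using \eqref{eq:Iexp} by $(\beta/2)^{4(k-k_+)}$ times $W_{k_+}$, analogously for terms with $k<k_-$.'' You have simply spelled out the index-tracking details that justify why the hypothesis on $k_\pm$ allows \eqref{eq:Iexp} to be iterated cleanly on all four factors.
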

 \begin{proof} The l.h.s.~is obvious since all terms in the sum \eqref{eq:XYdiscrete2} defining $A_{n_1n_2n_3n_4}$ are positive. For the r.h.s., terms for $k> k_+$ can be estimated using \eqref{eq:Iexp} by $(\beta/2)^{4(k-k_+)}$ times $W_{k_+}$. Analogously for terms with $k<k_-$.
\end{proof}

 \begin{lemma}\label{lem:Abound2}
 	 Denote $n=\max(|n_1|,|n_2|,|n_3|,|n_4|)$. Then, for $\beta<2$,
 	\beq
 	A_{n_1n_2n_3n_4} \le \frac{(I_0(\beta))^4}{1-(\beta/2)^4} \frac{4n+1}{\Gamma(n/2+1)^2} (\beta/2)^{n}\,.
 	\eeq
 \end{lemma}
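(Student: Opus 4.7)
}

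The plan is to apply Lemma \ref{lem:singleA} with a specific choice of $k_\pm$, then bound every $W_k$ in the finite sum (and the two boundary terms) by the same $k$-independent quantity. Write $m_0 = 0$ and $m_j = n_1 + \cdots + n_j$ for $j=1,2,3$, so that $W_k = \prod_{j=0}^{3} I_{k+m_j}(\beta)$. I will take
\[
k_+ = \max(0,-m_1,-m_2,-m_3), \qquad k_- = \min(0,-m_1,-m_2,-m_3).
\]
For these choices all four indices $k+m_j$ are $\geq 0$ at $k=k_+$ and $\leq 0$ at $k=k_-$, as required. The tensor element vanishes unless $n_1+n_2+n_3+n_4=0$, so we may freely assume this selection rule; then $|m_1|,|m_3| \leq n$ and $|m_2| \leq 2n$, giving $k_+ - k_- \leq 4n$ and hence at most $4n+1$ values of $k$ in the finite sum.

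Next I would establish the uniform bound
\[
W_k \leq I_0(\beta)^4 \, \frac{(\beta/2)^n}{\Gamma(n/2+1)^2} \quad \text{for every } k \in \mathbb{Z}.
\]
The idea is the following ``gap'' argument: among the four cyclic differences $m_j - m_{j-1}$ (using $m_4:=0$), namely $n_1,n_2,n_3,n_4$, at least one has absolute value equal to $n$. Pick such an index $j_0$. Then by the triangle inequality
\[
|k+m_{j_0}| + |k+m_{j_0-1}| \geq |m_{j_0} - m_{j_0-1}| = n.
\]
For the two Bessel factors at $j_0$ and $j_0-1$, apply \eqref{eq:Iexp1}, writing each as at most $I_0 (\beta/2)^{|\cdot|}/|\cdot|!$. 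For the remaining two factors just use the monotonicity bound $I_n(\beta) \leq I_0(\beta)$ (which follows from \eqref{eq:Iexp} and $\beta<2$). This gives
\[
W_k \leq I_0^4 \, \frac{(\beta/2)^{|k+m_{j_0}|+|k+m_{j_0-1}|}}{|k+m_{j_0}|!\,|k+m_{j_0-1}|!}.
\]
Since $\beta/2 < 1$ and the exponent is $\geq n$, the numerator is $\leq (\beta/2)^n$. For the denominator, log-convexity of $\Gamma$ gives $a!\,b! \geq \Gamma((a+b)/2+1)^2$, and (for $n\geq 1$, so $(a+b)/2+1 \geq 3/2$ is in the range where $\Gamma$ is increasing) the right side is $\geq \Gamma(n/2+1)^2$. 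This yields the claimed uniform bound on $W_k$.

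Finally, plugging into Lemma \ref{lem:singleA}:
\[
A_{n_1 n_2 n_3 n_4} \leq \left[(4n+1) + \frac{2(\beta/2)^4}{1-(\beta/2)^4}\right] \frac{I_0^4 (\beta/2)^n}{\Gamma(n/2+1)^2}.
\]
A one-line calculation checks that for $n \geq 1$ the bracket is at most $(4n+1)/(1-(\beta/2)^4)$, since this reduces to $2 \leq 4n+1$. The case $n=0$, i.e.\ $n_1=n_2=n_3=n_4=0$, does not need to be treated here because $b(\beta)_{0000}=0$ by the normalization $A=A_{0000}(A_*+b(\beta))$, so Lemma \ref{lem:Abound2} is only applied with $n\geq 1$ in App.~\ref{sec:hat-bbeta-for-the-xy-model}.

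The main subtlety I expect is the combinatorics of the ``gap'' argument together with the convexity bound for $\Gamma$: one needs to be slightly careful that the chosen pair of indices genuinely produces a sum of absolute values $\geq n$ for every $k \in \mathbb{Z}$ (not just for $k$ in $[k_-,k_+]$), and that the log-convexity inequality applies in the regime $n\geq 1$ where $\Gamma$ is monotone. The counting step $k_+ - k_- + 1 \leq 4n+1$ also crucially uses the selection rule, which is why the cyclic closure $m_4=0$ is what enters the gap argument.
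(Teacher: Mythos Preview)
Your proof is correct and follows essentially the same route as the paper: apply Lemma~\ref{lem:singleA}, bound each $W_k$ uniformly using two adjacent Bessel factors whose index gap equals $n$ together with log-convexity of $\Gamma$, then count at most $4n+1$ terms and absorb the tail into $1/(1-(\beta/2)^4)$. The paper takes the simpler choice $k_\pm=\pm 2n$ and is terser about the log-convexity step and the final inequality; your explicit remark that the bound is only used (and indeed only holds) for $n\geq 1$ is more careful than the paper, which glosses over this point.
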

  \begin{proof} We pick $k_\pm=\pm 2n$. These $k_\pm$ satisfy the requirements of the previous lemma (note that $n_1+n_2+n_3=-n_4$ by charge conservation). So we can use the upper bound from that lemma. Consider any $W_k$ for $|k|\le 2n$. Suppose for definiteness that $n_1$ is one of the maximal indices, i.e.~$n=|n_1|$ (other cases are similar). Other indices $n_2,n_3,n_4$ may be anywhere in the range $[-n,n]$. Suppose also for definiteness that $n_1\ge 0$. Then we estimate
  	\beq
  	W_k\le I_k I_{k+n} I_0^2\le \frac{(\beta/2)^{|k|+|k+n|}}{|k|!|k+n|!} I_0^4,
  	\eeq
  	where in the second inequality we used \eqref{eq:Iexp1}. The maximum of the r.h.s. is realized for $k=-n/2$. Estimating all terms in $k\in[-2n,2n]$ by the maximum, and adding the last term in \eqref{eq:Abnd}, we get the stated result.
\end{proof}

\begin{lemma}\label{lem:counting} For a positive integer $n$, define the set $T_n$ of triples of integers $(x,y,z)$ which are all not larger than $n$ in absolute value and whose sum equals $n$:
	\beq
	T_n=\{ (x,y,z)\in \mathbb{Z}^3 : x,y,z\in[-n,n], x+y+z=n \}
	\eeq
	Then $|T_n|=2n^2+3n+1$.
\end{lemma}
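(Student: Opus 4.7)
The plan is to parametrize $T_n$ by one of the coordinates, say $z$, and for each fixed $z \in [-n,n]$ count the number of admissible pairs $(x,y)$ with $x+y = n-z$ and $x,y \in [-n,n]$. Since this is a purely combinatorial identity with no analytic subtleties, I expect the calculation to be entirely elementary; the only ``main obstacle'' is bookkeeping with the boundary cases in the intersection of intervals.

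Concretely, fix $z \in [-n,n]$ and set $s := n - z$, so $s$ ranges over $[0, 2n]$. I need to count $x \in \mathbb{Z}$ such that $x \in [-n,n]$ and $y := s - x \in [-n,n]$, i.e. $x \in [s-n, s+n]$. Intersecting $[-n,n]$ with $[s-n, s+n]$: the fact that $s \geq 0$ gives $s - n \geq -n$ (so the lower bound is $s-n$) and $s + n \geq n$ (so the upper bound is $n$); the fact that $s \leq 2n$ ensures $s - n \leq n$, keeping the interval nonempty. Therefore the count of valid $x$ is $n - (s-n) + 1 = 2n - s + 1$.

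Summing over $z$ and using $2n - s + 1 = n + z + 1$:
\begin{equation}
|T_n| \;=\; \sum_{z=-n}^{n}(n + z + 1) \;=\; \sum_{k=1}^{2n+1} k \;=\; \frac{(2n+1)(2n+2)}{2} \;=\; (2n+1)(n+1) \;=\; 2n^2 + 3n + 1,
\end{equation}
where in the second equality I substituted $k = n + z + 1$. This yields the claimed formula. If desired, one could alternatively extract the coefficient of $q^n$ in the generating function $\bigl(\sum_{j=-n}^{n} q^j\bigr)^3 = q^{-3n}\bigl(\frac{1-q^{2n+1}}{1-q}\bigr)^3$, but the direct count above is the shortest route.
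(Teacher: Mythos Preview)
Your proof is correct; the paper itself omits the proof as an elementary counting exercise, and your direct parametrization by $z$ together with the interval intersection is exactly the kind of one-line computation the authors had in mind.
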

This is an elementary counting exercise. We omit the proof.

 		\begin{lemma}\label{lem:tailnorm} Suppose $N\ge 3$. Then, for $\beta<2$,
 			\beq
 			\sum_{\max(|n_1|,|n_2|,|n_3|,|n_4|)\ge N} |A_{n_1n_2n_3n_4}|^2 \le 4 K_N \left[\frac{(I_0(\beta))^4}{1-(\beta/2)^4}\right]^2 \frac{(\beta/2)^{2N}}{1-(\beta/2)^2},\codetag
 			\eeq
 			where $K_n=2|T_n|\left(\frac{4n+1}{\Gamma(n/2+1)^2}\right)^2$.
 		\end{lemma}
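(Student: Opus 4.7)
The plan is to partition the sum by the level $n := \max(|n_1|,|n_2|,|n_3|,|n_4|) \ge N$, bound the number of contributing tuples at each level using the $U(1)$ selection rule $n_1+n_2+n_3+n_4 = 0$ together with \cref{lem:counting}, bound each summand via \cref{lem:Abound2}, and finally close the geometric series in $(\beta/2)^{2n}$.

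For a fixed $n\ge N$, any contributing tuple has at least one index $n_i$ with $|n_i|=n$, giving at most $4\cdot 2=8$ choices for the pair (position $i$, sign of $n_i$). Once $n_i=\pm n$ is fixed, charge conservation forces the remaining three indices to lie in $[-n,n]$ and sum to $\mp n$, so by \cref{lem:counting} there are at most $|T_n|$ such triples. Thus the count of contributing tuples at level $n$ is at most $8|T_n|$ (with harmless overcounting when several indices simultaneously achieve the maximum). Substituting the pointwise bound of \cref{lem:Abound2} gives
\begin{equation*}
\sum_{\max=n}|A_{n_1n_2n_3n_4}|^2 \;\le\; 8|T_n|\Bigl[\tfrac{(I_0(\beta))^4}{1-(\beta/2)^4}\Bigr]^2\Bigl(\tfrac{4n+1}{\Gamma(n/2+1)^2}\Bigr)^2(\beta/2)^{2n} \;=\; 4K_n\Bigl[\tfrac{(I_0(\beta))^4}{1-(\beta/2)^4}\Bigr]^2(\beta/2)^{2n},
\end{equation*}
by the definition of $K_n$.

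To sum over $n \ge N$ and get the stated bound with the single factor $4K_N$ in front, the remaining step is to prove that $K_n$ is non-increasing in $n$ for $n\ge N\ge 3$. Granted this, the tail sum collapses to $K_N\sum_{n\ge N}(\beta/2)^{2n} = K_N(\beta/2)^{2N}/(1-(\beta/2)^2)$, which is exactly what the lemma asserts. Monotonicity would be checked by examining
\begin{equation*}
\frac{K_{n+1}}{K_n} = \frac{|T_{n+1}|}{|T_n|}\,\Bigl(\tfrac{4n+5}{4n+1}\Bigr)^2\,\Bigl(\tfrac{\Gamma(n/2+1)}{\Gamma((n+1)/2+1)}\Bigr)^4,
\end{equation*}
where the first two factors tend to $1$ with small polynomial corrections, while the Gamma ratio decays like $n^{-2}$ by Stirling (since $\Gamma(m+1)/\Gamma(m+3/2)=O(m^{-1/2})$). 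Since $4/\beta^2>1$, the ratio $K_{n+1}/K_n$ is below $1$ for all sufficiently large $n$.

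The main obstacle I anticipate is making this monotonicity fully rigorous in the small-$n$ regime rather than the asymptotic one: the polynomial factors $|T_{n+1}|/|T_n|$ and $((4n+5)/(4n+1))^2$ are both greater than $1$, so the inequality $K_{n+1}\le K_n$ relies on the Gamma factor overpowering them right from $n=3$ onward. The cleanest route is to verify $K_{n+1}\le K_n$ by direct numerical (interval-arithmetic) evaluation up to some threshold such as $n\le 10$, and then for $n\ge 10$ use an elementary closed-form bound on the Gamma ratio (e.g.\ $\Gamma(m+1)/\Gamma(m+3/2)\le 1/\sqrt{m+1}$) combined with crude polynomial bounds on the other two factors to close the induction.
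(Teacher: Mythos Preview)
Your proof is correct and follows essentially the same route as the paper: split by which index achieves the maximum absolute value (four positions times two signs, giving the factor $8|T_n|=4K_n/[(4n+1)/\Gamma(n/2+1)^2]^2$), apply the pointwise bound of \cref{lem:Abound2}, invoke monotonicity of $K_n$ for $n\ge 3$, and sum the geometric series. The paper simply asserts that $K_n$ is monotonically decreasing for $n\ge 3$ ``as it is easy to check,'' whereas you spell out a verification strategy; your caution is fine, and a direct computation such as your $n=3$ check (and a couple more small-$n$ values) together with the Gamma-ratio asymptotics indeed closes it. One minor slip: your sentence ``Since $4/\beta^2>1$, the ratio $K_{n+1}/K_n$ is below $1$'' is a non sequitur, as $K_n$ does not depend on $\beta$; the conclusion $K_{n+1}/K_n<1$ for large $n$ follows already from your Gamma-ratio estimate alone, so just delete the $4/\beta^2$ clause.
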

 		\begin{proof}
 			We split the sum into $4$ groups corresponding to which $|n_i|$ is largest (the groups overlap because several $|n_i|$ may be equal). Consider for definiteness the group where $|n_4|$ is largest. Many terms in the group are zero because they don't satisfy charge conservation. By Lemma \ref{lem:counting}, the number of nonzero terms with a fixed $|n_4|=n$ is $2|T_n|$  (the factor 2 accounts for two signs of $n_4$). Bounding each nonzero term by Lemma \ref{lem:Abound2}, the sum of terms in the group is bounded by
 			\beq
 			\left[\frac{(I_0(\beta))^4}{1-(\beta/2)^4}\right]^2 \sum_{n\ge N}  K_n (\beta/2)^{2n} \le  K_N\left[\frac{(I_0(\beta))^4}{1-(\beta/2)^4}\right]^2 \frac{(\beta/2)^{2N}}{1-(\beta/2)^2},
 			\eeq
 			where we used that $K_n$ is monotonically decreasing for $n\ge 3$ as it is easy to check. We then multiply this estimate by $4$ to account for $4$ groups.
 			 \end{proof}

The nonzero components of the minimal hat-tensor $\hat b=\hat b(\beta)$ are given by
\begin{align}
\hat b_{abcd}=(A_{0000})^{-1}\|A_{abcd}\|,\quad
abcd= \o\o\x\x,\o\x\o\x,\o\x\x\x,\x\x\x\x\,,
\label{eq:bXYsectors}
\end{align}
and rotations thereof (the tensor $\hat b$ is rotation-invariant). To get a hat-tensor we have to provide upper bounds for all quantities in the r.h.s. These bounds are obtained by Lemmas \ref{lem:singleA},\ref{lem:tailnorm}. Moreover we can get bounds close to the exact values by playing with the parameters $k_\pm,N$ in those lemmas. To make this clear let $\eps>0$ be an accuracy parameter. In practice maybe $\eps=10^{-16}$ the machine precision. The bounds described below all go to the exact values as $\eps\to0$.

A lower bound on $A_{0000}$ (which gives an upper bound on its inverse) is given by the l.h.s.~of \eqref{eq:Abnd} taking $k_\pm$ so large that the r.h.s.~and the l.h.s.~differ by less than $\eps$.

Let $A_{abcd}$ be any of the $A$ sectors in the r.h.s.~of \eqref{eq:bXYsectors}. We choose $N$ so large that the r.h.s.~of \eqref{lem:tailnorm} is smaller than $\eps$. We then bound $\|A_{abcd}\|$ by
\beq
\label{eq:AabcdXY}
\|A_{abcd}\|\le \Biggl(\sum_{\mathop{|n_1|,|n_2|,|n_3|,|n_4|<N}\limits^{\scriptstyle n_1\in a, n_2\in b,n_3\in c, n_4\in d}}|A_{n_1n_2n_3n_4}|^2+\eps\Biggr)^{1/2}\,.
\eeq
So we are reduced to providing upper bounds for each $A_{n_1n_2n_3n_4}$ in the r.h.s.~of \eqref{eq:AabcdXY}. Since $N$ is fixed this is a finite number of tensor elements to consider. We bound them by the r.h.s. of \eqref{eq:Abnd} taking $k_\pm$ as large as that lemma requires so that it applies, and so that the two sides of \eqref{eq:Abnd} differ by less than $\eps$. We use interval arithmetic. In particular the Bessel functions $I_n(\beta)$ are evaluated in interval arithmetic. This is how we get a rigorous hat-tensor $\hat b(\beta)$.

\section*{Data availability statement}

The data supporting this study were generated using the publicly available code \cite{our-code}

\section*{Conflict of Interest statement}

The authors declare that they have no competing interests.

%
\providecommand{\href}[2]{#2}\begingroup\raggedright\endgroup

\end{document}